\documentclass[journal,onecolumn]{IEEEtran}

\usepackage[utf8]{inputenc} 
\usepackage[T1]{fontenc}    
\usepackage{hyperref}       
\usepackage{url}            
\usepackage{booktabs}       
\usepackage{wrapfig}		
\usepackage{overpic}
\usepackage{amsfonts}       
\usepackage{nicefrac}       
\usepackage{float}			
\usepackage{microtype}      
\usepackage{multirow}
\usepackage{amsmath, amssymb, amsthm}
\usepackage[ruled,vlined,linesnumbered]{algorithm2e}
\usepackage{comment}
\usepackage{caption}
\usepackage{subcaption}
\usepackage{graphicx}
\usepackage{color}
\usepackage{cite}
\usepackage{bbm}
\usepackage{enumitem,kantlipsum}

\newcommand{\R}{\widetilde{R}}

\newcommand{\V}{\mathcal{V}}
\newcommand{\VG}{\mathcal{V}^{\Gamma}}
\newcommand{\MM}{\mathbf{M}}
\newcommand{\W}{\mathcal{W}}
\newcommand{\A}{\mathbf{A}}
\newcommand{\E}{\mathbb{E}}
\newcommand{\AH}{\mathbf{L}}
\newcommand{\AHGk}{\mathbf{L}^{(k)}_{\Gamma}}
\newcommand{\AHG}{\mathbf{L}_{\Gamma}}

\newcommand{\Qm}{Q_{\max}}

\newcommand{\HH}{\mathbf{H}}

\newcommand{\B}{\mathcal{B}}

\newcommand{\bb}{\mathbf{b}}
\newcommand{\Z}{\mathbf{Z}}

\newcommand{\z}{\mathbf{z}}

\newcommand{\DD}{\mathbb{D}}

\newcommand{\PP}{\mathbb{P}}

\newcommand{\p}{\mathbf{p}}

\newcommand{\m}{\mathbf{m}}
\newcommand{\M}{\mathcal{M}}
\newcommand{\EE}{\mathcal{E}}
\newcommand{\TT}{\mathcal{T}}
\newcommand{\T}{\mathbf{T}}

\newcommand{\hZv}{\widehat{Z}_{v}}

\newcommand{\hZZ}{\widehat{\mathbf{Z}}^{(0)}_{\sim v}}
\newcommand{\hzz}{\widehat{\mathbf{z}}^{(0)}_{\sim v}}
\newcommand{\zsv}{\mathbf{z}_{\sim v}}

\newcommand{\Ii}{\mathcal{I}^{\mathrm{in}}}
\newcommand{\Io}{\mathcal{I}^{\mathrm{out}}}
\newcommand{\U}{\mathcal{U}}

\newcommand{\G}{\mathcal{G}}

\DeclareMathOperator*{\argmax}{arg\,max}
\DeclareMathOperator*{\argmin}{arg\,min}

\newcommand{\mc}[1]{{\color{black}#1}}

\newtheorem{theorem}{Theorem}
\newtheorem{lemma}{Lemma}

\newtheorem{remark}{Remark}
\newtheorem{definition}{Definition}
\newtheorem{example}{Example}

%
\ifCLASSINFOpdf

\else

\fi

\hyphenation{op-tical net-works semi-conduc-tor}

\begin{document}
\title{Exact Recovery in the General Hypergraph Stochastic Block Model}

\author{
        Qiaosheng Zhang,
        Vincent~Y.~F.~Tan,~\IEEEmembership{Senior~Member,~IEEE}
\thanks{Qiaosheng Zhang is with Shanghai Artificial Intelligence Laboratory, Shanghai 200232, China (e-mail: zhangqiaosheng@pjlab.org.cn).} 
\thanks{Vincent~Y.~F.~Tan is with the Department of Mathematics, National University
of Singapore, Singapore 119076, and the Department of Electrical and
Computer Engineering, National University of Singapore, Singapore 117583,
Singapore (e-mail: vtan@nus.edu.sg).}
}


\maketitle

\IEEEpeerreviewmaketitle

\begin{abstract}
This paper investigates fundamental limits of exact recovery in the general $d$-uniform hypergraph stochastic block model ($d$-HSBM), wherein $n$ nodes are partitioned into $k$ disjoint communities with relative sizes $(p_1,\ldots, p_k)$. Each subset of nodes with cardinality $d$ is generated independently as an order-$d$  hyperedge with a certain probability that depends on the ground-truth communities that the $d$ nodes belong to. The goal is to exactly recover the $k$ hidden communities based on the observed hypergraph. We show that there exists a sharp threshold  such that exact recovery is achievable above the threshold and impossible below the threshold (apart from a small regime of parameters that will be specified  precisely).  This threshold is represented in terms of a quantity which we term as the generalized Chernoff-Hellinger divergence between communities. Our result for this general model recovers prior results for the standard SBM and $d$-HSBM with two symmetric communities as special cases. En route to proving our achievability results, we develop a  polynomial-time two-stage algorithm  that meets the threshold. The first stage adopts a certain hypergraph spectral clustering method to obtain a coarse estimate of communities, and the second stage refines each node individually via local refinement steps to ensure exact recovery. 

\end{abstract}

\begin{IEEEkeywords}
Community detection, hypergraph stochastic block model (HSBM), exact recovery, hypergraph spectral clustering methods.
\end{IEEEkeywords}

\flushbottom


\section{Introduction} 
 
The stochastic block model (SBM)~\cite{holland1983stochastic} is a celebrated random graph model that has been widely studied for the community detection problem, and the objective therein is to partition $n$ nodes into $k$ disjoint communities (a.k.a., clusters) based on the randomly generated graph. The recent \emph{award-winning}\footnote{The Information Theory Society Paper Award, 2020.} papers~\cite{abbe2015exact,mossel2015consistency} discovered a phase transition phenomenon for \emph{exact recovery} (i.e., all the nodes are required to be classified correctly) in the SBM with two symmetric communities. That is, there is a sharp threshold such that exact recovery is achievable above the threshold, and  impossible below the threshold. This  phase transition phenomenon was later extended to the \emph{general SBM} with $k\ge2$ communities and without imposing symmetric structures~\cite{abbe2015community, abbe2015recovering}. Popularized by these breakthroughs, community detection in the SBM and its variants have then received significant attention, and many works have  progressively contributed to this field by considering the information-theoretic limits of some variants of the SBM~\cite{hajek2017information,reeves2019geometry,jog2015information,yun2016optimal}, efficient algorithms with theoretical guarantees~\cite{yun2014accurate, chin2015stochastic, hajek2016achieving, hajek2016achieving2,montanari2016,caltagirone2017recovering, agarwal2017multisection,perry2017semidefinite}, the effect of side information~\cite{asadi2017compressing,saad2018community,saad2020recovering,saad2018exact,mayya2019mutual}, etc. We refer the readers to~\cite{abbe2017community} for a comprehensive survey.

While most prior works focused on community detection on graphs, it is also of keen interest to study community detection on \emph{hypergraphs}. This is because  \emph{higher-order} relational information among multiple nodes, which can naturally be captured by hypergraphs, is ubiquitous in many applications. For example, friendships between users in social networks can be captured by graphs, but chat groups are usually represented by hyperedges in hypergraphs. Authors in co-authorship networks can also be connected by hyperedges. There are also applications in computer vision (such as object recognition and image registration) that are concerned  with the \emph{point-set matching} problem~\cite{chertok2010efficient,liu2010robust}, which aims to find strongly connected components in a uniform hypergraph. Motivated by these applications, in recent years some efforts have been expended to advance our understanding of community detection on hypergraphs. In particular, Ghoshdastidar and Dukkipati~\cite{ghoshdastidar2014consistency} first proposed a random hypergraph model called the \emph{$d$-uniform hypergraph stochastic block model} ($d$-HSBM), in which each subset of nodes with cardinality $d$ is generated independently as an \emph{order-$d$ hyperedge} with a certain probability that depends on the communities that the $d$ nodes belong to. Subsequent researchers further investigated the recovery limits of  $d$-HSBMs by developing various hypergraph clustering algorithms (such as spectral clustering methods~\cite{ghoshdastidar2015provable,ghoshdastidar2015spectral,ghoshdastidar2017consistency,pal2019community,cole2020exact,ahn2017information,ahn2018hypergraph}, semidefinite programming-based methods~\cite{kim2018stochastic,lee2020robust,kim2017community}, tensor decomposition-based methods~\cite{ke2019community}, approximate-message passing algorithms~\cite{angelini2015spectral,lesieur2017statistical}, etc) with theoretical guarantees, and characterizing the minimax misclassification proportion~\cite{chien2019minimax,lin2017fundamental,chien2018community} as well as the exact recovery criterion for the special case of two symmetric communities~\cite{kim2018stochastic}.\footnote{While this work mainly focuses on the $d$-HSBM, we are also aware that the clustering problem has also been explored in other hypergraph models, such as the non-uniform HSBM~\cite{ghoshdastidar2015provable,ghoshdastidar2015spectral,ghoshdastidar2017consistency}, the generalized censored block model~\cite{ahn2019community}, the sub-hypergraph models~\cite{liang2021information,yuan2021information}, etc.}

Although an information-theoretic limit for exact recovery has been derived in~\cite{kim2018stochastic}, their setting considering only two equal-sized communities and symmetric hyperedge generation probabilities is rather restrictive for real-world applications, and  a general theory for exact recovery in the $d$-HSBM is still lacking. Motivated by this gap in the literature, in this work we consider the exact recovery criterion in the \emph{general} $d$-HSBM.  Our problem setting and the distinctions compared to other works are summarized as follows.
\begin{enumerate}
	\item Nodes  are partitioned into $k \ge 2$ non-overlapping communities, in which each node is assigned to one of these $k$ communities with probabilities $\{p_i\}_{i=1}^k$. This generalizes the setting in~\cite{kim2018stochastic,pal2019community} in which $k = 2$ and $p_1 = p_2$, and the setting in~\cite{cole2020exact, chien2019minimax,lin2017fundamental,chien2018community} wherein the $k$ communities are of equal or approximately equal sizes (i.e., $p_i \approx p_j$ for all $1\le i<j \le k$).    
	
	\item The probability that an order-$d$ hyperedge appears depends on the number of nodes in each of the $k$ communities (which is quantified by a length-$k$ vector $(T_1, T_2, \ldots, T_k)$ with $\sum_{i} T_i = d$, where $T_i$ is the number of nodes in community $i$). In contrast, many prior works with theoretical guarantees consider more restrictive assumptions on the probability that a hyperedge is present. For example, in~\cite{ahn2018hypergraph,pal2019community,cole2020exact,kim2018stochastic,lee2020robust}  the hyperedge probability can only take \emph{two} values depending on whether all the $d$ nodes belong to the same community. Although other works such as~\cite{chien2019minimax,lin2017fundamental,chien2018community} relax the restrictions in~\cite{ahn2018hypergraph,pal2019community,cole2020exact,kim2018stochastic,lee2020robust}, they are nevertheless particularizations of our general HSBM model. We are aware that the assumption on hyperedge probabilities made in~\cite{ghoshdastidar2014consistency,ghoshdastidar2015provable,ghoshdastidar2015spectral,ghoshdastidar2017consistency} is similar to ours, but their focus is to characterize the performance of a hypergraph spectral clustering method contained therein in terms of the fraction of misclassified nodes, whereas we aim to quantify necessary and sufficient conditions for {\em exact recovery} (see Definition~\ref{def:exact} for details).
	
	\item Based on the observed $d$-uniform hypergraph, the learner is tasked to achieve  exact recovery of the hidden partition, i.e., all the $n$  nodes should be assigned to the ground-truth communities that they belong to with high probability as the size of the hypergraph grows.  We are also interested in deriving an algorithm-agnostic impossibility result that matches the performance guarantee of the learner's algorithm.  
	
\end{enumerate}

\subsection{Main Contributions}

The main contributions and key technical challenges of this work are summarized as follows.
\begin{enumerate}
	\item We establish a phase transition for the general $d$-HSBM (see Theorems~\ref{thm:converse} and~\ref{thm:achievability}), apart from a small subset of $d$-HSBMs that contains communities whose so-called \emph{second-order degree profiles} are identical (to be specified in Section~\ref{sec:distance}). That is, there is a sharp threshold such that exact recovery is possible above the threshold, and impossible below the threshold. This threshold is represented in terms of a quantity which we term as the \emph{generalized Chernoff-Hellinger (GCH) divergence} between different communities, and is a generalization of the CH-divergence discovered in~\cite{abbe2015community} for the SBM. Our result also recovers the exact recovery criterions for the SBM~\cite{abbe2015exact,mossel2015consistency} and $d$-HSBM with two symmetric communities~\cite{kim2018stochastic} as special cases. The techniques for proving the fundamental limits are  inspired by~\cite{abbe2015community}; however, dealing with hypergraphs requires us to carefully characterize different types of hyperedges that are induced by complicated community relations.
	
	\item We develop a polynomial-time algorithm that meets the information-theoretic limits. This implies that there is no information-computation gap for exact recovery in the general $d$-HSBM (apart from the aforementioned small regime of parameters). Our two-stage algorithm consists of a \emph{hypergraph spectral clustering} step in the first stage to ensure \emph{almost exact recovery}\footnote{In the literature, ``almost exact recovery'' is sometimes also called ``weak consistency'', and ``exact recovery'' is called ``strong consistency''.} (see Definition~\ref{def:almost}). It then performs local refinement steps for each of the $n$ nodes in the second stage to ensure exact recovery. To circumvent the problem that conditioned on the success of the first stage certain \emph{a priori} independent random variables become dependent, we adopt a \emph{hypergraph splitting} technique to split the hypergraph into two sub-hypergraphs (see Section~\ref{sec:split}), such that the two stages can be run on the two independent sub-hypergraphs respectively, preserving the independence of the two stages to facilitate the analysis. Although this technique is not new, our analytical method is different from previous analyses (such as~\cite{abbe2015community}). We prove that with high probability over splitting of the given hypergraph into two sub-hypergraphs, desirable properties of the resultant sub-hypergraphs are preserved, which further guarantees the success of the two stages. This new analytical method for analyzing multi-stage algorithms may be of independent interest. Algorithm~\ref{algorithm:1} can also be improved to an \emph{agnostic algorithm} that does not require the knowledge of model parameters (see Remark~\ref{remark:agnostic}).
	
	\item A main technical challenge lies in the development and analysis of an efficient algorithm that leads to almost exact recovery (for the first stage) for the general $d$-HSBM. To the best of our knowledge, such an  algorithm with accompanying guarantees is lacking in the literature. Thus, the hypergraph spectral clustering method developed here and its analysis may be of independent interest. We are aware that various clustering algorithms have been developed. However, theoretical guarantees (on the fraction of  misclassified nodes) are usually restricted to special classes of HSBMs and they do not readily apply to the general $d$-HSBM. For example, the performance of the spectral clustering method in~\cite{chien2019minimax} depends on the $k$-th largest singular value of a specific matrix, but bounding this value turns out to be non-trivial for general $d$-HSBMs. The semidefinite programming-based method in~\cite{kim2018stochastic} is only applicable to symmetric settings. Our new algorithm overcomes these stumbling blocks by leveraging and  judiciously combining  various ideas from prior works for the SBM~\cite{yun2016optimal, yun2014accurate} and the HSBM~\cite{chien2019minimax}. Our theoretical result (Theorem~\ref{thm:weak}) shows that, with probability approaching one, all but a vanishing fraction of the $n$ nodes can be assigned to their true communities (i.e., almost exact recovery is achieved) in the general $d$-HSBM.        
\end{enumerate}


\subsection{Organization}
We describe the general $d$-HSBM and the exact recovery criterion in Section~\ref{sec:model}, and provide our main results (with accompanying discussions) in Section~\ref{sec:main}. Our computationally efficient two-stage algorithm is introduced in Section~\ref{sec:algorithm}, and its theoretical guarantee is formally established in Section~\ref{sec:proof}. The converse part is proved in Section~\ref{sec:converse}. Section~\ref{sec:conclusion} concludes this work and proposes several directions that are fertile avenues for future research.

\section{Preliminaries and Problem Statement} \label{sec:model}

\subsection{Notation}
For any integer $m \ge 1$, let $[m]$ represent the set of integers $\{1,2,\ldots,m\}$, and $\mathcal{S}_m$ be the set of all permutations from $[m]$ to $[m]$.
Random variables and their realizations are respectively denoted by upper-case and lower-case letters, while vectors, matrices, and tensors  are denoted by boldface letters. For a length-$n$ vector $\mathbf{x}$, let $x_i$ denote its $i$-th element, and $\mathbf{x}_{\sim i}$ denote the length-$(n-1)$ sub-vector that excludes $x_i$.  For a matrix $\A$, its operator norm and Frobenius norm are respectively represented by $\Vert \A \Vert_{\mathrm{op}}$
and $\Vert \A \Vert_{\mathrm{F}}$, and its $v$-th column is denoted by $\A_v$.

\subsection{The $d$-uniform hypergraph stochastic block model ($d$-HSBM)} \label{sec:HSBM}
Let $n \in \mathbb{N}$ be the number of nodes, and $k \ge 2$ be the number of non-overlapping communities. Each node $v \in [n]$ belongs to one of the $k$ communities, and is associated with a latent random variable $Z_v$ on $[k]$ with prior distribution $\p = (p_1, p_2, \ldots, p_k)$, where $\sum_{i\in[k]}p_i =1$. That is, if node $v$ belongs to community $i$, then $Z_v = i$. The length-$n$ vector $\Z = (Z_1, Z_2, \ldots, Z_n)$ thus represents the \emph{ground-truth community vector} of the $n$ nodes.  Furthermore, we define $\V_i \triangleq \{v \in [n]: Z_v = i\} $ as the collection of nodes that belong to the community $i$ (for  $i \in [k]$).

Let $d \ge 2$ be the \emph{order} of the hyperedges (i.e., the number of nodes contained in each hyperedge),  and $\W$ be the set of all order-$d$ hyperedges on $[n]$ (where $|\W| = \binom{n}{d}$). It is assumed that the hypergraph considered in this work only contains order-$d$ hyperedges; thus it is referred to as a \emph{$d$-uniform hypergraph}. 
The generation process (underlying statistical model) of our random $d$-uniform hypergraph $G = ([n], \EE)$ is as follows.  For each $e \in \W$, the probability that it appears in the hypergraph $G$ (i.e., $e \in \EE$) depends on the number of nodes in each community $\{\V_i\}_{i=1}^k$.  Formally, let 
\begin{align}
\TT \triangleq \{\T \in \mathbb{N}^k: T_1 + T_2 + \cdots + T_k = d \}
\end{align}
be the collection of length-$k$ vectors such that each vector $\T \in \TT$ (with $T_i$  representing the number of nodes in  $\V_i$) represents a possible \emph{community assignment} of $d$ nodes, where ``community assignment'' is referred to as  the number of nodes contained in each community. The generation of the hyperedges in $G$ is fully characterized by a set of numbers $\left\{Q_{\T} \right\}_{\T \in \TT} \subset \mathbb{R}_+$. The probability of a hyperedge $e \in \W$ appearing is $Q_{ \T(e) }\frac{\log n}{n^{d-1}}$, where $ \T(e) \in \TT$ denotes the  community assignment of the $d$ nodes in hyperedge $e$. That is, $\mathbf{T}(e)$ is the length-$k$ vector whose $i$-th entry represents the number of nodes in the hyperedge $e$ that belongs to the $i$-th community.

\begin{example}
Suppose $d = 3$, $k = 4$, $n = 8$, and $\V_1 = \{1,2\}$, $\V_2 = \{3,4\}$, $\V_3 = \{5,6\}$, $\V_4 = \{7,8\}$. We list three different order-$d$ hyperedges $e_1, e_2, e_3 \in \W$, as well as their community assignments in the table below. Although $e_1 \ne e_2$, the probabilities that $e_1 \in \EE$ and $e_2 \in \EE$ are the same since they have the same community assignment.  On the other hand, the probability that $e_3 \in \EE$  is in general different from that for $e_1$ and $e_2$.
\begin{center}
	\begin{tabular}{||c c c||} 
		\hline
		Hyperedges & Community assignments & Hyperedge probabilities  \\ [0.5ex] 
		\hline\hline
		$e_1 = (1,4,8)$ & $\T(e_1) = (1,1,0,1)$ & $Q_{(1,1,0,1)}(\log n)/n^{2}$  \\ 
		\hline
		$e_2 = (1,3,7)$ & $\T(e_2) = (1,1,0,1)$ & $Q_{(1,1,0,1)}(\log n)/n^{2}$  \\
		\hline
		$e_3 = (2,5,7)$ & $\T(e_3) = (1,0,1,1)$ & $Q_{(1,0,1,1)}(\log n)/n^{2}$  \\
		\hline
	\end{tabular} \label{table}
\end{center}
\end{example}

\vspace{10pt}

The reason why we consider the $\Theta(\frac{\log n}{n^{d-1}})$-regime for the connectivity probability is that it ensures the average degree of each node is $\Theta(\log n)$ and it was shown~\cite{ahn2018hypergraph,kim2018stochastic,chien2019minimax} that phase transition for exact recovery  occurs in this \emph{logarithmic average degrees regime}.  Furthermore, we define $Q_{\max} \triangleq \max_{\T \in \TT}Q_{\T}$ and $Q_{\min} \triangleq \min_{\T \in \TT}Q_{\T}$, and it is assumed that the parameters $Q_{\max},Q_{\min},k,d$ and $\{p_{i}\}_{i\in[k]}$ do not scale with $n$. We also note that several related works~\cite{ghoshdastidar2014consistency, ghoshdastidar2017consistency, ke2019community} allow the number of communities $k$ to diverge as $n$ grows.

Similar to the adjacency matrices for  graphs, any $d$-uniform hypergraph $G$ can be represented by an order-$d$  $n \times \cdots \times n$  \emph{adjacency tensor} $\A = [A_{\bb}]$, where $\bb = [b_1, \ldots, b_d] \in [n]^d$ is the access index of the element in the tensor. Here, $A_{\bb} \in \{0,1\}$, and $A_{\bb} = 1$ means the presence of the hyperedge corresponding to the $d$ nodes in $\bb$. In particular, $A_{\bb} = 0$ if the $d$ elements in $\bb$ are \emph{not} distinct (since each hypergraph must contain $d$ nodes), and $A_{\bb} = A_{\bb'}$ if there exists a permutation $\pi$ such that $(b'_{\pi(1)}, b'_{\pi(2)}, \ldots, b'_{\pi(d)}) = (b_1, b_2, \ldots, b_d)$.

\subsection{Objective}
Given the observation of the hypergraph $G$ (or the adjacency tensor $\A$), the learner aims to use an estimator $\phi = \phi(G)$ to recover the \emph{partition} of the $n$ nodes into $k$ communities. The output of the estimator $\phi$ is denoted by $\widehat{\Z} =(\widehat{Z}_1, \widehat{Z}_2, \ldots, \widehat{Z}_n)$. We measure the accuracy of $\widehat{\Z}$ in terms of the \emph{misclassification proportion} $l (\widehat{\Z},\Z)$, which is defined as     
\begin{align}
l (\widehat{\Z},\Z) \triangleq  \min_{\pi \in \mathcal{S}_k}\frac{1}{n} \sum_{v \in[n]} \mathbbm{1}\left\{\widehat{Z}_v \ne \pi(Z_v) \right\}. \label{eq:misclassification}
\end{align}

\begin{definition}[Exact recovery] \label{def:exact}
An estimator $\phi$ is said to achieve exact recovery if it ensures that with probability $1 - o(1)$, the misclassification proportion $l (\widehat{\Z},\Z) = 0$.
\end{definition}
\begin{definition}[Almost exact recovery]\label{def:almost}
An estimator $\phi$ is said to achieve almost exact recovery if it ensures that with probability $1 - o(1)$, the misclassification proportion $l (\widehat{\Z},\Z) \to 0$ as $n$ tends to infinity.
\end{definition}

\section{Main Results and Discussions} \label{sec:main}

We first introduce several notations that are useful for stating our main results. Let 
\begin{align}
\M \triangleq \{\m \in \mathbb{N}^k: m_1 + m_2 + \cdots + m_k = d-1\}
\end{align}
be the collection of length-$k$ vectors such that the sum of the $k$ elements equals $d-1$. Each element in $\M$ represents one possible community assignment of $d-1$ nodes. For each $\m \in \M$, we define 
\begin{align}
R_{\m} \triangleq \prod_{s=1}^k \binom{np_s}{m_s} \quad \text{and} \quad R'_{\m} \triangleq \frac{R_{\m}}{n^{d-1}}
\end{align}
as the expected number (and normalized expected number)  of combinations of $d-1$ nodes that have community assignment $\m$. Note that $R_{\m} = \Theta(n^{d-1})$ since $\binom{np_s}{m_s} = \Theta(n^{m_s})$ and $\sum_{s\in[k]}m_s = d-1$, and thus $R'_{\m} = \Theta(1)$. Suppose a node belongs to $\V_i$ (where $i \in [k]$) and other $d-1$ nodes have community assignment $\m = (m_1, \ldots, m_i, \ldots, m_k) \in \M$, then the joint community assignment of these $d$ nodes is denoted by 
\begin{align}
\m \oplus i \triangleq (m_1, \ldots, m_{i-1}, m_i+1, m_{i+1}, \ldots, m_k) \in \TT.
\end{align}  
For instance, when $d = 3$, $k = 4$, $\m = (1,1,0,0)$ and $i = 3$, we have $\m \oplus i = (1,1,1,0)$.

\subsection{Separation between communities} \label{sec:distance}
\subsubsection{Degree profile}
For each community $\V_i$ (where $i \in [k]$), we define $\mu_{\m \oplus i}\triangleq R'_{\m}Q_{\m \oplus i}$ for each $\m \in \M$. The interpretation of $\mu_{\m \oplus i}$ is as follows: for any node $v \in \V_i$, the expected number of hyperedges that contain $v$ and have community assignment $\m \oplus i$ is $R_{\m} \cdot \frac{Q_{\m \oplus i} \log n}{n^{d-1}}$, thus  $\mu_{\m \oplus i}$ is the corresponding normalized quantity which scales as $\Theta(1)$.  We then refer to the collection $\{\mu_{\m \oplus i}\}_{\m \in \M}$ as the  \emph{degree profile} of community $i$. Intuitively, two communities are easier to be separated  if the degree profiles of these two communities are further apart. The discrepancy between any two communities in the $d$-HSBM can be measured in terms of the \emph{generalized Chernoff-Hellinger divergence (GCH-divergence)} between their degree profiles, which generalizes the CH-divergence for the SBM that was first discovered by Abbe and Sandon~\cite[Eqn. (3)]{abbe2015community}. 

\begin{definition}[GCH-divergence] \label{def:CH}
For any $i,j \in [k]$ such that $i \ne j$, we define the GCH-divergence between $i$ and $j$ as 
\begin{align}
D_+(i,j) \triangleq \max_{t \in [0,1]} \sum_{\m \in \M} t\mu_{\m \oplus i} + (1-t)\mu_{\m \oplus j} - \mu_{\m \oplus i}^t \mu_{\m \oplus j}^{1-t}, \label{eq:ch}
\end{align}
where $D_+(i,j)$ is a function of $\p$ and $\{Q_{\T}\}_{\T \in \TT}$.
\end{definition}
Note that $D_+(i,j) = 0$ if and only if the degree profiles of communities $\V_i$ and $\V_j$ are exactly the same, in which case the two communities are statistically indistinguishable.

\subsubsection{Second-order degree profile} 
For each community $\V_i$ (where $i \in [k]$), we define its \emph{second-order degree profile} as $\big\{\sum_{\m \in \M:m_s \ge 1}m_s \mu_{\m \oplus i} \big\}_{s \in [k]}$, where each element $\sum_{\m \in \M:m_s \ge 1}m_s \mu_{\m \oplus i}$ represents the normalized expected number of hyperedges that contain two fixed nodes belonging to $\V_i$ and $\V_s$ respectively. When the second-order degree profile of two communities are exactly the same, our analysis also shows that there may be some inherent difficulties in distinguishing them. Formally, we define $\Xi$ as the subset of model parameters $\big(\p, \{Q_{\T}\}_{\T \in \TT}\big)$ such that there exist two communities having the same second-order degree profiles, i.e.,
\begin{align}
\Xi \triangleq \left\{\Big(\p, \{Q_{\T}\}_{\T \in \TT}\Big): \exists i \ne j \text{ such that } \sum_{\m \in \M: m_s \ge 1} m_s \mu_{\m \oplus i}  = \sum_{\m \in \M: m_s \ge 1} m_s \mu_{\m \oplus j} \text{ for all } s\in[k] \right\}.
\end{align}

\subsection{Main results and discussions} \label{sec:result}

\begin{theorem}[Converse] \label{thm:converse}
It is impossible to achieve exact recovery when the model parameters $\big(\p, \{Q_{\T}\}_{\T \in \TT}\big)$ satisfy 
\begin{align}
\min_{i,j\in [k]: i \ne j} D_+(i,j) < 1.
\end{align}
\end{theorem}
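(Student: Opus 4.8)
The plan is a genie-aided (local-to-global) argument in the spirit of the SBM impossibility proofs of \cite{abbe2015exact,mossel2015consistency,abbe2015community}. By the standard reduction, exact recovery is impossible once one can exhibit, with probability $1-o(1)$, a community vector $\Z'$ that induces a partition of $[n]$ distinct from that of $\Z$ yet has posterior probability $\Pr[\Z'\mid G]\ge\Pr[\Z\mid G]$, since then no estimator (not even the Bayes-optimal one) can return the true partition. Fix a bottleneck pair $(i^\star,j^\star)\in\argmin_{i\ne j}D_+(i,j)$, so $D_+(i^\star,j^\star)<1$. For a node $v\in\V_{i^\star}$ let $\Z^{(v)}$ be obtained from $\Z$ by reassigning $v$ to community $j^\star$; since $|\V_{i^\star}|=\Theta(n)\ge 2$ with probability $1-o(1)$, the partition induced by $\Z^{(v)}$ differs from that of $\Z$. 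Define the bad-node event $F_v\triangleq\{\Pr[\Z^{(v)}\mid G]\ge\Pr[\Z\mid G]\}$. It then suffices to prove $\Pr[\bigcup_{v\in\V_{i^\star}}F_v]=1-o(1)$.

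The next step is to localize $F_v$. Re-assigning $v$ changes only the appearance probabilities of the hyperedges incident to $v$, so the log-posterior ratio $\Lambda_v\triangleq\log\frac{\Pr[\Z^{(v)}\mid G]}{\Pr[\Z\mid G]}$ depends, given $\Z$, only on those incident hyperedges, and $F_v=\{\Lambda_v\ge 0\}$; up to the prior constant $\log(p_{j^\star}/p_{i^\star})$ and $n^{-\Omega(1)}$ corrections, $\Lambda_v$ equals a sum over $\m\in\M$ of the number $c_\m$ of present hyperedges through $v$ whose other $d-1$ endpoints realize community assignment $\m$, weighted by $\log(Q_{\m\oplus j^\star}/Q_{\m\oplus i^\star})$, plus an offset. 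Given $\Z$ the $c_\m$'s are independent binomials with means $R_\m Q_{\m\oplus i^\star}(\log n)/n^{d-1}=\mu_{\m\oplus i^\star}\log n=\Theta(\log n)$, hence well approximated by independent Poissons, so $\Lambda_v$ is, to leading order, the log-likelihood ratio between the two product-Poisson laws with intensities $\{\mu_{\m\oplus i^\star}\log n\}_\m$ and $\{\mu_{\m\oplus j^\star}\log n\}_\m$; by the closed form for the Chernoff information of Poisson families this ratio has large-deviation rate exactly $D_+(i^\star,j^\star)\log n$, with optimal tilt the maximizer $t^\star$ in \eqref{eq:ch}. The crucial estimate I then need is the matching \emph{lower} bound
\begin{align}
\Pr[F_v\mid\Z]\;=\;\Pr[\Lambda_v\ge 0\mid\Z]\;\ge\;n^{-D_+(i^\star,j^\star)\,(1+o(1))}
\end{align}
for all typical $\Z$, obtained by tilting to the optimal exponential-family member at $t^\star$ and then applying a local-limit / anticoncentration estimate for $\Lambda_v$ under the tilted law.

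Given this, the first moment yields $\mathbb{E}|\{v\in\V_{i^\star}:F_v\}|=|\V_{i^\star}|\cdot\Pr[F_v]\ge\Theta(n)\cdot n^{-D_+(i^\star,j^\star)(1+o(1))}=n^{\,1-D_+(i^\star,j^\star)-o(1)}\to\infty$ because $D_+(i^\star,j^\star)<1$. To turn this divergence into $\Pr[\bigcup_vF_v]\to 1$ I would use the second-moment (Paley--Zygmund) method on $N\triangleq|\{v\in\V_{i^\star}:F_v\}|$: distinct candidate nodes $u,v$ share only $\Theta(n^{d-2})$ potential hyperedges out of the $\Theta(n^{d-1})$ incident to each, and typically none of the shared ones is present, so $F_u$ and $F_v$ are asymptotically independent, $\mathbb{E}[N^2]\le(1+o(1))(\mathbb{E}N)^2+\mathbb{E}N$, and hence $\Pr[N\ge 1]\to 1$. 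The degenerate sub-case $D_+(i^\star,j^\star)=0$ (coinciding degree profiles, so communities $i^\star$ and $j^\star$ are statistically identical) is immediate.

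The hard part will be the tight local lower bound: one must pin the exponent to $n^{-D_+(i^\star,j^\star)}$ up to subpolynomial factors, which forces a moderate-deviations (rather than a crude Chernoff) analysis of $\Lambda_v$ and a quantitative binomial-to-Poisson comparison, together with control of the $\Theta(\log n)$-scale fluctuations of the intensities about their typical values, so that all these corrections stay inside the $n^{o(1)}$ budget; establishing the anticoncentration of the exponentially tilted $\Lambda_v$ is the delicate ingredient. A routine accompanying point is that conditioning on the realized $\Z$ rather than on exact community sizes $np_i$ changes all the estimates only by $1+o(1)$ factors, since the sizes $|\V_i|$ concentrate within $\Theta(\sqrt{n\log n})$ of $np_i$.
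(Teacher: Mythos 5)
Your strategy is sound and, if the flagged lemma were supplied, would prove the theorem; but it is a genuinely different route from the paper's. You frame the converse as MAP failure under a single-node flip: lower-bound $\PP(\Lambda_v\ge 0)\ge n^{-D_+(i^\star,j^\star)(1+o(1))}$ by exponential tilting plus anticoncentration, then upgrade the diverging first moment to probability $1-o(1)$ via Paley--Zygmund, controlling the $\Theta(n^{d-2})$ hyperedges shared by two candidate nodes. The paper instead constructs a pair of \emph{exchangeable} nodes: it samples a set $S$ of $n/(\log n)^3$ nodes, calls $v\in S$ \emph{ambiguous} if its hyperedge counts $N_v^{\m}$ into $[n]\setminus S$ hit the exact integer targets $\tau_{\m}=\lfloor\mu_{\m\oplus i}^{t^*}\mu_{\m\oplus j}^{1-t^*}\log n\rfloor$ (the mode of the optimally tilted law), evaluates $\PP(N_v^{\m}=\tau_{\m}\ \forall\m)=n^{-D_+(i,j)+o(1)}$ from exact binomial point-probability asymptotics, and exhibits one ambiguous node in each of $\V_i^S$ and $\V_j^S$ with no hyperedges inside $S$; swapping their labels leaves the likelihood of the observed hypergraph unchanged. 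This buys two simplifications over your plan: for distinct $v\in S$ the counts $N_v^{\m}$ involve disjoint hyperedges and are exactly independent, so no second-moment argument is needed; and the point-mass evaluation at $(\tau_{\m})_{\m}$ replaces the local-limit/anticoncentration lower bound on the tilted $\Lambda_v$, which is precisely the ``delicate ingredient'' you leave unproved and the only substantive hole in your write-up. If you pursue your route, the cleanest repair is to borrow this device: the first-order condition \eqref{eq:dao} defining $t^*$ places the profile $(\tau_{\m})_{\m}$ on the boundary of $\{\Lambda_v\ge 0\}$, so restricting to a slightly perturbed target profile lying strictly inside that event reduces your anticoncentration step to the same explicit point-probability computation, at no cost in the exponent. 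Conversely, your LLR formulation dovetails more directly with the achievability analysis (Lemma~\ref{lemma:min}) and avoids introducing the auxiliary set $S$.
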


\begin{theorem}[Achievability] \label{thm:achievability}
Assume that the model parameters $\big(\p, \{Q_{\T}\}_{\T \in \TT}\big) \notin \Xi$. Then the polynomial-time two-stage algorithm (Algorithm~\ref{algorithm:1}) achieves exact recovery when the model parameters $\big(\p, \{Q_{\T}\}_{\T \in \TT}\big)$ satisfy 
	\begin{align}
	\min_{i,j\in [k]: i \ne j} D_+(i,j) > 1.
	\end{align}
\end{theorem}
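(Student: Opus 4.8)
## Proof Proposal for Theorem~\ref{thm:achievability}

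The plan is to analyze the two-stage algorithm (Algorithm~\ref{algorithm:1}) by first decoupling the two stages via the hypergraph splitting technique described in Section~\ref{sec:split}, and then treating each stage separately. Concretely, I would split the adjacency tensor $\A$ into two independent sub-tensors $\A^{(1)}$ and $\A^{(2)}$ by assigning each hyperedge $e \in \EE$ independently to one of the two sub-hypergraphs (say with a constant probability such as $1/2$). Since each hyperedge is present independently with probability $Q_{\T(e)}\frac{\log n}{n^{d-1}}$, the resulting sub-hypergraphs are themselves $d$-HSBMs with the same community structure $\Z$ but rescaled parameters $\{c_1 Q_{\T}\}$ and $\{c_2 Q_{\T}\}$ for constants $c_1 + c_2 = 1$. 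The first sub-hypergraph is fed to the hypergraph spectral clustering stage, and the second to the local refinement stage. I would first argue (this is the content of Theorem~\ref{thm:weak}) that with probability $1-o(1)$ the spectral clustering stage applied to $\A^{(1)}$ returns an estimate $\widehat{\Z}^{(0)}$ achieving almost exact recovery, i.e.\ $l(\widehat{\Z}^{(0)},\Z) \to 0$. Then I would show that, \emph{conditioned on} $\widehat{\Z}^{(0)}$ being a good coarse estimate but using only the independent data $\A^{(2)}$, the local refinement step corrects every node and produces $l(\widehat{\Z},\Z) = 0$ with probability $1-o(1)$.

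For the refinement analysis, the key is a per-node hypothesis test. Fix a node $v$ with true label $Z_v = i$. Using the coarse estimate $\widehat{\Z}^{(0)}$ on the other $n-1$ nodes — which, by almost exact recovery, correctly labels all but $o(n)$ of them — the refinement step should re-estimate $Z_v$ by a maximum-likelihood-type rule: compute, for each candidate label $j \in [k]$, the log-likelihood of the hyperedges in $\A^{(2)}$ incident to $v$ under the hypothesis $Z_v = j$, and pick the maximizer. Because $\widehat{\Z}^{(0)}_{\sim v}$ differs from $\Z_{\sim v}$ on only a vanishing fraction of nodes, the counts of hyperedges incident to $v$ with each community assignment $\m \in \M$ are, up to negligible perturbation, distributed as independent Poisson (or Binomial) random variables with means proportional to $\mu_{\m\oplus i}\log n$. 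A standard Chernoff/large-deviations computation then shows that the probability that the test prefers some wrong label $j \ne i$ over the true label $i$ decays like $n^{-c_2 D_+(i,j) + o(1)}$, where the exponent is exactly the GCH-divergence from Definition~\ref{def:CH} (the optimization over $t \in [0,1]$ in \eqref{eq:ch} arises as the optimal tilting parameter in the Chernoff bound between the two Poisson-product laws). Taking a union bound over all $n$ nodes and all $k-1$ competing labels, the total error probability is at most $n \cdot k \cdot n^{-c_2 \min_{i\ne j} D_+(i,j) + o(1)}$; since $\min_{i\ne j} D_+(i,j) > 1$ and $c_2$ can be chosen arbitrarily close to $1$ (the loss from splitting is absorbed into the $o(1)$ slack by letting the split fraction tend to $1$, or more carefully by a finer accounting), this tends to $0$. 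This is where the threshold $D_+(i,j) > 1$ enters, and where the condition $(\p,\{Q_{\T}\}) \notin \Xi$ is used: it guarantees that the ``effective noise'' in the per-node test — governed by the second-order degree profile, which controls the variance of the likelihood statistic when $o(n)$ neighbors are mislabeled — does not swamp the signal, so that the perturbation from the $o(n)$ errors in $\widehat{\Z}^{(0)}$ is genuinely negligible at the exponential scale.

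The remaining technical ingredients are: (i) a rigorous statement that the splitting preserves, with probability $1-o(1)$, the ``good properties'' of both sub-hypergraphs needed by the two stages (degree concentration, the spectral gap / structural conditions required by Theorem~\ref{thm:weak}, etc.) — this is a concentration argument over the random assignment of hyperedges, and the paper notes its analysis differs from~\cite{abbe2015community}; and (ii) control of the effect of the $o(n)$ misclassified neighbors on the per-node likelihood statistic, which requires showing that changing $o(n)$ coordinates of the neighbor-label vector perturbs each relevant hyperedge count by $o(\log n)$ in an appropriate sense, so the exponents are unchanged to leading order. I expect the main obstacle to be precisely this interface between the two stages: quantifying how robust the per-node Chernoff test is to the adversarial-looking (but actually benign, thanks to $\notin \Xi$) $o(n)$ errors inherited from spectral clustering, and simultaneously handling the conditioning introduced by splitting so that the hyperedges used in the refinement stage remain exactly independent of the coarse estimate. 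The spectral clustering analysis itself (Theorem~\ref{thm:weak}) is a substantial but self-contained piece that I would invoke as a black box here, having combined ideas from~\cite{yun2016optimal,yun2014accurate,chien2019minimax} as the authors indicate.
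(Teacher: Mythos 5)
Your proposal follows essentially the same route as the paper: split the hypergraph so that Stage~1 sees only a small fraction of hyperedges while Stage~2 retains a $(1-o(1))$ fraction, invoke Theorem~\ref{thm:weak} as a black box for almost exact recovery, analyze the per-node refinement as a hypothesis test whose error exponent is the GCH-divergence via the $\min\{a,b\}\le a^t b^{1-t}$ Chernoff bound, control the perturbation from the $o(n)$ mislabeled neighbors, and finish with a union bound over nodes. Two points deserve correction. First, your opening suggestion of a constant split probability such as $1/2$ would rescale every $\mu_{\m\oplus i}$ by the constant $c_2$ and hence shrink the Stage-2 exponent to $c_2 D_+(i,j)$, which fails at the sharp threshold; the paper instead samples each hyperedge into the first sub-hypergraph with probability $\gamma_n/\log n$ with $\gamma_n\in\omega(1)\cap o(\log n)$, so that $c_2=1-o(1)$ and the full divergence survives --- your parenthetical about letting the split fraction tend to one is the right fix, and should be the default rather than an afterthought. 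Second, you assert that the hypothesis $\big(\p,\{Q_{\T}\}_{\T\in\TT}\big)\notin\Xi$ is needed in the refinement stage to keep the ``effective noise'' of the per-node test under control; this is not how the assumption is used. It enters only through Stage~1: Lemma~\ref{lemma:M} uses it to guarantee that columns of $\E(\AHG)$ corresponding to distinct communities are separated by $\Omega(\gamma_n^2/n)$, which is what makes the spectral clustering succeed. The Stage-2 analysis (Lemmas~\ref{lemma:approximation} and~\ref{lemma:min}) makes no reference to second-order degree profiles; it handles the mislabeled neighbors by restricting to an event on which at most $c_n\log n$ of the affected hyperedges are present and bounding the resulting likelihood ratio between the estimated and true label vectors by $n^{\pm o(1)}$, so the exponent $D_+(i,j)$ is untouched. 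Apart from these two attributions, your outline matches the paper's proof.
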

Some remarks on Theorems~\ref{thm:converse} and~\ref{thm:achievability} are in order.

\begin{enumerate}
\item 
For community detection in the $d$-HSBM, most of the settings considered in prior works, such as the one that $\{Q_{\T}\}_{\T \in \TT}$ can only take two values depending on whether $d$ nodes belong to the same community~\cite{pal2019community,cole2020exact,ahn2017information,ahn2018hypergraph,kim2018stochastic,lee2020robust}, satisfy $\big(\p, \{Q_{\T}\}_{\T \in \TT}\big) \notin \Xi$. Thus, our result is a strict generalization of these existing works. An example for the case that $\big(\p, \{Q_{\T}\}_{\T \in \TT}\big) \in \Xi$ is as follows. Suppose $k = 2$, $d = 3$, $\p = (p_1,p_2)$, and $\{Q_{\T}\}_{\T \in \TT} = \{Q_{(3,0)},Q_{(2,1)}, Q_{(1,2)}, Q_{(0,3)} \}$. When $(\p,\{Q_{\T}\}_{\T \in \TT})$ satisfies $p_1 = p_2 = 1/2$, $Q_{(3,0)} = Q_{(1,2)}$ and $Q_{(2,1)} = Q_{(0,3)}$, one can check that the two communities $\V_1$ and $\V_2$ have the same second-order degree profile, thus $\big(\p, \{Q_{\T}\}_{\T \in \TT}\big) \in \Xi$.

\item  When $\big(\p, \{Q_{\T}\}_{\T \in \TT}\big) \in \Xi$ and $\min_{i,j\in [k]: i \ne j} D_+(i,j) > 1$, it remains open whether exact recovery is possible. However, we would like to point out that this scenario does not apply to the SBM (equivalently, the $2$-HSBM), since the condition $\min_{i,j\in [k]: i \ne j} D_+(i,j) > 1$ immediately implies that $\big(\p, \{Q_{\T}\}_{\T \in \TT}\big) \notin \Xi$ in this setting. Our sharp threshold is applicable to all model parameters when $d=2$.   


\item When $\big(\p, \{Q_{\T}\}_{\T \in \TT}\big) \notin \Xi$, the first stage of Algorithm~\ref{algorithm:1} ensures almost exact recovery via hypergraph spectral clustering (as shown in Theorem~\ref{thm:weak}), and the condition $\min_{i,j\in [k]: i \ne j} D_+(i,j) > 1$ is the criterion for Stage 2 (local refinement steps) to succeed.  Roughly speaking, performing a local refinement step for each node is equivalent to performing a hypothesis test with independent but non-identically distributed samples. The corresponding error probability can be represented by a variant of the \emph{Chernoff information}~\cite[Chapter 11.9]{cover2012elements}, and this further reduces to $n^{-\min_{i,j\in [k]: i \ne j} D_+(i,j)}$ which is in the form of the GCH-divergence. Thus, when $\min_{i,j\in [k]: i \ne j} D_+(i,j) > 1$, taking a union bound over the $n$ nodes results in a vanishing error probability (i.e., exact recovery is achieved). 

\item Our analysis of the first stage is not able to handle the case in which $\big(\p, \{Q_{\T}\}_{\T \in \TT}\big) \in \Xi$ because the key step in the hypergraph spectral clustering method is to map the order-$d$ adjacency tensor $\A$ to an $n \times n$ matrix $\AH$ (defined in~\eqref{eq:L} below), and the subsequent clustering algorithm critically relies on the discrepancy between the columns of $\AH$ (which corresponds to the second-order degree profiles of communities). We conjecture that this issue may be circumvented if one directly applies clustering algorithms on the adjacency tensor (such as the method proposed in~\cite{ke2019community}), and the exact recovery threshold $\min_{i,j\in [k]: i \ne j} D_+(i,j) =1$ holds without the assumption that the second-order degree profiles of any two communities are  distinct.

\item The algorithm performance also depends on the value of $d$. As $d$ increases, the computational complexity of constructing $\mathbf{L}_{\Gamma}$ increases accordingly. If the hyperedge probabilities $\{Q_{\mathbf{T}}\}_{\mathbf{T}\in \mathcal{T}}$ were unknown a priori, a larger value of $d$ would also increase the difficulty of learning $\{Q_{\mathbf{T}}\}_{\mathbf{T}\in \mathcal{T}}$, since $|\mathcal{T}|$ increases exponentially with $d$.  
\end{enumerate}

\subsection{Recovering prior results from Theorems~\ref{thm:converse} and~\ref{thm:achievability}}
To the best of our knowledge, the sharp threshold established by Theorems~\ref{thm:converse} and~\ref{thm:achievability} is the most general result for exact recovery in the SBM/HSBM literature. As discussed below, several problem settings investigated in prior works are subsumed by our result, and the thresholds derived in the  prior works can be recovered from Theorems~\ref{thm:converse} and~\ref{thm:achievability}.   

\subsubsection{Exact recovery in the SBM~\cite{abbe2015community}} The SBM considered in~\cite{abbe2015community} corresponds to our $d$-HSBM with $d = 2$. In~\cite{abbe2015community}, the prior distribution of each node is also $\p = (p_1, p_2, \ldots, p_k)$, and the edge probabilities are characterized by $\big\{Q_{i,j}\frac{\log n}{n} \big\}_{i,j \in [k]}$, where $Q_{i,j}$ corresponds to edges that contain nodes in $\V_i$ and $\V_j$. The authors of~\cite{abbe2015community} showed that the threshold for exact recovery is 
\begin{align}
	\min_{i,j\in[k]:i\ne j} \max_{t\in [0,1]} \sum_{s \in [k]} p_s\left[t Q_{s,i} + (1-t)Q_{s,j} - Q_{s,i}^tQ_{s,j}^{1-t} \right] = 1.  \label{eq:abbe}
\end{align}  
In our setting with $d = 2$, the set $\M = \{\m_1, \m_2, \ldots, \m_{k}\}$ contains $k$ distinct length-$k$ vectors, where each $\m_s \triangleq (0,\ldots,0,1,0,\ldots,0)$ contains a single one which is at the $s$-th location. By noting that $\mu_{\m \oplus i} = p_s Q_{s,i}$ (resp. $\mu_{\m \oplus j} = p_s Q_{s,j}$)   in~\cite{abbe2015community} and  $\big(\p, \{Q_{\T}\}_{\T \in \TT}\big)\notin \Xi$ when 	$\min_{i,j\in [k]: i \ne j} D_+(i,j) > 1$, we recover their threshold stated in Eqn.~\eqref{eq:abbe} from Theorems~\ref{thm:converse} and~\ref{thm:achievability}.

\begin{remark}
{\em One major distinction between the algorithms in~\cite{abbe2015community} and this work is the initialization step (Stage~1). We use the spectral clustering method while~\cite{abbe2015community} uses the so-called \emph{sphere comparison} algorithm. The main idea of the sphere comparison algorithm is to determine whether two nodes belong to a same community by counting the common neighbors at a large enough depth between them. While it works well for regular graphs, generalizing it to hypergraphs may be non-trivial.}
\end{remark}

\subsubsection{Exact recovery in the $d$-HSBM with two symmetric communities~\cite{kim2018stochastic}}
The model considered in~\cite{kim2018stochastic} is a special $d$-HSBM with two equal-sized communities that have symmetric structures. It corresponds to our general $d$-HSBM with $k = 2$, $p_1 = p_2 = 1/2$, and $\{Q_{\T}\}_{\T \in \TT} = \{Q_1,Q_2\}$ (where each hyperedge appears with probability $Q_1$ when $d$ nodes are in the same communities, and $Q_2$ otherwise). Kim, Bandeira, and Goemans~\cite{kim2018stochastic} showed  that the threshold for exact recovery is 
\begin{align}
	\frac{1}{2^{d-1}} \left(\sqrt{\frac{Q_1}{(d-1)!}} - \sqrt{\frac{Q_2}{(d-1)!}}  \right)^2 = 1.  \label{eq:hsbm}
\end{align}  
Specializing our result to this symmetric setting, we note that the set $\M = \{(d-1,0), (d-2,1), \ldots, (0,d-1) \}$ contains $d$ distinct length-$2$ vectors, and the values of $\mu_{\m\oplus 1}$ and $\mu_{\m\oplus 2}$ for all $\m \in \M$ can then be calculated. By noting that the model parameters $(\p, \{Q_{\T}\}_{\T \in \TT}) \notin \Xi$ and $t = 1/2$ maximizes the GCH-divergence in~\eqref{eq:ch} for symmetric SBMs and HSBMs, we recover the threshold stated in Eqn.~\eqref{eq:hsbm} from Theorems~\ref{thm:converse} and~\ref{thm:achievability}. Furthermore, we also recover the celebrated exact recovery threshold $|\sqrt{Q_1} - \sqrt{Q_2}| = \sqrt{2}$ for the SBM with two symmetric communities~\cite{abbe2015exact}, since it is a special case of~\cite{kim2018stochastic} for $d=2$.

\subsection{Comparisons with the results on the misclassification proportion in the HSBM~\cite{chien2019minimax}} \label{sec:chien}
The work~\cite{chien2019minimax} studied the \emph{fundamental limit of misclassification proportion} in $d$-HSBMs. Their model assumes that there are $k$ approximately equal-sized communities, and the hyperedge probabilities depend only on the sorted histogram vector (in descending order) of the community assignment vector (e.g., the community assignment vectors $\mathbf{T} = (d, 0, \ldots, 0)$ and $\mathbf{T}' = (0, \ldots, 0, d)$ correspond to a same sorted histogram vector $(d, 0, \ldots, 0)$). Thus, their model is a particularization of our general HSBM model. While the main focus of their work is to characterize the negative exponent of the misclassification proportion  $l(\widehat{\Z},\Z)$ (as defined in Eqn.~\eqref{eq:misclassification}), their results can also be applied to finding the exact recovery threshold by setting the negative exponent to be greater than $\log n$ (which means $l(\widehat{\Z},\Z) < 1/n$ and thus implies exact recovery). In the following, we show that the exact recovery thresholds derived in this work and~\cite[Theorem~3.1, Theorem~3.2]{chien2019minimax} are exactly the same when $d = 2$ and $d = 3$. When $d \ge 4$, the expressions in both works become highly complicated (and moreover,~\cite{chien2019minimax} did not provide the precise value of their expression for $d \ge 5$), thus it is difficult to make comparisons; however, we conjecture that the thresholds should still be the same for $d \ge 4$ due to the evidence shown for $d = 2$ and $d = 3$.

\subsubsection{Comparison for $d = 2$} For a valid comparison, we assume that (i) there are $k$ communities of equal sizes, and (ii) the hyperedge probabilities are either $q_1 = Q_1 \frac{\log n}{n}$ or $q_2 = Q_2 \frac{\log n}{n}$, depending on whether two nodes belong to a same community.
For the standard SBM with $d = 2$, the theoretical result in~\cite{chien2019minimax} reduces to the minimax misclassification proportion in~\cite{zhang2016minimax}, in which the negative exponent is dominated by $(n/k)\cdot I_{q_1q_2}$, where $I_{q_1q_2} \triangleq -2\log(\sqrt{q_1q_2}+\sqrt{1-q_1}\sqrt{1-q_2})$ and can further be simplified as
\begin{align}
-2\log(\sqrt{q_1q_2}+\sqrt{1-q_1}\sqrt{1-q_2}) &= -2\log\left\{ \sqrt{q_1q_2}+ \left(1-\frac{1}{2}q_1 + \mathcal{O}(q_1^2) \right) + \left(1-\frac{1}{2}q_2 + \mathcal{O}(q_2^2)\right) \right\} \notag \\
&= -2 \log \left\{1 - \left( \frac{1}{2}(q_1 + q_2) - \sqrt{q_1q_2} + \mathcal{O}(q_1q_2) \right) \right\} \notag \\
& = 2  \left( \frac{1}{2}(q_1 + q_2) - \sqrt{q_1q_2}\right) + \mathcal{O}(q_1q_2) \notag \\
&= (\sqrt{q_1} - \sqrt{q_2})^2  + \mathcal{O}(q_1q_2), \label{eq:mid}\\
&= (\sqrt{Q_1} - \sqrt{Q_2})^2 \frac{\log n}{n} + \mathcal{O}((\log n)^2 / n^2). \notag
\end{align}
The first equality follows from $\sqrt{1-x} = 1 - \frac{1}{2}x + \mathcal{O}(x^2)$ for $x \to 0$. For sufficiently large $n$, when the parameters satisfy 
\begin{align}
\frac{(\sqrt{Q_1} - \sqrt{Q_2})^2}{k} > 1,  \label{eq:d2}
\end{align}
the negative exponent of the misclassification proportion will be greater than $\log n$ (i.e., the misclassification proportion will be less than $1/n$), which implies exact recovery. 

On the other hand, when $d = 2$, the theoretical guarantee of exact recovery in our work reduces to the threshold in~\cite{abbe2015community}, which is exactly the condition given in~\eqref{eq:d2}. This means that for $d =2$, the exact recovery thresholds in~\cite{chien2019minimax} and this work are the same.

\subsubsection{Comparison for $d = 3$} For a valid comparison, we assume that (i) there are $k$ communities of equal sizes, and (ii) the hyperedge probabilities scale as $\Theta(\frac{\log n}{n^2})$, and depend only on the sorted histogram vector of the community assignment: 
\begin{itemize}
\item A hyperedge appears with probability $q_1 = Q_1 \frac{\log n}{n^2}$ if all three nodes belong to a same community;
\item A hyperedge appears with probability $q_2 = Q_2 \frac{\log n}{n^2}$ if only two nodes belong to a same community;
\item A hyperedge appears with probability $q_3 = Q_3 \frac{\log n}{n^2}$ if three nodes belong to three different communities.
\end{itemize}
We note that~\cite[Theorem 3.1]{chien2019minimax} guarantees that the misclassification proportion between the true and estimated labels is at most $\exp\{-(1-\xi_n) [\frac{n^2}{2k^2}I_{q_1q_2} + \frac{n^2 (k-2)}{k^2}I_{q_2q_3}]  \}$ with high probability, where $I_{q_i q_j} = -2\log(\sqrt{q_iq_j}+\sqrt{1-q_i}\sqrt{1-q_j})$ and $\xi_n \to 0$ as $n \to \infty$. To ensure exact recovery, the negative exponent should satisfy 
	\begin{align}
	\frac{n^2}{2k^2}I_{q_1q_2} + \frac{n^2 (k-2)}{k^2}I_{q_2q_3} > \log n. \label{eq:condition}
	\end{align}
Next, we figure out the condition under which~\eqref{eq:condition} holds. Recalling from~\eqref{eq:mid} that $I_{q_iq_j} = (\sqrt{q_i} - \sqrt{q_j})^2 +\mathcal{O}(q_iq_j)$, thus the LHS of~\eqref{eq:condition} can be expressed as 
\begin{align*}
&\frac{n^2}{2k^2} (\sqrt{q_1} - \sqrt{q_2})^2  + \frac{n^2 (k-2)}{k^2} (\sqrt{q_2} - \sqrt{q_3})^2 + \mathcal{O}(n^2 q_1 q_2) \\
&= \left( \frac{(\sqrt{Q_1} - \sqrt{Q_2})^2}{2k^2}  + \frac{(k-2)(\sqrt{Q_2} - \sqrt{Q_3})^2}{k^2} \right) \cdot  \log n + \mathcal{O}\left(\frac{(\log n)^2}{n^2} \right)
\end{align*}
For sufficiently large $n$, when the model parameters satisfy
\begin{align}
\frac{(\sqrt{Q_1} - \sqrt{Q_2})^2}{2k^2}  + \frac{(k-2)(\sqrt{Q_2} - \sqrt{Q_3})^2}{k^2} > 1, \label{eq:threshold}
\end{align} 
the misclassification proportion will be less than $1/n$, which implies exact recovery.

Next, we specialize our results to the setting of interest. Note that $\mathcal{M} = \big\{\mathbf{m} \in \mathbb{N}^k: m_1 + m_2 + \cdots + m_k = 2 \big\}$, and $R'_{\mathbf{m}} = 1/k^2$ if $\max_{l \in [k]} m_{l} = 1$, and $R'_{\mathbf{m}} = 1/(2k^2)$ if $\max_{l \in [k]} m_{l} = 2$. One can check that  the second-order degree profile condition is satisfied. Without loss of generality, we  focus on the first two communities: $\mathcal{V}_1$ with degree profile $\{\mu_{\mathbf{m} \oplus 1}\}_{\mathbf{m} \in \mathcal{M}}$ and $\mathcal{V}_2$ with degree profile $\{\mu_{\mathbf{m} \oplus 2}\}_{\mathbf{m} \in \mathcal{M}}$.
In the following, we consider $\mathbf{m} \in \mathcal{M}$ such that $\mu_{\mathbf{m} \oplus 1}$ and $\mu_{\mathbf{m} \oplus 2}$ are different:
\begin{itemize}
\item[--] When $\mathbf{m} = (2, 0, 0 \ldots, 0)$, we have $\mu_{\mathbf{m} \oplus 1} =Q_1/(2k^2)$ and $\mu_{\mathbf{m} \oplus 2} =Q_2/(2k^2)$;
\item[--] When $\mathbf{m} = (0, 2, 0 \ldots, 0)$, we have $\mu_{\mathbf{m} \oplus 1} =Q_2/(2k^2)$ and $\mu_{\mathbf{m} \oplus 2} =Q_1/(2k^2)$;
\item[--] When $\mathbf{m}$ satisfies $m_1 = 1, m_2 = 0$, and there exists only one index $l \in \{3,\ldots, k\}$ such that $m_l = 1$, we have $\mu_{\mathbf{m} \oplus 1} =Q_2/k^2$ and $\mu_{\mathbf{m} \oplus 2} =Q_3/k^2$;
\item[--] When $\mathbf{m}$ satisfies $m_1 = 0, m_2 = 1$, and there exists only one index $l \in \{3,\ldots, k\}$ such that $m_l = 1$, we have $\mu_{\mathbf{m} \oplus 1} =Q_3/k^2$ and $\mu_{\mathbf{m} \oplus 2} =Q_2/k^2$.
\end{itemize}
Thus, the GCH-Divergence $D_+(1,2)$ between the first two communities is 
\begin{align}
&\max_{t \in [0,1]} \left[t\frac{Q_1}{2k^2} + (1-t)\frac{Q_2}{2k^2} - \left( \frac{Q_1}{2k^2}\right)^t \left(\frac{Q_2}{2k^2} \right)^{1-t} \right] + \left[t\frac{Q_2}{2k^2} + (1-t)\frac{Q_1}{2k^2} - \left( \frac{Q_2}{2k^2}\right)^t \left(\frac{Q_1}{2k^2} \right)^{1-t} \right] \notag \\
&+(k-2)\left[t\frac{Q_2}{k^2} + (1-t)\frac{Q_3}{k^2} - \left( \frac{Q_2}{k^2}\right)^t \left(\frac{Q_3}{k^2} \right)^{1-t} \right] + (k-2)\left[t\frac{Q_3}{k^2} + (1-t)\frac{Q_2}{k^2} - \left( \frac{Q_3}{k^2}\right)^t \left(\frac{Q_2}{k^2} \right)^{1-t} \right], \notag
\end{align}
where the minimum is obtained at $t = 1/2$, yielding that $D_+(1,2) = \frac{(\sqrt{Q_1} - \sqrt{Q_2})^2}{2k^2}  + \frac{(k-2)(\sqrt{Q_2} - \sqrt{Q_3})^2}{k^2}$.
Finally, by symmetry one can show that $D_+(i,j) = D_+(1,2)$ for other pairs of $i,j \in [k]$. Therefore, the exact recovery threshold  is 
\begin{align}
\frac{(\sqrt{Q_1} - \sqrt{Q_2})^2}{2k^2}  + \frac{(k-2)(\sqrt{Q_2} - \sqrt{Q_3})^2}{k^2} > 1,
\end{align}
which is exactly the same as the threshold~\eqref{eq:threshold} derived in~\cite{chien2019minimax}.

\section{The two-stage algorithm for exact recovery} \label{sec:algorithm}

In this section, we present our  polynomial-time algorithm that is used to achieve the information-theoretic limit shown in Theorem~\ref{thm:achievability}. As mentioned in Section~\ref{sec:result}, our algorithm consists of two stages such that the first stage achieves almost exact recovery via the hypergraph spectral clustering method and the second stage achieves exact recovery via local refinement steps. This ``\emph{from global to local}'' principle has been employed in many contexts, such as community detection in the SBM~\cite{abbe2015community,yun2016optimal, yun2014accurate, chin2015stochastic, gao2017achieving} and HSBM~\cite{ahn2018hypergraph,kim2018stochastic,chien2019minimax,lin2017fundamental,chien2018community}, matrix completion~\cite{jain2013low,keshavan2010matrix,zhang2020mc2g}, etc. 
It is also worth noting that when analyzing two-stage algorithms, random variables that are initially independent may become dependent conditioned on the success of a preceding stage. To ameliorate this problem, we adopt the graph splitting technique (as described in Subsection~\ref{sec:split}) which is inspired by prior works on community detection~\cite{abbe2015community,chin2015stochastic,ahn2018hypergraph,vu2014simple}.   Our algorithm is described in detail in Algorithm~\ref{algorithm:1}.

\subsection{Graph Splitting} \label{sec:split}
Let $F = ([n], \W)$ be the \emph{complete $d$-uniform hypergraph} on node set $[n]$, and the hyperedge set $\W$ contains all the $\binom{n}{d}$ order-$d$ hyperedges (as defined in Section~\ref{sec:HSBM}). We randomly split $F$ into two sub-hypergraph $F_1 = ([n], \W_1)$ and $F_2 = ([n], \W_2)$. Each hyperedge in $\W$ is sampled to $\W_1$ with probability $\gamma_n/\log n$, and to $\W_2$ with probability $1 - (\gamma_n/\log n)$, where $\gamma_n$ can be any value in $\omega(1) \cap o(\log n)$. For concreteness we set $\gamma_n = \sqrt{\log n}$. Note that $\W_2$ is the complement of $\W_1$.  This splitting process is independent of the generation of the hypergraph  $G = ([n], \EE)$ (which is generated according to $\p$ and $\left\{Q_{\T} (\log n)/n^{d-1} \right\}_{\T \in \TT}$). We then define $G_1 = ([n], \EE_1)$ as the \emph{sub-HSBM} that is generated on the hyperedge set $\W_1$ of $F_1$, where  $\EE_1 = \EE \cap \W_1$ is the intersection of the hyperedge sets of the HSBM $G$ and the sub-hypergraph $F_1$. Similarly, we define $G_2 = ([n], \EE_2)$ as the sub-HSBM that is generated on the hyperedge set $\W_2$ of  $F_2$, where  $\EE_2 = \EE \cap \W_2$.

\begin{algorithm}
	\SetAlgoLined
	\SetKwInOut{Input}{Input}\SetKwInOut{Output}{Output}
	\SetKwData{Left}{left}\SetKwData{This}{this}\SetKwData{Up}{up}
	\textbf{Input:} Hypergraphs $G_1, G_2$, number of communities $k$, $\gamma_n = \sqrt{\log n}$, radius $r = \frac{\gamma_n^2}{n \log(\gamma_n)}$, $\widehat{\V}^{(0)}_0 = \emptyset$\; 
	
\textbf{Stage~1: Almost exact recovery via hypergraph spectral clustering} \\
	$\AH \leftarrow$ $\HH \HH^\top - \mathbf{D}$; \ $\AHG \leftarrow$ trim the rows and columns in $\AH$ that correspond to $v \notin \Gamma$\;
	$\AHGk \leftarrow$ rank-$k$ approximation of $\AHG$\;

	$\Psi \leftarrow$ a subset  of nodes that contains $\lceil \log n \rceil$ random samples (with replacement) from $\Gamma$\;
	$\B_v \leftarrow$ $\{u \in \Gamma: \Vert (\AHGk)_u - (\AHGk)_v \Vert_2^2 \le r \}$, for all $v \in \Psi$\;
  
	\For{$j = 1$ $\KwTo$ $k$}{
	$v_j^{\ast} \leftarrow \argmax_{v \in \Psi} |\B_v \setminus (\cup_{l=0}^{j-1} \widehat{\V}^{(0)}_l)|$\;
	$\widehat{\V}^{(0)}_j \leftarrow$ $\B_{v_j^{\ast}} \setminus (\cup_{l=0}^{j-1} \widehat{\V}^{(0)}_l)$
	}
	
	\For{$v \in \Gamma \setminus (\cup_{j=1}^k \widehat{\V}^{(0)}_j)$}{
	$j_{\ast} \leftarrow \argmin_{j} \Vert (\AHGk)_v - (\AHGk)_{v_j^{\ast}} \Vert_2^2$\;
	$\widehat{\V}^{(0)}_{j_{\ast}} \leftarrow \widehat{\V}^{(0)}_{j_{\ast}} \cup \{v\}$
	}

	Randomly assign each $v \in [n] \setminus \Gamma$ to one community in $\{\widehat{\V}^{(0)}_i\}_{i\in[k]}$
	
	Output the initial estimate $\widehat{\Z}^{(0)}$ based on $\{\widehat{\V}^{(0)}_i\}_{i\in[k]}$
	
	\textbf{Stage 2: Local refinement steps} \\	
	\For{$v \in [n]$}{
	$\hZv \triangleq \argmax_{i \in [k]} \PP\left(Z_v = i \big|G_2 = g_2, \hZZ = \hzz  \right)$\;
	}
	
\Output{Final estimate $\widehat{\Z} = (\widehat{Z}_1, \widehat{Z}_2, \ldots, \widehat{Z}_n)$}
	\caption{{\sc The two-stage algorithm}}
	\label{algorithm:1}
\end{algorithm}

\subsection{Almost exact recovery via hypergraph spectral clustering (Stage~1)}

The main focus of this subsection is the sub-HSBM $G_1 = ([n], \EE_1)$. We apply a hypergraph spectral clustering method on $G_1$ to obtain an initial estimate of the ground-truth community vector $\Z$, denoted by $\widehat{\Z}^{(0)} = (\widehat{Z}^{(0)}_1, \ldots, \widehat{Z}^{(0)}_n)$.  

Let $\HH = [\mathrm{H}_{ve}]$ be the $n \times \binom{n}{d}$ binary incidence matrix corresponding to $G_1$ such that each entry $\mathrm{H}_{ve} = 1$ if the hyperedge $e \in \EE_1$ and $e$ contains node $v$, and $\mathrm{H}_{ve} = 0$ otherwise. Note that there is an one-to-one mapping between $\HH$ and the observed adjacency tensor $\A$, thus one can obtain $\HH$ from $\A$.  For each node $v \in [n]$, its degree (in $G_1$) is denoted by 
$d_v \triangleq \sum_{e \in \mathcal{E}} \mathrm{H}_{ve}$. Let $\mathbf{D} = \mathrm{diag}(d_1, \ldots, d_n)$ be an $n \times n$ diagonal matrix that represents the degrees of the $n$ nodes. We then define the \emph{hypergraph Laplacian} as 
\begin{align}
\AH \triangleq \HH \HH^\top - \mathbf{D}, \label{eq:L}
\end{align}
where $\AH$ is an $n \times n$ matrix and the $(i,j)$-entry represents the number of hyperedges that contain both node $i$ and node $j$.  To ensure a good performance of the hypergraph spectral clustering method, one typically needs to remove a small fractions of nodes that have significantly higher degrees~\cite{chien2019minimax}  than the average. Thus, we define the set of ``good'' nodes that have degree no larger than a certain threshold $\tau$ as 
\begin{align}
\Gamma \triangleq \{v \in [n]: d_v \le \tau \},
\end{align}
where $\tau$ is set to be $CQ_{\max}\gamma_n$ for some large constant $C > 0$, such that $\tau$ is much larger than the expected degree of every node. 

We apply Stage~1 of Algorithm~\ref{algorithm:1} (lines $2-16$) to obtain an almost exact recovery of the $k$ communities. Initially, we calculate the hypergraph Laplacian $\AH$, and then ``trim'' the rows and columns in $\AH$ that correspond to nodes that do not belong to $\Gamma$. Specifically, for each of the $n$ nodes $v \in [n]$, if $v \notin \Gamma$, we replace all the entries in the $v$-th row and $v$-th column of $\AH$ by all zeros. This yields the \emph{trimmed hypergraph Laplacian} $\AHG$. In addition, we also perform an singular value decomposition (SVD) on $\AHG$ to obtain the optimal rank-$k$ approximation $\AHGk$, i.e., $\mathbf{L}_{\Gamma}^{(k)} = \sum_{i=1}^k \sigma_i \mathbf{u}_i \mathbf{v}_i^T$ where $\sigma_1 \ge \sigma_2 \ge \cdots \ge \sigma_k$ are the largest $k$ singular values, and $\mathbf{u}_i$ and $\mathbf{v}_i$ are the corresponding singular vectors of $\mathbf{L}_{\Gamma}$.

We then perform a clustering algorithm (lines $5-16$) on the columns of $\AHGk$, i.e., the set of column vectors $\{(\AHGk)_v\}_{v \in \Gamma}$. An example of our clustering algorithm is illustrated in Fig.~\ref{fig:clustering}. We first randomly select $\lceil \log n \rceil$ nodes from $\Gamma$ (with replacement) as \emph{reference nodes}, and it can be shown (in Lemma~\ref{lemma:select} below) that each community contains at least one reference node with high probability. This set of reference nodes is denoted by $\Psi$. For each node $v \in \Psi$, we construct a ball $\B_v$ with center $v$ and radius $r \triangleq \frac{\gamma_n^2}{n \log(\gamma_n)}$ which includes all the neighboring nodes (i.e., the nodes in $\B_v$).  Among $\{\B_v\}_{v \in \Psi}$, we find the one that has the largest cardinality, declare $v_1^\ast \triangleq \argmax_{v \in \Psi} |\B_v|$, and set the largest community $\widehat{\V}^{(0)}_1$ to be $\B_{v_1^\ast}$.  To find the second largest community, we remove all the nodes in $\widehat{\V}^{(0)}_1$ and then follow a similar procedure to find the ball with the largest cardinality. That is, we declare $v_2^\ast \triangleq \argmax_{v \in \Psi} |\B_v \setminus \widehat{\V}^{(0)}_1|$, and set the second largest community $\widehat{\V}^{(0)}_2$ to be $\B_{v_2^\ast} \setminus \widehat{\V}^{(0)}_1$. By repeating this procedure for $2 \le j \le k$, we obtain $k$ estimated communities $\widehat{\V}^{(0)}_1, \widehat{\V}^{(0)}_2, \ldots, \widehat{\V}^{(0)}_k$ (lines $7-10$). Furthermore, we assign the nodes belonging to $\Gamma \setminus (\cup_{j \in [k]}\widehat{\V}^{(0)}_j)$ to their nearest communities (lines $11-14$), and the nodes that do not belong to $\Gamma$ to each community randomly (line $15$). Finally, for each node $v \in \widehat{\V}^{(0)}_j$ (for all $j \in [k]$), we set $\widehat{Z}^{(0)}_v = j$.  

\begin{figure}[t]
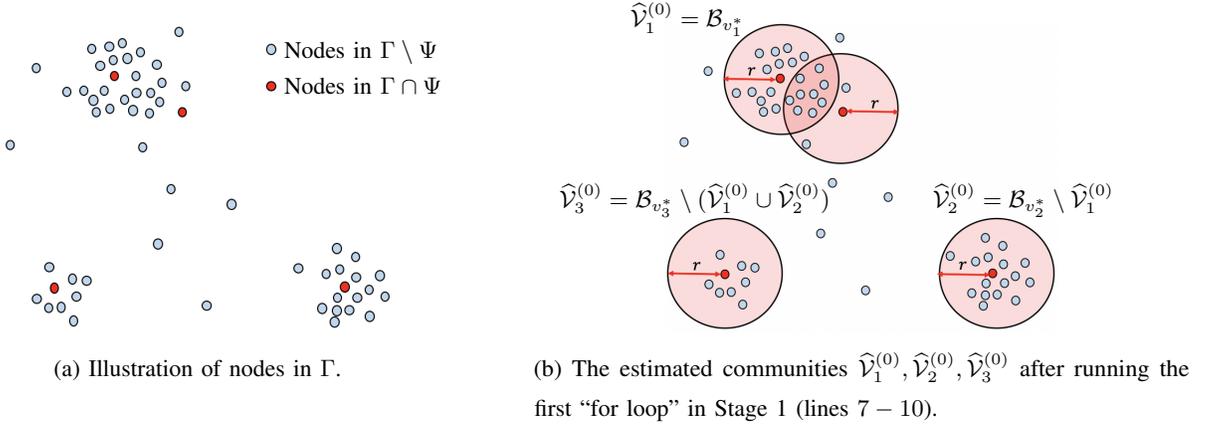

\begin{center}
	\begin{subfigure}[t]{0.48\textwidth}
	\centering
		\begin{overpic}[width=.6\textwidth]{clustering1.PNG}
					\put(72,70){\small Nodes in $\Gamma \setminus \Psi$}
					\put(72,61){\small Nodes in $\Gamma \cap \Psi$}
		\end{overpic}
		\caption{Illustration of nodes in $\Gamma$.}
		\label{fig:1}
	\end{subfigure}
	\begin{subfigure}[t]{0.48\textwidth}
	\centering
		\begin{overpic}[width=.6\textwidth]{clustering2.PNG}
			\put(-9,78){\small $\widehat{\V}^{(0)}_1 = \B_{v_1^{\ast}}$}
			\put(68,32){\small$\widehat{\V}^{(0)}_2 = \B_{v_2^{\ast}} \setminus \widehat{\V}^{(0)}_1$}
			\put(-27,32){\small $\widehat{\V}^{(0)}_3 = \B_{v_3^{\ast}} \setminus (\widehat{\V}^{(0)}_1\cup \widehat{\V}^{(0)}_2)$}
			\vspace{.1in}
		\end{overpic}
		
		\caption{The estimated communities $\widehat{\V}^{(0)}_1, \widehat{\V}^{(0)}_2, \widehat{\V}^{(0)}_3$ after running the first ``for loop'' in Stage~1 (lines $7-10$).}
		\label{fig:2}
	\end{subfigure}
	\caption{An example with $3$ communities. Each node corresponds to a column of $\AHGk$.}
	\label{fig:clustering}
	\end{center}
\end{figure}

The high-level intuition of the analysis of Stage~1 is as follows. Let $\MM \triangleq \E(\AHG)$ be the expected trimmed hypergraph Laplacian, where $\MM$ is identical to $\E(\mathbf{L})$ except that the rows and columns corresponding to nodes that do not belong to $\Gamma$ are set to zeros. Note that $\MM$ is an $n \times n$ matrix of rank at most $k$ when $\big(\p, \{Q_{\T}\}_{\T \in \TT}\big)\notin \Xi$. If nodes $u$ and $v$ are in the same cluster, we have $\MM_u = \MM_v$; otherwise they are far apart in the sense that $\Vert \MM_u - \MM_v \Vert_2^2 = \Omega(\gamma_n^2/n)$ (as shown in Lemma~\ref{lemma:M} below). On the other hand, the sum of the distances between each column $(\AHGk)_v$ and its expectation $\MM_v$ satisfies 
$$\sum_{v \in \Gamma} \Vert (\AHGk)_v - \MM_v \Vert_2^2 = \Vert \AHGk - \MM \Vert_{\mathrm{F}}^2 \stackrel{(a)}{\le} 8k\Vert \AHG - \MM \Vert_{\mathrm{op}}^2 \stackrel{(b)}{=} \mathcal{O}(\gamma_n)$$
with high probability, where (a) follows from the \emph{Eckart–Young–Mirsky theorem} and \emph{Weyl’s inequality}, and (b) is proved by leveraging the random matrix theory (see Lemma~\ref{lemma:op} for details). By a careful analysis of Stage~1, one can show that each node $v$ will be misclassified only if the distance $\Vert (\AHGk)_v - \MM_v \Vert_2^2 = \Omega(\gamma_n^2/n)$ (as proved in Section~\ref{sec:theory_stage1}). Thus, the output of Stage~1 results in a  misclassification of at most $\mathcal{O}(n/\gamma_n)$ nodes. Since $\gamma_n = \omega(1)$, the misclassified proportion $l(\Z,\Z^{(0)})$ tends to zero as $n$ tends to infinity, i.e., almost exact recovery is achieved.

\begin{remark}
{\em The number of  randomly selected reference nodes is set to be $\lceil \log n\rceil$ such that it is large enough to ensure that each community contains at least one reference node (with high probability). Since the radius $r = \gamma_n^2/(n\log\gamma_n)$ is also large enough, one can use the reference node in each community (denoted by $v$) to find most of its community members, via the ball $\B_v$ with center $v$ and radius $r$. 
As a result, we only need to compute $\Theta(n\log n)$ pairwise distances $\Vert (\AHGk)_u - (\AHGk)_v \Vert_2^2$; while some related works~\cite{chien2019minimax} use all the $n$ nodes as reference nodes and thus $\Theta(n^2)$ pairwise distances need to be computed.     } 
\end{remark}

\subsection{Local refinements (Stage 2)}
After obtaining the initial estimate $\widehat{\Z}^{(0)}$, we refine the label of each node $v \in [n]$ based on the observation of the hypergraph $G_2 = g_2$ as well as the estimated labels $\{\hZZ\}$ for the remaining nodes.   For each node $v \in [n]$, we perform a local \emph{maximum a posteriori (MAP)} estimation  as follows: 
\begin{align}
\hZv(g_2,\hzz) \triangleq \argmax_{i \in [k]} \PP\left(Z_v = i \big|G_2 = g_2, \hZZ = \hzz  \right). \label{eq:local}
\end{align}   
This leads to the final estimate $\widehat{\Z} = (\widehat{Z}_1, \ldots, \widehat{Z}_n)$ of the ground-truth community vector.  A detailed analysis of  Stage 2 is provided in Section~\ref{sec:exact}.

\begin{remark}
{\em Instead of computing the posterior probability $\PP(Z_v = i | G_2 = g_2, \widehat{\mathbf{Z}}_{\sim v}^{(0)} = \widehat{\mathbf{z}}_{\sim v}^{(0)} )$ directly, one can compute the probability $\PP(G_2 = g_2 | Z_v=i, \widehat{\mathbf{Z}}_{\sim v}^{(0)} = \widehat{\mathbf{z}}_{\sim v}^{(0)} ) \cdot  p_i$ which is proportional to the posterior probability. Note that the sub-HSBM $G_2$ is generated on the
hyperedge set $\mathcal{W}_2$ of the sub-hypergraph $F_2$ (as defined in Section~IV-A due to graph splitting). The presence or absence of each hyperedge in $\mathcal{W}_2$  can be modelled by a Bernoulli random variable whose success probability is governed by $Z_v$ and $\widehat{\mathbf{z}}_{\sim v}^{(0)}$. Hence, the probability $\PP(G_2 = g_2 | Z_v=i, \widehat{\mathbf{Z}}_{\sim v}^{(0)} = \widehat{\mathbf{z}}_{\sim v}^{(0)} )$ is essentially a product of $|\mathcal{W}_2|$ terms\footnote{Note that it suffices to focus on hyperedges in $\mathcal{W}_2$ that contains node $v$ only, since the presence or absence of other hyperedges does not depend on which community node $v$ belongs to, thus has no influence on the decision rule in~\eqref{eq:local}.} corresponding to the presence or not of hyperedges in $\mathcal{W}_2$, and each term equals either the ``success probability'' or ``one minus the success probability''  depending on whether the hyperedge appears in $g_2$. }
\end{remark}

\begin{remark} \label{remark:agnostic}
	{\em It is straightforward to improve Algorithm~\ref{algorithm:1} to an agnostic algorithm that does not require the knowledge of model parameters. Before performing the local refinement steps, one can estimate the distribution of communities $\widehat{\p} = (\widehat{p}_1, \ldots, \widehat{p}_k)$ based on the estimated community vector $\widehat{\Z}^{(0)}$, and the hyperedge probabilities $\{\widehat{Q}_{\T}\}_{\T \in \TT}$ based on both $\widehat{\Z}^{(0)}$ and the hyperedges in the sub-hypergraph $G_2$. Due to the \emph{law of large numbers} and the fact that $l(\Z,\widehat{\Z}^{(0)})=o(1)$,  these estimates $\widehat{\p}$ and $\{\widehat{Q}_{\T}\}_{\T \in \TT}$ are expected to be close to the true values $\p$ and $\{Q_{\T}\}_{\T \in \TT}$ respectively. As a result, it can be shown that running the local refinement steps in Stage 2 still yields exact recovery. Furthermore, even if the number of communities $k$ is not given \emph{a priori}, one can still apply a \emph{singular value thresholding method} (as employed in~\cite{yun2016optimal}) to the hypergraph Laplacian $\AHG$ to estimate the value of $k$ in Stage~1. }	
\end{remark}

\begin{remark}
{\em We note that the algorithm in~\cite{chien2019minimax} also relies on a hypergraph spectral clustering step plus a local refinement step. However, their hypergraph spectral clustering method is different from ours, with main distinctions described as follows. 
\begin{itemize}
	\item[--] In our work, we use the entire rank-$k$ approximation of the trimmed hypergraph Laplacian $\mathbf{L}_{\Gamma}^{(k)} = \sum_{i=1}^k \sigma_i \mathbf{u}_i \mathbf{v}_i^T$ as the input of our subsequent clustering step (lines 5-15 in Algorithm~1).   
	\item[--] The algorithm in~\cite{chien2019minimax} applies a singular value decomposition to $\mathbf{L}_{\Gamma}$ to obtain the $k$ leading singular vectors $[\mathbf{u}_1,\ldots, \mathbf{u}_k]$, and then apply their subsequent clustering step by representing each of the $n$ node by a reduced $k$-dimensional vector.
\end{itemize}
The advantage of our algorithm is that its accompanying analysis does not involve the $k$-th largest singular value of $\E(\mathbf{L})$, while the theoretical guarantee of the spectral clustering step in~\cite{chien2019minimax} depends on the $k$-th largest singular value. In the setting of~\cite{chien2019minimax} (described in Section~\ref{sec:chien}), their algorithm works well because the $k$-th largest singular value is reasonably large (as shown in~\cite[Lemma~5.1]{chien2019minimax}). However, in our more general setting, the $k$-th largest singular value is not always large enough, which prohibits the applicability of the algorithm in~\cite{chien2019minimax}. In contrast, our algorithm is applicable to a larger set of parameters, i.e., as long as the second-order degree profile condition is satisfied.
 } 
\end{remark}

\section{Theoretical Guarantees of Algorithm~\ref{algorithm:1} (Theorem~\ref{thm:achievability})} \label{sec:proof}

In this section, we prove that as long as $\min_{i,j\in [k]: i \ne j} D_+(i,j) > 1$ and $\big(\p, \{Q_{\T}\}_{\T \in \TT}\big)\notin \Xi$, applying Algorithm~\ref{algorithm:1} on the observed hypergraph $G$ ensures $l(\Z, \widehat{\Z}) = 0$ with high probability for sufficiently large $n$ (i.e., exact recovery).

First, we note that  with high probability, the size of each community $\V_j = \{v \in [n]: Z_v = j\}$ is close to $np_j$ for all $j\in[k]$.  This is stated in Lemma~\ref{lemma:size} below and can be proved by applying the Chernoff bound~

\begin{lemma} \label{lemma:size}
Fix a  constant $\delta \in (0,1)$ which can be chosen to be arbitrarily small. We say that the length-$n$ vector $\Z\in\mathcal{A}_{Z}$ (where $\mathcal{A}_{Z}$ is referred to as the typical set of $\Z$) if the communities $\{\V_j\}_{j \in [k]}$ associated with $\Z$ satisfy 
\begin{align}
	 \left(1 - n^{-\frac{1}{2}+\frac{\delta}{2}} \right) p_j n \le |\V_j| \le \left(1 + n^{-\frac{1}{2}+\frac{\delta}{2}} \right) p_j n, \quad \text{for all} \ j \in [k]. \label{eq:cluster_size}
	 \end{align}
Then, we have $\PP(\Z \in \mathcal{A}_{Z}) \ge 1 - \exp(-\Theta(n^{\delta}))$.
\end{lemma}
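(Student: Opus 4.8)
The plan is to apply a standard Chernoff/Hoeffding concentration argument to each of the $k$ community sizes $|\V_j| = \sum_{v \in [n]} \mathbbm{1}\{Z_v = j\}$ separately, and then take a union bound over $j \in [k]$. First I would observe that since each node $v$ is independently assigned to community $j$ with probability $p_j$, the random variable $|\V_j|$ is a sum of $n$ i.i.d.\ Bernoulli($p_j$) random variables, so $\E[|\V_j|] = n p_j$. Fix $j \in [k]$ and write $\varepsilon_n \triangleq n^{-\frac{1}{2}+\frac{\delta}{2}}$, which tends to $0$ since $\delta < 1$. The multiplicative Chernoff bound gives
\begin{align}
\PP\big(|\V_j| \ge (1+\varepsilon_n) n p_j\big) \le \exp\!\left(-\frac{\varepsilon_n^2 \, n p_j}{3}\right), \qquad
\PP\big(|\V_j| \le (1-\varepsilon_n) n p_j\big) \le \exp\!\left(-\frac{\varepsilon_n^2 \, n p_j}{2}\right).
\end{align}
Substituting $\varepsilon_n^2 n p_j = p_j \, n^{\delta}$ and using that $p_j$ is a constant not scaling with $n$ (and that $\min_i p_i > 0$ since all $k$ communities are nonempty by assumption), both right-hand sides are at most $\exp(-\Theta(n^{\delta}))$.

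Next I would combine the two tail bounds to get $\PP\big(\,|\V_j| \notin [(1-\varepsilon_n)np_j,\,(1+\varepsilon_n)np_j]\,\big) \le 2\exp(-\Theta(n^\delta))$ for each fixed $j$, and then apply a union bound over the $k$ communities:
\begin{align}
\PP(\Z \notin \mathcal{A}_Z) \le \sum_{j \in [k]} \PP\big(|\V_j| \notin [(1-\varepsilon_n)np_j,(1+\varepsilon_n)np_j]\big) \le 2k \exp\!\big(-\Theta(n^{\delta})\big).
\end{align}
Since $k$ is a constant that does not scale with $n$, the prefactor $2k$ is absorbed into the $\Theta(\cdot)$ notation in the exponent (more precisely, $2k\exp(-cn^\delta) \le \exp(-c'n^\delta)$ for any $c' < c$ and large $n$), yielding $\PP(\Z \in \mathcal{A}_Z) \ge 1 - \exp(-\Theta(n^{\delta}))$ as claimed. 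This is entirely routine; there is no real obstacle, and the only mild care needed is to confirm that the choice $\varepsilon_n = n^{-1/2+\delta/2}$ makes the Chernoff exponent scale as $n^{\delta}$ (not faster or slower) and that constants depending only on $k$, $\delta$, and $\min_j p_j$ do not interfere with the stated $\exp(-\Theta(n^{\delta}))$ bound.
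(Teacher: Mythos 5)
Your proof is correct and follows exactly the route the paper indicates: the paper simply remarks that Lemma~\ref{lemma:size} ``can be proved by applying the Chernoff bound,'' and your multiplicative Chernoff bound on each $|\V_j|$ with $\varepsilon_n = n^{-1/2+\delta/2}$ followed by a union bound over the $k$ communities is precisely that argument, with the exponent correctly scaling as $\Theta(n^\delta)$.
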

Therefore, one can focus on typical ground-truth community vectors $\z \in \mathcal{A}_{Z}$ in  the following analysis. 

\subsection{Theoretical guarantees of Stage~1} \label{sec:theory_stage1}

For a fixed ground-truth community vector $\z \in \mathcal{A}_{Z}$, we first introduce an artificial $d$-HSBM $\widetilde{G}_1$ which is generated with respect to the ground-truth community vector $\z$ and hyperedge probabilities $\{\frac{Q_{\T}\gamma_n}{n^{d-1}}\}_{\T \in \TT}$. Note that the generation process of $\widetilde{G}_1$ is equivalent to first generating a sub-hypergraph $F_1$ (with splitting parameter $\gamma_n/\log n$) and then generating a sub-HSBM $G_1$ on the hyperedge set of $F_1$ (with hyperedge probabilities $\{\frac{Q_{\T}\log n}{n^{d-1}}\}_{\T \in \TT}$). Thus, we investigate the misclassification proportion $l(\z, \widehat{\Z}^{(0)})$ based on the random hypergraph $\widetilde{G}_1$.

\begin{theorem}[Theoretical guarantee of Stage~1]\label{thm:weak}
Suppose the model parameters $\big(\p, \{Q_{\T}\}_{\T \in \TT}\big)\notin \Xi$. For any fixed  $\z \in \mathcal{A}_{Z}$, there exist vanishing sequences $\{\epsilon_n\}$ and $\{\eta_n\}$ (which depend on $\{\gamma_n\}$) such that with probability at least $1- \epsilon_n$ over the generation of $\widetilde{G}_1$, running Stage~1 of Algorithm~\ref{algorithm:1} ensures that $l(\z, \widehat{\Z}^{(0)}) \le \eta_n$, i.e., almost exact recovery is achieved.
\end{theorem}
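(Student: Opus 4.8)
The plan is to make rigorous the ``signal versus noise'' dichotomy sketched after the theorem, working throughout with the artificial model $\widetilde{G}_1$, in which every node has expected degree $\Theta(\gamma_n)$, so that $\gamma_n$ plays the role that $\log n$ plays in the un-split model. Write $\MM \triangleq \E(\AHG)$ for the expectation of the trimmed hypergraph Laplacian, which is an $n\times n$ matrix of rank at most $k$. The first block of work is structural. I would show that $\MM$ is (up to $o(n)$ rows and columns affected by the random trimming set $\Gamma$, whose threshold $\tau = CQ_{\max}\gamma_n$ sits well above the mean degree) a block matrix whose column $\MM_v$ depends on $v$ only through its community $z_v$, with $\MM_{uv} = \E[\#\{\text{hyperedges containing both } u \text{ and } v\}] = \Theta(\gamma_n/n)$ proportional to the second-order degree profile entry $\sum_{\m\in\M: m_s\ge1} m_s\mu_{\m\oplus z_u}$ at $s = z_v$. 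I would then prove the column-separation estimate (Lemma~\ref{lemma:M}) — the step that consumes the hypothesis $(\p,\{Q_{\T}\}_{\T\in\TT})\notin\Xi$: if $z_u\ne z_v$, their second-order degree profiles disagree in some coordinate $s$, and the $|\V_s| = \Theta(np_s)$ indices $w\in\V_s$ each contribute $(\MM_{uw}-\MM_{vw})^2 = \Theta(\gamma_n^2/n^2)$, so $\Vert\MM_u-\MM_v\Vert_2^2 = \Omega(\gamma_n^2/n)$; if $z_u = z_v$ then $\MM_u = \MM_v$. Combined with Lemma~\ref{lemma:size}, which gives $|\V_j| = \Theta(n)$ for the typical $\z$, this pins the geometry of the point cloud $\{\MM_v\}$: exactly $k$ clusters of linear size, pairwise separated by $\Omega(\gamma_n^2/n)$.

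The second block is the concentration estimate $\Vert\AHGk - \MM\Vert_{\mathrm F}^2 = \mathcal{O}(\gamma_n)$ with probability $1-\epsilon_n$. I would first bound the operator-norm deviation $\Vert\AHG - \MM\Vert_{\mathrm{op}} = \mathcal{O}(\sqrt{\gamma_n})$; the trimming to $\Gamma$ is precisely what makes this hold, since the raw Laplacian $\AH$ has a few heavy rows that otherwise inflate its spectral norm. Given that bound, because $\MM$ has rank at most $k$ and $\AHGk$ is a best rank-$k$ approximation of $\AHG$, Weyl's inequality gives $\Vert\AHGk-\AHG\Vert_{\mathrm{op}} = \sigma_{k+1}(\AHG) \le \Vert\AHG-\MM\Vert_{\mathrm{op}}$, hence $\Vert\AHGk-\MM\Vert_{\mathrm{op}}\le 2\Vert\AHG-\MM\Vert_{\mathrm{op}}$, and since $\mathrm{rank}(\AHGk-\MM)\le 2k$ we get $\Vert\AHGk-\MM\Vert_{\mathrm F}^2 \le 2k\Vert\AHGk-\MM\Vert_{\mathrm{op}}^2 = \mathcal{O}(\gamma_n)$. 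I expect the operator-norm bound to be the hard part: the entries $\{L_{uv}\}$ of the hypergraph Laplacian are \emph{not} independent across pairs (two pairs lying in a common hyperedge are coupled), so a direct matrix-Bernstein argument is insufficient; I would instead adapt the combinatorial discrepancy / regularization arguments of Feige--Ofek and Le--Levina--Vershynin from graphs to the hypergraph setting, controlling sums of the tensor entries $A_{\bb}$ over structured index sets and taking a union bound over the candidate trimming sets, while also arguing that trimming discards only $o(n)$ nodes (so $|[n]\setminus\Gamma| = o(n)$ with high probability).

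The third block assembles these. Call $v\in\Gamma$ \emph{bad} if $\Vert(\AHGk)_v - \MM_v\Vert_2^2 > r/4$, where $r = \gamma_n^2/(n\log\gamma_n)$; by Markov applied to the Frobenius bound, the number of bad nodes is at most $4\Vert\AHGk-\MM\Vert_{\mathrm F}^2/r = \mathcal{O}(n\log\gamma_n/\gamma_n) = o(n)$. For the clustering stage I would invoke Lemma~\ref{lemma:select} (the set $\Psi$ of $\lceil\log n\rceil$ random reference nodes hits every community, a Chernoff/coupon-collector argument using $|\V_j| = \Theta(n)$). The key numerology is that $r$ is simultaneously $\omega(\gamma_n/n)$ and $o(\gamma_n^2/n)$: the former implies that for a non-bad reference node $v$ in community $j$, the ball $\B_v$ contains every non-bad node of community $j$ (triangle inequality: two non-bad nodes of the same community are within $\ell_2^2$-distance $r$); the latter implies $\B_v$ contains no non-bad node of any other community. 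Hence each $\B_v$, $v\in\Psi$, is either a single community up to $o(n)$ bad nodes, or has size $o(n)$, so the greedy peeling of largest remaining balls (lines 7--10) recovers the $k$ communities up to $o(n)$ total error, robustly regardless of how the community sizes compare. The finishing touches — the $o(n)$ nodes of $\Gamma$ not covered by any ball (assigned to the nearest center, each charged to the bad set) and the $o(n)$ nodes outside $\Gamma$ (assigned arbitrarily) — each add $o(n)$ misclassifications, so $l(\z,\widehat{\Z}^{(0)})\le \eta_n$ with $\eta_n = \mathcal{O}(\log\gamma_n/\gamma_n) = o(1)$ and $\epsilon_n$ the total failure probability of the concentration bound and auxiliary lemmas, also $o(1)$. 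The main obstacle throughout is the operator-norm concentration of the trimmed hypergraph Laplacian with its dependent entries and the interaction with the random trimming set; the rest is bookkeeping around that estimate.
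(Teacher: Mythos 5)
Your proposal follows essentially the same route as the paper's proof: the column-separation lemma driven by the $\notin\Xi$ hypothesis, the operator-norm concentration of the trimmed Laplacian upgraded to a Frobenius bound $\mathcal{O}(\gamma_n)$ via Eckart--Young and Weyl on the rank-$\le 2k$ difference, a Markov count of ``bad'' nodes against the radius $r/4$, the reference-set argument, and the greedy ball-peeling analysis. The only cosmetic difference is that you propose to prove the operator-norm bound from scratch via Feige--Ofek/Le--Levina--Vershynin-style regularization, whereas the paper adapts it directly from Lemma D.4 of~\cite{chien2019minimax}; the rest of your outline, including the numerology $\gamma_n/n \ll r \ll \gamma_n^2/n$, matches the paper's argument.
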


Theorem~\ref{thm:weak} is proved  in the rest of this section, and in the following we  assume that $\big(\p, \{Q_{\T}\}_{\T \in \TT}\big)\notin \Xi$.

Note that with high probability over the generation of $\widetilde{G}_1$, the degrees of most nodes in $[n]$ are smaller than the threshold $\tau$, thus only a vanishing fraction of nodes (at most $\mathcal{O}(n/\gamma_n)$ nodes) is trimmed. This result is adapted from~\cite[Lemma D.3]{chien2019minimax} and stated below.

\begin{lemma}[Adapted from Lemma D.3 in~\cite{chien2019minimax}] \label{lemma:gamma}
There exists a large constant $C > 0$ such that if we set $\tau = C \Qm\gamma_n$, then with probability at least $1 - \exp(-C'n)$ (for some constant $C' > 0$), the cardinality of the set $\Gamma$ satisfies $|\Gamma| \ge n(1 - (1/\tau))$.
\end{lemma}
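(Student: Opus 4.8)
The plan is to control the degree of each individual node in $\widetilde{G}_1$ via a Chernoff/Bernstein-type concentration bound, and then deduce that only a vanishing fraction of nodes can exceed the threshold $\tau = C\Qm\gamma_n$ by a first-moment (Markov-type) argument on the number of bad nodes. First I would fix $\z \in \mathcal{A}_Z$ and a node $v \in [n]$, and write $d_v = \sum_{e \ni v} A_e^{(1)}$, where the sum is over the $\binom{n-1}{d-1}$ hyperedges of $\W$ containing $v$, and $A_e^{(1)} \in \{0,1\}$ is the indicator that $e$ is present in $\widetilde{G}_1$. These indicators are mutually independent (this is exactly why we pass to $\widetilde{G}_1$ rather than working with $G_1$ conditioned on $F_1$), each Bernoulli with parameter $Q_{\T(e)}\gamma_n/n^{d-1} \le \Qm\gamma_n/n^{d-1}$. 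Hence $\E[d_v] \le \Qm\gamma_n \binom{n-1}{d-1}/n^{d-1} \le \Qm\gamma_n/(d-1)! =: \lambda$, which is $\Theta(\gamma_n)$.

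Next I would apply the standard multiplicative Chernoff bound for sums of independent $\{0,1\}$ random variables: for any $a \ge 1$,
\begin{align}
\PP\big(d_v \ge (1+a)\lambda\big) \le \Big(\frac{e^{a}}{(1+a)^{1+a}}\Big)^{\lambda} \le \exp\big(-\tfrac{a}{3}\lambda\big) \quad \text{for } a \ge 1.
\end{align}
Choosing the constant $C$ large enough that $C \ge 2(1+a)/(d-1)!$ with $a$ a fixed large constant (say the resulting exponent is at least $\gamma_n$), we obtain $\PP(d_v > \tau) \le \exp(-c\gamma_n)$ for a constant $c > 0$, uniformly over $v$. Let $N \triangleq |\{v : d_v > \tau\}| = |[n] \setminus \Gamma|$. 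By linearity of expectation, $\E[N] \le n\exp(-c\gamma_n)$, which is $o(n)$ since $\gamma_n = \omega(1)$; in fact it is far smaller than $n/\tau = \Theta(n/\gamma_n)$. A second application of a concentration inequality — either Markov's inequality with a polynomial slack, or better, a Chernoff bound on $N$ itself (the events $\{d_v > \tau\}$ are not independent across $v$, but one can either use the fact that they are determined by independent hyperedge indicators and apply a bounded-differences/McDiarmid argument, or simply bound $N$ crudely) — yields $N \le n/\tau$, equivalently $|\Gamma| \ge n(1 - 1/\tau)$, except with probability $\exp(-C'n)$.

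The main obstacle is getting the $\exp(-C'n)$ (rather than merely $o(1)$) probability bound on the \emph{number} of high-degree nodes, since the per-node degree events are dependent. I would handle this by noting that $N$ is a function of the independent hyperedge-presence indicators $\{A_e^{(1)}\}_{e \in \W}$, and changing one $A_e^{(1)}$ changes $N$ by at most $d$; McDiarmid's inequality then gives $\PP(N \ge \E[N] + t) \le \exp(-2t^2/(d^2 |\W|))$, which with $t = \Theta(n/\gamma_n)$ is too weak. The cleaner route, and the one I would actually pursue, is to bound $N$ directly: since $\{d_v > \tau\}$ for distinct $v$ in a fixed set $S$ with $|S| = m$ forces at least $m\tau/d$ \emph{distinct} hyperedges to be present among those incident to $S$ (each present hyperedge contributes to at most $d$ of the $d_v$), and the expected number of present hyperedges incident to any fixed $m$-set is $O(m n^{d-2}\gamma_n/(d-2)!) + \ldots$, a union bound over the $\binom{n}{m}$ choices of $S$ with $m = \lceil n/\tau \rceil$ gives the desired exponentially small failure probability; the combinatorial bookkeeping here mirrors~\cite[Lemma D.3]{chien2019minimax}, which is why the statement is quoted as an adaptation thereof. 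Once $|\Gamma| \ge n(1-1/\tau)$ is established, the remainder of the analysis of Stage~1 may proceed on the trimmed Laplacian $\AHG$.
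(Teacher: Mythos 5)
The paper does not actually prove this lemma in-house; it is stated as an adaptation of Lemma~D.3 of the cited reference, so there is no internal proof to compare against. Your reconstruction is, in substance, the right argument and the one that such lemmas standardly require: the per-node Chernoff bound alone cannot deliver an $\exp(-C'n)$ failure probability, and you correctly identify that the way to get it is a union bound over candidate sets $S$ of $m=\lceil n/\tau\rceil$ high-degree nodes, using the observation that if every node of $S$ has degree exceeding $\tau$ then at least $m\tau/d$ \emph{distinct} hyperedges incident to $S$ must be present. The quantitative bookkeeping works: the number of hyperedges meeting $S$ is at most $m\binom{n-1}{d-1}$, each present independently with probability at most $\Qm\gamma_n/n^{d-1}$, so the expected number of present hyperedges incident to $S$ is $O(m\Qm\gamma_n)=O(n/C)$, which for large $C$ is far below the threshold $m\tau/d\ge n/d$; a binomial upper-tail bound then gives a per-set probability $\exp(-\Theta(n))$, and since $\binom{n}{m}\le\exp(O((n/\tau)\log\tau))=\exp(o(n))$ the union bound closes with room to spare.

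Two small corrections. First, your stated expected count ``$O(mn^{d-2}\gamma_n/(d-2)!)$'' is dimensionally off; the correct figure is $O(m\gamma_n\Qm/(d-1)!)$, obtained as (number of incident hyperedges) $\times$ (presence probability) $\approx m\binom{n-1}{d-1}\cdot\Qm\gamma_n/n^{d-1}$. This does not affect the conclusion, since the only property needed is that this expectation is a factor $\Theta(1/C)$ below the threshold $m\tau/d$. Second, the opening single-node Chernoff computation and the McDiarmid digression are dispensable: the set-based union bound is self-contained and is what actually yields the $1-\exp(-C'n)$ guarantee, so in a polished write-up you could lead with it directly.
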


Furthermore, let $\VG_j \triangleq \V_j \cap \Gamma$ for all $j \in [k]$. Lemma~\ref{lemma:gamma} also implies that with high probability,  
\begin{align}
    (1-o(1)) p_j n \le |\VG_j| \le (1+o(1)) p_j n, \quad \text{for all} \ j \in [k].
\end{align}
We then focus on the remaining nodes in the set $\Gamma$. First recall that $\MM = \E(\AHG)$ is a matrix of rank at most $k$, and satisfies  $\MM_u = \MM_v$ if $u$ and $v$ belong to the same community (where $u,v \in \Gamma$). Lemma~\ref{lemma:M} below shows that the distance between the two columns $\MM_u$ and $\MM_v$ scales as $\Omega(\gamma_n^2/n)$ if $u$ and $v$ belong to different communities. 

\begin{lemma} \label{lemma:M}
Suppose $u, v \in \Gamma$. When the second-order degree profile condition is satisfied (i.e., $\big(\p, \{Q_{\T}\}_{\T \in \TT}\big)\notin \Xi$), we have
\begin{align}
    \Vert \MM_u - \MM_v \Vert_2^2 = \begin{cases} 0, &\text{if } u \text{ and }  v \  \text{belong to the same community}; \\
    \Omega(\gamma_n^2/n), &\text{otherwise}. \label{eq:M}
    \end{cases}
\end{align}
\end{lemma}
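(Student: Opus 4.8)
The plan is to compute the entries of $\MM$ explicitly, exhibit the block structure of $\MM$ restricted to $\Gamma\times\Gamma$, and then read off both cases; the separation in the second case is precisely where the hypothesis $\big(\p,\{Q_{\T}\}_{\T\in\TT}\big)\notin\Xi$ enters (through the second-order degree profile).

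First I would write down $\MM_{wu}$ for $w\ne u$, both in $\Gamma$. By the generative model of $\widetilde{G}_1$, $\MM_{wu}$ is the expected number of hyperedges containing both $w$ and $u$; enumerating over the community assignment $\mathbf{q}\in\mathbb{N}^k$ (with $\sum_l q_l=d-2$) of the remaining $d-2$ nodes of such a hyperedge gives
\begin{align}
\MM_{wu}=\frac{\gamma_n}{n^{d-1}}\sum_{\mathbf{q}:\,\sum_l q_l=d-2}\Bigg(\prod_{l\in[k]}\binom{|\V_l|-\mathbbm{1}\{l=Z_u\}-\mathbbm{1}\{l=Z_w\}}{q_l}\Bigg)Q_{\mathbf{q}\oplus Z_u\oplus Z_w},\nonumber
\end{align}
where $\mathbf{q}\oplus Z_u\oplus Z_w\in\TT$ denotes $\mathbf{q}$ with coordinates $Z_u$ and $Z_w$ each incremented by one, while $\MM_{wu}=0$ whenever $w=u$ or $w\notin\Gamma$. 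Since $\z\in\mathcal{A}_Z$, Lemma~\ref{lemma:size} gives $|\V_l|=np_l(1+o(1))$, and together with $\binom{np_l(1+o(1))}{q_l}=\frac{(np_l)^{q_l}}{q_l!}(1+o(1))$ and $\sum_l q_l=d-2$ this simplifies to $\MM_{wu}=\frac{\gamma_n}{n}\,\beta_{Z_u,Z_w}(1+o(1))$, where
\begin{align}
\beta_{i,s}\triangleq\sum_{\mathbf{q}:\,\sum_l q_l=d-2}\frac{1}{\prod_l q_l!}\Big(\prod_l p_l^{q_l}\Big)\,Q_{\mathbf{q}\oplus i\oplus s}\nonumber
\end{align}
is a positive constant independent of $n$ with $\beta_{i,s}=\beta_{s,i}$. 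Reindexing by $\m=\mathbf{q}+\mathbf{e}_s\in\M$ identifies $p_s\,\beta_{i,s}$ with the $s$-th entry $\sum_{\m\in\M:\,m_s\ge1}m_s\,\mu_{\m\oplus i}$ of the second-order degree profile of community $i$ (up to a $1+o(1)$ factor); this is the link we need to $\Xi$, and in particular $\MM_{wu}=\Theta(\gamma_n/n)$ uniformly.

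Next, the block structure: for $w\ne u$ with $w,u\in\Gamma$, the value $\MM_{wu}$ depends only on the pair $(Z_u,Z_w)$. Hence if $u,v\in\Gamma$ lie in the same community, $\MM_u$ and $\MM_v$ agree in every coordinate except the two indexed by $u$ and $v$ (where one column carries a $0$ on the diagonal and the other the within-community value of order $\gamma_n/n$); this yields $\|\MM_u-\MM_v\|_2^2=O(\gamma_n^2/n^2)=o(\gamma_n^2/n)$, which the lemma records as the first case since it is of strictly smaller order than the $\gamma_n^2/n$ appearing below (and is literally $0$ off the diagonal). If instead $Z_u=a\ne b=Z_v$, I would group the coordinates $w\in\Gamma$ by $Z_w=s$; using $|\VG_s|=np_s(1+o(1))$ for every $s\in[k]$,
\begin{align}
\|\MM_u-\MM_v\|_2^2=\frac{\gamma_n^2}{n}\sum_{s\in[k]}p_s\big(\beta_{a,s}-\beta_{b,s}\big)^2(1+o(1))+O\!\left(\frac{\gamma_n^2}{n^2}\right).\nonumber
\end{align}

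Finally, since $\big(\p,\{Q_{\T}\}_{\T\in\TT}\big)\notin\Xi$, for the pair $a\ne b$ there is some $s^{\ast}\in[k]$ with $\sum_{\m:m_{s^{\ast}}\ge1}m_{s^{\ast}}\mu_{\m\oplus a}\ne\sum_{\m:m_{s^{\ast}}\ge1}m_{s^{\ast}}\mu_{\m\oplus b}$, equivalently $\beta_{a,s^{\ast}}\ne\beta_{b,s^{\ast}}$; hence $\sum_{s\in[k]}p_s(\beta_{a,s}-\beta_{b,s})^2\ge p_{s^{\ast}}(\beta_{a,s^{\ast}}-\beta_{b,s^{\ast}})^2=:c_{a,b}>0$. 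As $c\triangleq\min_{a\ne b}c_{a,b}$ is a positive constant not depending on $n$, we get $\|\MM_u-\MM_v\|_2^2\ge\frac{c}{2}\cdot\frac{\gamma_n^2}{n}=\Omega(\gamma_n^2/n)$ for all large $n$, completing the second case. I expect the main obstacle to be the very first step: turning the binomial sum defining $\MM_{wu}$ into the clean form $\frac{\gamma_n}{n}\beta_{Z_u,Z_w}(1+o(1))$ uniformly over all pairs (including the subcase $Z_u=Z_w$, where two nodes are deleted from the same community) and correctly matching $\beta_{i,s}$ to the second-order degree profile, so that the constant-order gap supplied by $\notin\Xi$ truly dominates the accumulated $1+o(1)$ errors; treating $\Gamma$ as a fixed (conditioned-on) set is a minor point that causes no real difficulty for this structural computation.
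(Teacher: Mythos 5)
Your proof is correct and follows essentially the same route as the paper's: both identify the entries of $\MM$ as expected co-degrees, relate the per-community column sums to the second-order degree profiles, and invoke $\big(\p,\{Q_{\T}\}_{\T\in\TT}\big)\notin\Xi$ to extract a constant gap for some community $s^{\ast}$, which yields $\Omega(\gamma_n^2/n)$ over the $\Theta(n)$ coordinates $w\in\V_{s^{\ast}}$ (the paper simply lower-bounds by restricting the sum to those coordinates rather than summing over all $s$ as you do). Your observation that the same-cluster case is really $O(\gamma_n^2/n^2)$ rather than literally $0$, because of the two coordinates indexed by $u$ and $v$ themselves, is a fair refinement of the paper's statement and is harmless for the downstream argument.
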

The proof of Lemma~\ref{lemma:M} can be found in Appendix~\ref{appendix:M}. Lemma~\ref{lemma:op} below shows that with high probability, both the operator norm and the Frobenius norm of the difference between $\AHG$ and $\MM$ can be appropriately upper bounded. 

\begin{lemma} \label{lemma:op}
For any constant $C'_1 > 0$, there exists some constant $C_1 > 0$ such that with probability at least $1 - n^{-C_1'}$, 
\begin{align}
\Vert \AHG - \MM \Vert_{\mathrm{op}} \le C_1\cdot \left[\sqrt{\gamma_n\Qm} + \sqrt{\tau} + \frac{\gamma_n\Qm}{\sqrt{\gamma_n\Qm}+ \sqrt{\tau}}  \right]. \label{eq:op}
\end{align}
As a result, the Frobenius norm between $\MM$ and $\AHGk$ (the rank-$k$ approximation of $\AHG$) can be upper-bounded as 
\begin{align}
    \Vert \AHGk - \MM \Vert_{\mathrm{F}}^2  \le 8kC_1^2\cdot \left[\sqrt{\gamma_n\Qm} + \sqrt{\tau} + \frac{\gamma_n\Qm}{\sqrt{\gamma_n\Qm}+ \sqrt{\tau}}  \right]^2 = \mathcal{O}(\gamma_n). \label{eq:f}
\end{align}
\end{lemma}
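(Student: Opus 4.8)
The plan is to split Lemma~\ref{lemma:op} into its two parts: first the operator norm bound \eqref{eq:op}, then the Frobenius norm consequence \eqref{eq:f}. For the operator norm bound, I would write $\AHG - \MM = (\AH_\Gamma - \E[\AH_\Gamma])$ where $\AH_\Gamma = \HH\HH^\top - \mathbf{D}$ restricted to the rows/columns in $\Gamma$. Unwinding the definition, the $(u,v)$-entry of $\AHG$ for $u \ne v$, $u,v\in\Gamma$, is $\sum_{e \ni u,v} \mathbbm{1}\{e \in \EE_1\}$, i.e.\ the number of hyperedges containing both $u$ and $v$ (and the diagonal is zeroed out by the $-\mathbf{D}$ term). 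So $\AHG$ is a sum over hyperedges $e$ of rank-one-ish contributions: each present hyperedge $e = \{w_1,\dots,w_d\}$ contributes $1$ to each off-diagonal pair inside $e\cap\Gamma$. The key is that these are sums of \emph{independent} terms (one Bernoulli per hyperedge) after conditioning on $\z \in \mathcal{A}_Z$, with the trimming to $\Gamma$ controlling the maximum row sum (degree) by $\tau$.

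For the concentration itself, I would invoke a Bernstein/matrix-concentration-type bound for the spectral norm of a centered random symmetric matrix whose entries are sums of independent bounded variables, with row sums bounded by $\tau$ on the trimmed set and expected row sums of order $\gamma_n \Qm$. This is exactly the kind of bound proved in the hypergraph spectral-clustering literature; concretely I expect to cite or adapt a result of the form: for a random matrix $\A$ with independent (up to symmetry) entries $A_{uv}\in[0,1]$, $\E A_{uv}=p_{uv}$, after zeroing rows/columns of degree exceeding $\tau \gtrsim \max_u \sum_v p_{uv}$, one has $\|\A_\Gamma - \E[\A]_\Gamma\|_{\mathrm{op}} \lesssim \sqrt{\tau}$ with probability $1 - n^{-c}$ — but here each "entry" is itself a co-occurrence count, which is a sum of $\Theta(n^{d-2})$ Bernoullis. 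Thus the relevant "edge" random variables are the hyperedge indicators $\mathbbm{1}\{e\in\EE_1\}$, each appearing in $\binom{d}{2}$ entries, each with variance proxy $\asymp \Qm\gamma_n/n^{d-1}$, and I would group the hyperedge-sum into a sum of independent symmetric rank-$\le d$ matrices and apply matrix Bernstein. The three terms $\sqrt{\gamma_n\Qm}$, $\sqrt{\tau}$, $\gamma_n\Qm/(\sqrt{\gamma_n\Qm}+\sqrt{\tau})$ correspond respectively to the fluctuation scale of well-behaved nodes, the truncation level, and the Bernstein "variance vs.\ range" crossover term — this decomposition matches standard treatments (e.g.\ Feige--Ofek-style or the analysis underlying~\cite{chien2019minimax}), so I would cite the appropriate lemma there and verify the parameters plug in correctly.

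For the Frobenius norm consequence \eqref{eq:f}, the argument is the standard one: $\AHGk$ is the best rank-$k$ approximation of $\AHG$ and $\MM$ has rank $k$, so $\AHGk - \MM$ has rank at most $2k$, hence $\|\AHGk - \MM\|_{\mathrm{F}}^2 \le 2k \|\AHGk - \MM\|_{\mathrm{op}}^2$. Then by the triangle inequality $\|\AHGk - \MM\|_{\mathrm{op}} \le \|\AHGk - \AHG\|_{\mathrm{op}} + \|\AHG - \MM\|_{\mathrm{op}} \le 2\|\AHG - \MM\|_{\mathrm{op}}$, where the last step uses that $\|\AHGk - \AHG\|_{\mathrm{op}} = \sigma_{k+1}(\AHG) \le \|\AHG - \MM\|_{\mathrm{op}}$ since $\MM$ has rank $k$ (Weyl's inequality). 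Combining gives $\|\AHGk - \MM\|_{\mathrm{F}}^2 \le 8k\|\AHG - \MM\|_{\mathrm{op}}^2$, and substituting \eqref{eq:op} yields the stated constant $8kC_1^2$. Finally, recalling $\tau = C\Qm\gamma_n$ and that $\Qm, k, d$ are constants, all three bracketed terms are $\Theta(\sqrt{\gamma_n})$, so the bracket squared is $\Theta(\gamma_n)$, giving the $\mathcal{O}(\gamma_n)$ claim.

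The main obstacle I anticipate is the operator-norm concentration bound \eqref{eq:op}: unlike the graph case, the entries of $\AHG$ are correlated co-occurrence counts (two entries sharing a common hyperedge are not independent), so one cannot directly apply an entrywise-independent matrix concentration result. The right move is to organize $\AHG - \MM$ as a sum over the \emph{independent} hyperedge indicators — one independent summand per potential hyperedge $e$, each a fixed symmetric matrix of operator norm $O(1)$ scaled by a centered Bernoulli — and then apply matrix Bernstein, tracking that the matrix variance parameter is $\asymp \gamma_n \Qm$ (coming from expected degrees) while the per-term operator norm (after truncation to $\Gamma$) is $O(1)$; the truncation to $\Gamma$ is what prevents a few high-degree nodes from blowing up the bound, contributing the $\sqrt{\tau}$ term. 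Getting the truncation argument to interface cleanly with matrix Bernstein (since truncation destroys exact independence of the summands) is the delicate point, and I would handle it by the now-standard two-step argument: bound $\|\AHG - \E[\AHG]_\Gamma\|_{\mathrm{op}}$ on the event that $|\Gamma|$ is large (Lemma~\ref{lemma:gamma}) plus a union bound over which rows get trimmed, essentially reproducing the proof of~\cite[Lemma D.3 / Theorem 3.something]{chien2019minimax} with our general hyperedge probabilities substituted in.
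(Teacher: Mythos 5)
Your proposal is correct and follows essentially the same route as the paper: the paper likewise obtains the operator-norm bound \eqref{eq:op} by adapting the corresponding concentration lemma from~\cite{chien2019minimax} (Lemma D.4 there), and derives \eqref{eq:f} via exactly your chain of rank-$2k$ bound, triangle inequality, Eckart--Young--Mirsky, and Weyl's inequality, arriving at the same $8kC_1^2$ constant and the $\mathcal{O}(\gamma_n)$ conclusion from $\tau = C\Qm\gamma_n$. Your additional sketch of how the hyperedge-level independence and trimming interact with the matrix concentration argument is more detail than the paper provides, but it is consistent with the cited proof and introduces no gap.
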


\begin{proof}[Proof of Lemma~\ref{lemma:op}]
The proof of~\eqref{eq:op} can be adapted from~\cite[Lemma D.4]{chien2019minimax}, so our main focus is on the proof of~\eqref{eq:f}.  Note that 
\begin{align}
    \Vert \AHGk - \MM \Vert_{\mathrm{F}}^2 &\le 2k \cdot \Vert \AHGk - \MM \Vert_{\mathrm{op}}^2 \label{eq:1} \\
    & \le 4k \cdot \Vert \AHGk - \AHG \Vert_{\mathrm{op}}^2 + 4k \cdot \Vert \AHG - \MM \Vert_{\mathrm{op}}^2 \label{eq:2} \\
    & \le 4k \cdot \sigma_{k+1}^2 + 4k \cdot \Vert \AHG - \MM \Vert_{\mathrm{op}}^2 \label{eq:3} \\
    &\le 8k \cdot \Vert \AHG - \MM \Vert_{\mathrm{op}}^2, \label{eq:4}
\end{align}
where $\sigma_{k+1}$ is the $(k+1)$-th largest singular value of $\AHG$. Eqn.~\eqref{eq:1} holds since the rank of $\AHGk - \MM$ is at most $2k$, and $\Vert \mathbf{X} \Vert_{\mathrm{F}}^2 \le r \Vert \mathbf{X} \Vert_{\mathrm{op}}^2$ for any matrix $\mathbf{X}$ of rank $r$. Eqn.~\eqref{eq:2} holds since $\Vert \AHGk - \MM \Vert_{\mathrm{op}}  \le \Vert \AHGk - \AHG \Vert_{\mathrm{op}} + \Vert \AHG - \MM \Vert_{\mathrm{op}}$ and $2\Vert \AHGk - \AHG \Vert_{\mathrm{op}}\cdot  \Vert \AHG - \MM \Vert_{\mathrm{op}} \le \Vert \AHGk - \AHG \Vert_{\mathrm{op}}^2 + \Vert \AHG - \MM \Vert_{\mathrm{op}}^2$. Eqn.~\eqref{eq:3} follows from the Eckart–Young–Mirsky theorem~\cite{eckart1936approximation}, while Eqn.~\eqref{eq:4} is due to Weyl's inequality\footnote{Let $\mathbf{A}$ and $\mathbf{B}$ be $n \times n$ matrices, and their singular values are respectively denoted by $\{\sigma_{i}(\mathbf{A})\}_{i\in[n]}$ and $\{\sigma_{i}(\mathbf{B})\}_{i\in[n]}$, both in decreasing orders. Weyl's inequality states that for every $i \in [n]$, $|\sigma_{i}(\mathbf{A}) - \sigma_{i}(\mathbf{B})| \le \Vert \mathbf{A} - \mathbf{B} \Vert_{\mathrm{op}}$. }~\cite{weyl1912asymptotische}. Combining Eqns~\eqref{eq:op} and~\eqref{eq:4}, it is then clear that 
\begin{align}
    \Vert \AHGk - \MM \Vert_{\mathrm{F}}^2 &\le 2k \cdot \Vert \AHGk - \MM \Vert_{\mathrm{op}}^2 \le 8kC_1^2\cdot \left[\sqrt{\gamma_n\Qm} + \sqrt{\tau} + \frac{\gamma_n\Qm}{\sqrt{\gamma_n\Qm}+ \sqrt{\tau}}  \right]^2 = \mathcal{O}(\gamma_n). 
\end{align}
This completes the proof of Lemma~\ref{lemma:op}.
\end{proof}

\noindent{\bf Analysis of Stage~1:}
In the following, we show that when Lemmas~\ref{lemma:gamma}--\ref{lemma:op} hold, running Stage~1 ensures that $l(\z, \widehat{\Z}^{(0)})\to 0$ with high probability. Our analysis is inspired by~\cite{yun2016optimal} for SBMs, but is adapted to our general $d$-HSBM problem.
We first partition the nodes in $\Gamma$ as follows. Recall that the radius $r$ is set to be $\gamma_n^2/(n \log(\gamma_n))$, and let  
\begin{align}
    \Ii_j &\triangleq \{v \in \VG_j: \Vert (\AHGk)_v - \MM_v \Vert_2^2 \le r/4 \}, \quad \text{for all} \ j \in [k], \\
    \Io_j &\triangleq \{v \in \VG_j: \Vert (\AHGk)_v - \MM_v \Vert_2^2 \le 4r \}, \ \ \quad \text{for all} \ j \in [k],
\end{align}
and the remaining nodes in $\Gamma$ belong to $\U \triangleq \Gamma \setminus \{\cup_{j \in [k]}\Io_j \}$. Note that these sets have the following properties: 
\begin{enumerate}[label=(\roman*)]
    \item For all $j \in [k]$, most of the nodes $v\in\VG_j$ are such that $(\AHGk)_v$ is $\frac{r}{4}$-close to its expectation $\MM_{v}$, since
    \begin{align}
    |\VG_j \setminus \Ii_j| \le \frac{\Vert \AHGk - \MM \Vert_{\mathrm{F}}^2}{r/4} = \mathcal{O}\left(\frac{\log(\gamma_n)}{\gamma_n}n\right) = o(n).
    \end{align}
    Thus, $|\Ii_j| \ge |\VG_j \cap \Ii_j| = |\VG_j| - |\VG_j \setminus \Ii_j| \ge p_j n (1-o(1))$.
    
    \item Similar to (i), we have $|\U| \le  |\Gamma \setminus (\cup_{j\in[k]}\Ii_j)| \le \Vert \AHGk - \MM \Vert_{\mathrm{F}}^2/(r/4) = \mathcal{O}\left(\frac{\log(\gamma_n)}{\gamma_n}n\right) = o(n)$.
    
    \item If node $v \in \U \cap \Psi$, then $\B_v \cap (\cup_{j\in[k]}\Ii_j) = \emptyset$, and as a result of (i), we have $|\B_v| = \mathcal{O}\left(\frac{\log(\gamma_n)}{\gamma_n}n\right) = o(n)$.   
    
    \item If node $v \in \Ii_j \cap \Psi$, then $\Ii_j \subseteq \B_v$. 
    
    \item If nodes $u \in \Io_i$ and $v \in \Io_j$ such that  $i \ne j$, then $T_u \cap \B_v = \emptyset$.
    
    \item For all $j \in [k]$, $|\Io_j| \le p_j n (1+o(1))$. This is because $|\Ii_j| \ge p_j n (1-o(1))$ from (i), and thus  
    $$|\Io_j| \le n - |\Gamma| - \sum_{j' \ne j} |\Io_{j'}| \le n - \sum_{j' \ne j} |\Ii_{j'}| \le n - (p_{j'}n(1-o(1))) \le p_j n (1+o(1)).$$ 
\end{enumerate}

Lemma~\ref{lemma:select} below states that with high probability, each community contains at least one reference node. 

\begin{lemma} \label{lemma:select}
With probability $1 - o(1)$ over the selection of $\Psi$, we have $|\Ii_j \cap \Psi| \ge 1$ for all $j \in [k]$.
\end{lemma}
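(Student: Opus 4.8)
The plan is to argue by a straightforward occupancy/coupon-collector estimate. The reference set $\Psi$ consists of $\lceil \log n \rceil$ nodes drawn uniformly at random with replacement from $\Gamma$, so for each fixed $j \in [k]$ the probability that a single draw lands in $\Ii_j$ is $|\Ii_j|/|\Gamma|$. By property (i) established above, $|\Ii_j| \ge p_j n(1-o(1))$, and since $|\Gamma| \le n$ we get $|\Ii_j|/|\Gamma| \ge p_j(1-o(1)) \ge p_{\min}/2$ for all large $n$, where $p_{\min} \triangleq \min_{j\in[k]} p_j > 0$ is a constant that does not scale with $n$. Hence the probability that \emph{none} of the $\lceil \log n \rceil$ independent draws lands in $\Ii_j$ is at most $(1 - p_{\min}/2)^{\lceil \log n \rceil} \le \exp(-(p_{\min}/2)\log n) = n^{-p_{\min}/2}$.

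Next I would take a union bound over the $k$ communities: the probability that there exists some $j \in [k]$ with $\Ii_j \cap \Psi = \emptyset$ is at most $k \cdot n^{-p_{\min}/2}$. Since $k$ and $p_{\min}$ are constants independent of $n$, this bound is $o(1)$, which is exactly the claimed conclusion that $|\Ii_j \cap \Psi| \ge 1$ for all $j$ with probability $1-o(1)$ over the selection of $\Psi$. One minor point of care: the lower bound $|\Ii_j| \ge p_j n(1-o(1))$ from property (i) is itself a high-probability statement over the generation of $\widetilde{G}_1$ (it relies on Lemmas~\ref{lemma:size}, \ref{lemma:gamma}, and~\ref{lemma:op}), so strictly speaking the statement should be read as: conditioned on the good event that Lemmas~\ref{lemma:gamma}--\ref{lemma:op} hold, the randomness in $\Psi$ alone gives the desired guarantee. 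This is consistent with how the surrounding analysis is organized.

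There is essentially no serious obstacle here; the only thing to be slightly careful about is making sure the constant in the exponent is genuinely bounded away from zero, i.e. that $|\Ii_j|/|\Gamma|$ does not degrade faster than a constant. This is guaranteed because $p_j$ is a fixed model parameter and the $o(1)$ corrections in property (i) (coming from the typical-set bound in Lemma~\ref{lemma:size} and the trimming bound in Lemma~\ref{lemma:gamma}) are both $o(1)$, so for $n$ large enough $|\Ii_j|/|\Gamma| \ge p_{\min}/2$ uniformly in $j$. Given that, the exponential decay $n^{-p_{\min}/2}$ followed by a union bound over the constant number $k$ of communities closes the argument.
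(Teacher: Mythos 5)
Your proof is correct and follows essentially the same route as the paper: a single-draw success probability of order $p_{\min}$, exponentiation over the $\lceil \log n\rceil$ independent draws, and a union bound over the $k$ communities. If anything, your quantitative failure bound $k\,n^{-p_{\min}/2}$ is the accurate rate, whereas the paper's displayed $\exp(-\Theta(n/\log n))$ appears to be a typo; either way the conclusion $1-o(1)$ holds.
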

\begin{proof}
For any $j \in [k]$, the probability that a randomly selected reference node does not belong to $\Ii_j$ is $1 - (|\Ii_j|/n)$, thus the probability that there exists at least one reference node belongs to $\Ii_j$ is 
\begin{align}
	1 - \left(1 - (|\Ii_j|/n)\right)^{\log n} = 1 - \exp(-\Theta(n/\log n)).
\end{align}  
Taking a union bound over all $j \in [k]$, we complete the proof.
\end{proof}

With these properties, we are able to show that running Stage~1  yields almost exact recovery. For ease of presentation, in the following we focus mainly  on the case when $p_1 > p_2 > \ldots > p_k$ (such that the community sizes satisfy $|\VG_1| > |\VG_2| > \ldots > |\VG_k|$ with high probability). The analysis can be easily generalized to the case when $p_i = p_j$ for some $i \ne j$, but it requires more cumbersome notations (such as permutations) which makes the subsequent analysis more difficult to understand. 

\noindent{\underline{The ``for loop'' in lines $7-10$:}} When $j = 1$, we will show that $v_1^{\ast} \in \Io_1$ and $|\B_{v_1^{\ast}}| \ge np_1(1-o(1))$. From Lemma~\ref{lemma:select}, there exists a node $v_1 \in \Ii_1 \cap \Psi$, and its corresponding set $\B_{v_1}$ is a superset of $\Ii_1$ (due to (iv)). Thus, we have $|\B_{v_1}| \ge |\Ii_1| \ge p_1n(1-o(1))$, where the last inequality is due to (i). As $|\B_{v_1^{\ast}}|$ is at least $|\B_{v_1}|$, we obtain that $|\B_{v_1^{\ast}}| \ge p_1n(1-o(1))$. To prove that $v_1^{\ast} \in \Io_1$, one can verify that 
\begin{enumerate}
    \item for any $v \in \U \cap \Psi$,  $|\B_v| = o(n) < |\B_{v_1^{\ast}}|$ by (iii); 
    \item for any $v \in \Io_j \cap \Psi$ where $j \ne 1$, $|\B_v| \le |\Io_j| + |\U| \le np_j (1+o(1)) + o(n) < |\B_{v_1^{\ast}}|$ (due to (vi) and the fact that $p_1 > p_j$).
\end{enumerate}

For $2 \le j \le k$, we will show that $v_j^{\ast} \in \Io_j$ and $|\B_{v_j^{\ast}}| \ge np_j(1-o(1))$. Similar to the analysis above, there exists a node $v_j \in \Ii_j \cap \Psi$ and $\B_{v_j} \supseteq \Ii_j$. From (v) we know that $|\B_{v_j} \setminus (\cup_{l=0}^{j-1} \widehat{\V}^{(0)}_l)| = |\B_{v_j}|$, thus $|\B_{v_j} \setminus (\cup_{l=0}^{j-1} \widehat{\V}^{(0)}_l)| \ge |\Ii_j| \ge np_j(1-o(1))$. As $|\B_{v_j^{\ast}} \setminus (\cup_{l=0}^{j-1} \widehat{\V}^{(0)}_l)|$ is at least $|\B_{v_j} \setminus (\cup_{l=0}^{j-1} \widehat{\V}^{(0)}_l)|$, we obtain that $|\B_{v_j^{\ast}}| \ge p_jn(1-o(1))$. To prove that $v_j^{\ast} \in \Io_j$, one can verify that 
\begin{enumerate}
    \item for any $v \in \U \cap \Psi$,  $|\B_v \setminus (\cup_{l=0}^{j-1} \widehat{\V}^{(0)}_l)| = o(n)$;
    \item for any $v \in \Io_{j'} \cap \Psi$ where $j' > j$, $|\B_v \setminus (\cup_{l=0}^{j-1} \widehat{\V}^{(0)}_l)| \le |\B_v| \le |\Io_{j'}| + |\U| \le np_{j'} (1+o(1)) + o(n) < |\B_{v_{j}^{\ast}}|$ (due to (vi) and the fact that $p_j > p_{j'}$);
    \item for any $v \in \Io_{j'} \cap \Psi$ where $j' < j$, we have $|\B_v \setminus (\cup_{l=0}^{j-1} \widehat{\V}^{(0)}_l)| \le |\B_v \setminus \widehat{\V}^{(0)}_{j'}| \le |\Io_j| + |\U| - |\B_{v_{j'}^{\ast}}| = o(n)$, since $|\B_{v_{j'}^{\ast}}| \ge np_{j'}(1-o(1))$.
\end{enumerate}
Therefore, under the assumption $p_1 > p_2 > \cdots > p_k$, we have $v_j^{\ast} \in \Io_j$ for all $j \in [k]$.\footnote{Without this assumption, we have that there exists a permutation $\pi$ on $[k]$ such that $v_j^{\ast} \in \Io_{\pi(j)}$ for all $j \in [k]$.} This implies that all the elements in $\{v_{j}^{\ast}\}_{j\in[k]}$ are far from each other---specifically, every pair of nodes $(v_i^{\ast}, v_j^{\ast})$ satisfies 
\begin{align}
\Vert (\AHGk)_{v_i^{\ast}} - (\AHGk)_{v_j^{\ast}} \Vert_2^2 &\ge \frac{1}{2}\Vert \MM_{v_i^{\ast}} - (\AHGk)_{v_j^{\ast}} \Vert_2^2 - \Vert \MM_{v_i^{\ast}} - (\AHGk)_{v_i^{\ast}} \Vert_2^2  \notag \\
&\ge \frac{1}{2}\left(\frac{1}{2} \Vert \MM_{v_j^{\ast}} - \MM_{v_i^{\ast}} \Vert_2^2  - \Vert \MM_{v_j^{\ast}} - (\AHGk)_{v_j^{\ast}} \Vert_2^2 \right) - \Vert \MM_{v_i^{\ast}} - (\AHGk)_{v_i^{\ast}} \Vert_2^2 = \Omega(\gamma_n^2/n),
\end{align}
since $\Vert (\AHGk)_{v_i^{\ast}} - \MM_{v_i^{\ast}} \Vert_2^2 \le 4r$, $\Vert (\AHGk)_{v_j^{\ast}} - \MM_{v_j^{\ast}} \Vert_2^2 \le 4r$, and $\Vert \MM_{v_j^{\ast}} - \MM_{v_i^{\ast}} \Vert_2^2 = \Omega(\gamma_n^2/n)$ by Lemma~\ref{lemma:M}. If a node $v \in \VG_i$ is misclassified to $\widehat{\V}^{(0)}_j$ (where $i\ne j$) in the first ``for loop'', then it must be close to the center of $\widehat{\V}^{(0)}_j$, i.e., $\Vert (\AHGk)_v - (\AHGk)_{v_j^{\ast}} \Vert_2^2 \le r$. Thus, we have
\begin{align}
\Vert (\AHGk)_v - \MM_v \Vert_2^2 &\ge \frac{1}{2} \Vert \MM_v - \MM_{v_j^{\ast}} \Vert_2^2 - \Vert (\AHGk)_v - \MM_{v_j^{\ast}} \Vert_2^2 \notag \\
&\ge \frac{1}{2} \Vert \MM_v - \MM_{v_j^{\ast}} \Vert_2^2 - 2\Vert (\AHGk)_v - (\AHGk)_{v_j^{\ast}} \Vert_2^2 - 2\Vert  (\AHGk)_{v_j^{\ast}} - \MM_{v_j^{\ast}}  \Vert_2^2 \notag \\
&= \Omega(\gamma_n^2/n). \label{eq:must1}
\end{align} 

\noindent{\underline{The ``for loop'' in lines $11-14$:}} Consider a specific node $v \in \Gamma \setminus (\cup_{j=1}^k \widehat{\V}^{(0)}_j)$. If $v \in \VG_i$ is misclassified to $\widehat{\V}^{(0)}_j$ (where $i \ne j$) in the second ``for loop'', then it must be closer to the center of $\widehat{\V}^{(0)}_j$ than the center of $\widehat{\V}^{(0)}_i$, i.e., $\Vert (\AHGk)_v - (\AHGk)_{v_j^{\ast}} \Vert_2^2 \le \Vert (\AHGk)_v - (\AHGk)_{v_i^{\ast}} \Vert_2^2$. Since the two centers are far from each other, one can show that $v$ is far from the center of $\widehat{\V}^{(0)}_i$, i.e., 
\begin{align}
\Omega(\gamma_n^2/n) \le \Vert (\AHGk)_{v_i^{\ast}} - (\AHGk)_{v_j^{\ast}} \Vert_2^2 &\le 2\Vert (\AHGk)_{v_i^{\ast}} - (\AHGk)_{v} \Vert_2^2 + 2\Vert (\AHGk)_{v_j^{\ast}} - (\AHGk)_{v} \Vert_2^2 \le 4\Vert (\AHGk)_{v_i^{\ast}} - (\AHGk)_{v} \Vert_2^2.
\end{align}
Thus, node $v$ must satisfy 
\begin{align}
\Vert (\AHGk)_v - \MM_v \Vert_2^2 \ge \frac{1}{2} \Vert (\AHGk)_v - (\AHGk)_{v_i^{\ast}} \Vert_2^2 - \Vert \MM_v - (\AHGk)_{v_i^{\ast}} \Vert_2^2 = \Omega(\gamma_n^2/n). \label{eq:must2}
\end{align}

Combining Eqns.~\eqref{eq:must1} and~\eqref{eq:must2}, we conclude that for any node $v \in \Gamma$, if it is misclassified to another cluster, it must satisfy $\Vert (\AHGk)_v - \MM_v \Vert_2^2 = \Omega(\gamma_n^2/n)$.  Since $\sum_{v \in \Gamma} \Vert (\AHGk)_v - \MM_v  \Vert_2^2 = \Vert \AHGk - \MM \Vert_{\mathrm{F}}^2 = \mathcal{O}(\gamma_n)$ (by Lemma~\ref{lemma:op}), we know that the number of misclassified nodes in $\Gamma$ is at most $\mathcal{O}(n/\gamma_n)$.  Taking into account the number of nodes that do not belong to $\Gamma$ (which also scales as $\mathcal{O}(n/\gamma_n)$ by Lemma~\ref{lemma:gamma}), we complete the proof of Theorem~\ref{thm:weak}.

\subsection{Theoretical guarantees of Stage 2} \label{sec:exact}
From Theorem~\ref{thm:weak} we know that for a fixed ground-truth community vector $\z \in \mathcal{A}_{Z}$, running Stage~1 on $\widetilde{G}_1$ ensures that $l(\z, \widehat{\Z}^{(0)}) \le \eta_n$ with probability at least $1 - \epsilon_n$. In the following, we show that the hypergraph spectral clustering method  does not only work well on $\widetilde{G}_1$, but also works well on the sub-HSBM $G_1$ (which is generated on the fixed sub-hypergraph $F_1$) with high probability over the graph splitting process.

\begin{definition} \label{def:subgraph}
Let $(f_1, f_2)$ be the realizations of the sub-HSBMs $(F_1, F_2)$, and consider a fixed ground-truth community vector $\z$. 
\begin{itemize}
\item We say $f_1$ is a \emph{good realization of the first sub-hypergraph} with respect to $\z$ (denoted by $f_1 \in \G^{\z}_{1}$) if the probability that ``running Stage~1 on $G_1$ (which depends on $f_1$) ensures $l(\z, \widehat{\Z}^{(0)}) \le \eta_n$'' is at least $1- \sqrt{\epsilon_n}$, i.e., $\PP_{G_1}(l(\z, \widehat{\Z}^{(0)}) \le \eta_n) \ge 1- \sqrt{\epsilon_n}$.

\item We say $f_2$ is a \emph{good realization of the second sub-hypergraph} with respect to $\z$ (denoted by $f_2 \in \G^{\z}_{2}$) if for every node $v \in [n]$ and every community assignment $\m \in \M$, the number of hyperedges in $f_2$ that includes node $v$ and other $d-1$ nodes with community assignment $\m$ (denoted by $D_{v,\m}$) satisfies  $\left(1 - (3\gamma_n/\log n) \right) R_{\m} \le D_{v,\m} \le (1 + n^{-\frac{1}{2}+\frac{\delta}{2}}) R_{\m}$.
\end{itemize}
\end{definition} 

\begin{lemma}  \label{lemma:realization}
Suppose $\z \in \mathcal{A}_{Z}$.
With probability at least $1 - 2\sqrt{\epsilon_n}$, the randomly generated sub-hypergraphs $(F_1, F_2)$ satisfy $F_1 \in \G^{\z}_{1}$ and $F_2 \in \G^{\z}_{2}$ simultaneously.
\end{lemma}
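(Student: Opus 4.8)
The plan is to handle the two events $\{F_1 \in \G^{\z}_1\}$ and $\{F_2 \in \G^{\z}_2\}$ separately and then combine via a union bound, so it suffices to show each fails with probability at most $\sqrt{\epsilon_n}$. For the first event, recall from the remark preceding Theorem~\ref{thm:weak} that generating $\widetilde{G}_1$ is equivalent to first drawing the splitting $F_1$ and then drawing the sub-HSBM $G_1$ on the hyperedges of $F_1$. Hence the probability (over both sources of randomness) that Stage~1 fails on $\widetilde{G}_1$ is $\E_{F_1}\!\left[\PP_{G_1}\!\left(l(\z,\widehat{\Z}^{(0)}) > \eta_n\right)\right] \le \epsilon_n$ by Theorem~\ref{thm:weak}. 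Writing $g(F_1) \triangleq \PP_{G_1}(l(\z,\widehat{\Z}^{(0)}) > \eta_n)$, the event $\{F_1 \notin \G^{\z}_1\}$ is exactly $\{g(F_1) > \sqrt{\epsilon_n}\}$, so Markov's inequality gives $\PP(F_1 \notin \G^{\z}_1) = \PP(g(F_1) > \sqrt{\epsilon_n}) \le \E[g(F_1)]/\sqrt{\epsilon_n} \le \epsilon_n/\sqrt{\epsilon_n} = \sqrt{\epsilon_n}$.

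For the second event, the plan is a direct concentration argument. Fix $v \in [n]$ and $\m \in \M$; the quantity $D_{v,\m}$ counts the hyperedges of $F_2$ through $v$ whose other $d-1$ nodes realize community assignment $\m$. Since each of the $R_{\m} = \Theta(n^{d-1})$ candidate hyperedges is placed in $\W_2$ independently with probability $1 - \gamma_n/\log n$, $D_{v,\m}$ is a sum of independent Bernoulli variables with mean $(1 - \gamma_n/\log n) R_{\m}$. A Chernoff/Bernstein bound then yields, say, $\PP\!\left(D_{v,\m} < (1 - 3\gamma_n/\log n) R_{\m}\right) \le \exp(-\Theta(\gamma_n^2 R_{\m}/\log^2 n)) = \exp(-\Theta(n^{d-1}\gamma_n^2/\log^2 n))$ for the lower tail, and $\PP\!\left(D_{v,\m} > (1 + n^{-1/2+\delta/2}) R_{\m}\right) \le \exp(-\Theta(n^{d-1} n^{-1+\delta})) = \exp(-\Theta(n^{d-2+\delta}))$ for the upper tail. (When $d = 2$, $R_{\m} = \Theta(n)$ so the upper-tail exponent is $\Theta(n^{\delta})$, still super-polynomial; the lower-tail slack $3\gamma_n/\log n = o(1)$ is what makes the first bound usable even though $\gamma_n/\log n \to 0$.) A union bound over all $n$ choices of $v$ and the $|\M| = \binom{k+d-2}{d-1} = O(1)$ choices of $\m$ shows $\PP(F_2 \notin \G^{\z}_2) \le n |\M| \exp(-\Theta(n^{\min\{d-2+\delta,\, 1\}}))$, which decays faster than any polynomial and is in particular $o(\sqrt{\epsilon_n})$ since $\epsilon_n$ only vanishes at a rate governed by $\gamma_n$. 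Finally, combining the two estimates, $\PP(F_1 \notin \G^{\z}_1 \text{ or } F_2 \notin \G^{\z}_2) \le \sqrt{\epsilon_n} + o(\sqrt{\epsilon_n}) \le 2\sqrt{\epsilon_n}$ for $n$ large, completing the proof.

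The main obstacle is the first part: one must correctly decompose the randomness of $\widetilde{G}_1$ into the splitting step and the HSBM-generation step, verify that this decomposition is exactly the one in Definition~\ref{def:subgraph} (so that $g(F_1)$ really is the conditional failure probability appearing there), and then apply Markov's inequality at the right scale $\sqrt{\epsilon_n}$ — the choice of square root is what makes the two-sided union bound close with slack $2\sqrt{\epsilon_n}$. The second part is routine Chernoff bounding; the only mild subtlety there is bookkeeping the $d=2$ versus $d \ge 3$ cases for the upper-tail exponent and confirming the lower-tail slack $3\gamma_n/\log n$ (rather than a constant) is still large enough relative to the standard deviation of $D_{v,\m}$.
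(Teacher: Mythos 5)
Your proposal is correct and follows essentially the same route as the paper's proof: your bound on $\PP(F_1 \notin \G^{\z}_{1})$ is the paper's averaging/contradiction argument recast as Markov's inequality applied to the conditional failure probability, and your bound on $\PP(F_2 \notin \G^{\z}_{2})$ is the same Chernoff-plus-union-bound computation. The only substantive differences are minor: the paper centers the concentration at $N_{\m} = \prod_{s}\binom{|\V_s|}{m_s}$ and invokes $\z \in \mathcal{A}_{Z}$ to pass from $N_{\m}$ to $R_{\m}$ (which is precisely where that hypothesis enters, and which your write-up glosses over by treating $R_{\m}$ as the exact number of candidate hyperedges), and it obtains the upper bound $D_{v,\m} \le N_{\m} \le (1 + n^{-\frac{1}{2}+\frac{\delta}{2}}) R_{\m}$ deterministically rather than via a second Chernoff tail.
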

\begin{proof}
	See Appendix~\ref{appendix:realization}.
\end{proof}
Due to Lemma~\ref{lemma:realization}, one can then focus on a specific typical ground-truth community vector $\z \in \mathcal{A}_{Z}$ and good realizations of the sub-hypergraphs $f_1 \in \G^{\z}_{1}$ and $f_2 \in \G^{\z}_{2}$ in the following. Also, one can suppose that the initial estimate $\widehat{\z}^{(0)}$ (after Stage~1) satisfies $l(\z, \widehat{\z}^{(0)}) \le  \eta_n$, since running Stage~1 on $G_1$ (which depends on the good realization $f_1$) ensures $l(\z, \widehat{\Z}^{(0)}) \le \eta_n$ with high probability. Without loss of generality, we further assume that the minimum value of $l(\z, \widehat{\z}^{(0)}) = \min_{\pi \in \mathcal{S}_k}\frac{1}{n} \sum_{v \in[n]} \mathbbm{1}\{\widehat{z}^{(0)}_v \ne \pi(z_v) \}$ is achieved by the identity permutation in $\mathcal{S}_k$, and this assumption helps us to remove the presence of the  permutation in the following analysis. 

Note that the local MAP estimation in~\eqref{eq:local} for each node $v \in [n]$ can be alternatively represented as 
\begin{align}
\hZv(g_2,\hzz) &= \argmax_{i \in [k]} \PP\left(Z_v = i \big|G_2 = g_2, \hZZ = \hzz  \right) \\
&= \argmax_{i \in [k]} \PP\left(G_2 = g_2 \big|Z_v = i, \hZZ = \hzz  \right) \cdot p_i.
\end{align}
Thus, for a fixed node $v \in \V_i$ (i.e., $z_v = i$) for some $i \in [k]$, the probability that it is misclassified to a different community can be bounded as follows:
\begin{align}
\PP\left(\hZv(G_2,\hzz) \ne z_v  \right) &= \PP_{G_2}\left( \argmax_{i' \in [k]} \PP\left(G_2 \big|Z_v = i', \hzz \right) \cdot p_{i'} \ne i \right) \\
&= \PP_{G_2}\left( \exists j \ne i : \PP\left(G_2 \big|Z_v = j, \hzz \right)\cdot p_j \ge \PP\left(G_2 \big|Z_v = i, \hzz \right)\cdot p_i \right) \\
&\le \sum_{j \ne i} \PP_{G_2}\left( \PP\left(G_2 \big|Z_v = j, \hzz \right)\cdot p_j \ge \PP\left(G_2 \big|Z_v = i, \hzz \right) \cdot p_i \right) \\
&=\sum_{j \ne i} \sum_{g_2} \PP\left(g_2 \big|Z_v = i,  \zsv \right) \times \mathbbm{1}\left\{\PP\left(g_2 \big|Z_v = j, \hzz \right)\cdot p_j \ge \PP\left(g_2 \big|Z_v = i, \hzz \right)\cdot p_i \right\}. \label{eq:measure}
\end{align}
Note that the above error probability depends only on the hyperedges in $G_2$ (which is generated on the sub-hypergraph $f_2$) that contains node $v$. Recall that the number of hyperedges in $f_2$ that comprise node $v$ and other $d-1$ nodes with community assignment $\m$ is $D_{v,\m}$ (as defined in Definition~\ref{def:subgraph}), thus these hyperedges can be represented by Bernoulli random variables $\{X^{\m}_{a} \}_{a=1}^{D_{v,\m}}$, where $X_a^{\m} = 1$ means the presence of this hyperedge in $G_2$. Taking all possible $\m \in \M$ into account, we know that the hyperedges in $f_2$ that contains node $v$ can be represented by the collection of Bernoulli random variables $\big\{\{X^{\m}_{a} \}_{a=1}^{D_{v,\m}} \big\}_{\m \in \M}$. We also denote $D_v \triangleq \sum_{\m \in \M} D_{v,\m}$ as the total number of hyperedges in $f_2$ that contains $v$, where $D_{v}$ scales as $\Theta(n^{d-1})$ by Lemma~\ref{lemma:realization}.  

To further upper bound~\eqref{eq:measure}, we would like to substitute the terms $\PP(g_2 \big|Z_v = i,  \hzz)$ and $\PP(g_2 \big|Z_v = j,  \hzz)$ in the indicator function by $\PP\left(g_2 \big|Z_v = i,  \zsv \right)$ and $\PP\left(g_2 \big|Z_v = j,  \zsv \right)$ respectively; thus we can get rid of the fact that $\hzz$ is not exactly the same as $\zsv$. We now consider the ratio between $\PP(G_2 \big|Z_v = j,  \hzz)$ and $\PP\left(G_2 \big|Z_v = j,  \zsv \right)$ for an arbitrary $j \in [k]$. Due to the independence of hyperedges $\big\{\{X^{\m}_{a} \}_{a=1}^{D_{v,\m}} \big\}_{\m \in \M}$ in $G_2$, we have 
\begin{align}
\frac{\PP\left(G_2 \big|Z_v = j,  \hzz \right)}{\PP\left(G_2 \big|Z_v = j,  \zsv \right)} = \prod_{\m \in \M}\prod_{a=1}^{D_{v,\m}} \frac{\PP\left(X^{\m}_a \big|Z_v = j,  \hzz \right)}{\PP\left(X^{\m}_{a} \big|Z_v = j,  \zsv \right)} \label{eq:ratio}
\end{align} 
Conditioned on $Z_v = j$ and the ground truth $\zsv$, the hyperedge $X^{\m}_a$ includes node $v \in \V_j$ and  $d-1$ other nodes with community assignment $\m$, thus $\PP(X^{\m}_a = 1\big|Z_v = j,  \zsv) = \frac{Q_{\m \oplus j} \log n}{n^{d-1}}$. On the other hand, conditioned on $Z_v = j$ and the estimated labels $\hzz$, the hyperedge $X^{\m}_a$ is considered to include node $v \in \V_j$, but the community assignment of the other $d-1$ nodes, denoted by $\m' \in \M$, may not be equal to  $\m$, and we have $\PP(X^{\m}_a = 1\big|Z_v = j,  \hzz) = \frac{Q_{\m' \oplus j} \log n}{n^{d-1}}$. Here, $\m' = \m$ if all the $d-1$ nodes (except for $v$) in this hyperedge are not misclassified in $\hzz$, while $\m' \ne \m$ otherwise. For hyperedges such that their community assignments $\m'$ (under $\hzz$) equal $\m$, they are cancelled out in~\eqref{eq:ratio} since the distributions are invariant conditioned on either $\hzz$ or $\zsv$.  It then remains to focus on the hyperedges that contain at least one misclassified node in $\hzz$. Since $l(\z,\widehat{\z}^{(0)}) \le \eta_n$, the number of such hyperedges is at most $\eta_n n\cdot \binom{n-2}{d-2} \triangleq D'_v$, where $D'_v$ scales as $\mathcal{O}(\eta_n n^{d-1})$. For ease of presentation, these $D'_v$ Bernoulli random variables in $\big\{\{X^{\m}_{a} \}_{a=1}^{D_{v,\m}} \big\}_{\m \in \M}$ are alternatively relabelled as $\{Y_{a}\}_{a=1}^{D'_v}$.


\begin{definition}
Let  $\{c_n\}$ be a vanishing sequence that satisfy $c_n \log(c_n /\eta_n) = \omega(1)$. We say $G_2 \in \G_v$ if the random variables $\{Y_a\}_{a=1}^{D'_v}$ satisfies $\sum_{a=1}^{D'_v} Y_a \le c_n \log n$, and $G_2 \notin \G_v$ otherwise.
\end{definition} Lemma~\ref{lemma:approximation} below states that $G_2 \in \G_v$ with high probability (conditioned on the ground truth $Z_v = i$ and $\zsv$), and for every realization $g_2 \in \G_v$, the ratio between $\PP(g_2 |Z_v = j,  \hzz)$ and $\PP(g_2 |Z_v = j,  \zsv)$ is bounded for any $j \in [k]$. 

\begin{lemma} \label{lemma:approximation}
Suppose the initial estimate $\widehat{\z}^{(0)}$ satisfies $l(\z, \widehat{\z}^{(0)}) \le \eta_n$.
The probability that $G_2$ does not belong to $\G_v$ is at most $n^{-\omega(1)}$, i.e.,
\begin{align}
\PP\left(G_2 \notin \G_v \big|Z_v = i,  \zsv \right) = \PP\bigg(\sum_{a=1}^{D'_v} Y_a > c_n \log n \big|Z_v = i,  \zsv \bigg) \le n^{-\omega(1)}.
\end{align}
Furthermore, there exist $L_h = n^{o(1)}$ and $L_l = n^{-o(1)}$ such that for all $g_2 \in \G_v$ and $j \in [k]$,
\begin{align}
L_l \le \frac{\PP(g_2 |Z_v = j,  \hzz)}{\PP(g_2 |Z_v = j,  \zsv)} \le L_h.
\end{align}
\end{lemma}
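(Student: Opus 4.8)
The plan is to handle the two claims separately: first the tail bound on $\sum_{a=1}^{D'_v} Y_a$, then the sandwich bound on the likelihood ratio.

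For the tail bound, recall that the $\{Y_a\}_{a=1}^{D'_v}$ are a relabelling of a subset of the Bernoulli random variables $\big\{\{X^{\m}_{a}\}_{a=1}^{D_{v,\m}}\big\}_{\m\in\M}$, each of which has success probability $Q_{\m\oplus i}\frac{\log n}{n^{d-1}} \le \Qm\frac{\log n}{n^{d-1}}$ conditioned on $Z_v=i$ and $\zsv$; moreover they are independent. Since $D'_v = \mathcal{O}(\eta_n n^{d-1})$, the mean $\E\big[\sum_a Y_a\big]$ is $\mathcal{O}(\eta_n\log n)$, which is much smaller than the target level $c_n\log n$ because $c_n/\eta_n\to\infty$. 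First I would apply a standard multiplicative Chernoff bound for sums of independent Bernoullis: for a sum $S$ with mean $\mu$ and any $\lambda \ge \mu$, $\PP(S \ge \lambda) \le \exp(-\lambda\log(\lambda/\mu) + \lambda - \mu)$. Plugging $\lambda = c_n\log n$ and $\mu = \mathcal{O}(\eta_n\log n)$ yields an exponent of order $-c_n\log n\cdot\log(c_n/\eta_n)(1+o(1))$, and since $c_n\log(c_n/\eta_n) = \omega(1)$ by the definition of $\{c_n\}$, this is $\le n^{-\omega(1)}$, as claimed. (A minor technical point: the number $D'_v$ of ``bad'' hyperedges depends on $\widehat{\z}^{(0)}$, which is itself a random object correlated with $G_1$ but not with $G_2$; since $f_1$ is fixed after conditioning on a good realization, $\widehat{\z}^{(0)}$ is a function of $g_1$ only, and the Bernoulli variables counting hyperedges of $G_2$ remain independent of it, so the Chernoff bound applies after conditioning.)

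For the likelihood-ratio bound, start from the product formula~\eqref{eq:ratio}. After cancelling the factors for which the estimated community assignment $\m'$ of the $d-1$ co-nodes equals the true $\m$, only the $D'_v$ factors indexed by the relabelled $\{Y_a\}$ remain. Each surviving factor is a ratio of the form $\frac{\PP(X=y\mid \text{wrong }\m')}{\PP(X=y\mid \text{true }\m)}$ with $X$ Bernoulli; when $X=0$ this ratio is $\frac{1 - Q_{\m'\oplus j}\log n/n^{d-1}}{1 - Q_{\m\oplus j}\log n/n^{d-1}} = 1 + \mathcal{O}(\log n/n^{d-1})$, and when $X=1$ it is $\frac{Q_{\m'\oplus j}}{Q_{\m\oplus j}} \in [Q_{\min}/\Qm,\ \Qm/Q_{\min}]$, a bounded constant. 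On the event $G_2\in\G_v$ at most $c_n\log n$ of the $Y_a$ equal $1$, so the product over the ``$X=1$'' factors is at most $(\Qm/Q_{\min})^{c_n\log n} = n^{c_n\log(\Qm/Q_{\min})} = n^{o(1)}$ (and at least $(Q_{\min}/\Qm)^{c_n\log n} = n^{-o(1)}$, using $c_n\to 0$). The product over the ``$X=0$'' factors is $\big(1 + \mathcal{O}(\log n/n^{d-1})\big)^{D'_v}$ with $D'_v = \mathcal{O}(\eta_n n^{d-1})$, hence lies in $\exp\big(\pm\mathcal{O}(\eta_n\log n)\big) = n^{\pm\mathcal{O}(\eta_n)} = n^{\pm o(1)}$. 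Multiplying, the whole ratio lies in $[n^{-o(1)}, n^{o(1)}]$, so we may take $L_h = n^{o(1)}$ and $L_l = n^{-o(1)}$ with the same $o(1)$ uniform over $j\in[k]$ since all bounds depend only on $\Qm, Q_{\min}, d, c_n, \eta_n$.

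The main obstacle I anticipate is getting the bookkeeping of the ``cancellation'' step precisely right: one must argue carefully that conditioned on a good realization $f_1$ the misclassified set in $\widehat{\z}^{(0)}$ is deterministic (a function of $g_1$ alone) and disjoint in its effect from the randomness of $G_2$, so that the set of surviving factors $\{Y_a\}$ is well-defined and the count $D'_v = \eta_n n\binom{n-2}{d-2}$ is a valid deterministic upper bound. Once the independence structure is pinned down, both parts reduce to the elementary Chernoff and $\log(1+x)$ estimates sketched above. I would also double-check that the constant in $c_n\log(c_n/\eta_n) = \omega(1)$ is strong enough to absorb the linear-in-$\lambda$ slack term $\lambda - \mu$ in the Chernoff exponent; since $\lambda - \mu = \mathcal{O}(c_n\log n) = o(c_n\log n\cdot\log(c_n/\eta_n))$, it is negligible.
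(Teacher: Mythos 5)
Your proposal is correct and follows essentially the same route as the paper: the paper bounds the binomial tail directly via $\binom{n}{k}\le (en/k)^k$ (which is just the multiplicative Chernoff bound you invoke, yielding the same exponent $-c_n\log(c_n/\eta_n)\log n$), and it proves the likelihood-ratio sandwich exactly as you do, by bounding each surviving ``$Y_a=1$'' factor in $[Q_{\min}/Q_{\max},\,Q_{\max}/Q_{\min}]$ and each ``$Y_a=0$'' factor in $1\pm\mathcal{O}(\log n/n^{d-1})$, then using $\sum_a y_a\le c_n\log n$ on $\G_v$ and $D'_v=\mathcal{O}(\eta_n n^{d-1})$ to conclude $L_h=n^{o(1)}$ and $L_l=n^{-o(1)}$. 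No gaps.
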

\begin{proof} 
See Appendix~\ref{appendix:approximation}.
\end{proof}

We now return to~\eqref{eq:measure}. When $g_2 \in \G_v$ and $\PP(g_2 |Z_v = j, \hzz)\cdot p_j \ge \PP(g_2 |Z_v = i, \hzz)\cdot p_i$, we have $\PP(g_2 |Z_v = i, \zsv) \le (p_jL_u/p_iL_l) \PP(g_2 |Z_v = j, \zsv)$ according to Lemma~\ref{lemma:approximation}. Thus, one can upper-bound~\eqref{eq:measure} as 
\begin{align}
&\sum_{j \ne i} \sum_{g_2 \in \G_v} \PP\left(g_2 \big|Z_v = i,  \zsv \right) \mathbbm{1}\left\{\PP\left(g_2 \big|Z_v = j, \hzz \right) \cdot p_j \ge \PP\left(g_2 \big|Z_v = i, \hzz \right) \cdot p_i \right\} + \sum_{g_2 \notin \G_v}\PP(g_2 |Z_v = i, \zsv) \notag  \\
&\le \sum_{j \ne i} \sum_{g_2 \in \G_v} \frac{p_jL_u}{p_iL_l} \min \left\{ \PP\left(g_2 \big|Z_v = i,  \zsv \right), \PP\left(g_2 \big|Z_v = j,  \zsv \right) \right\} + n^{-\omega(1)} \\
&\le  \sum_{j \ne i} \frac{p_jL_u}{p_iL_l} \sum_{g_2} \min \left\{ \PP\left(g_2 \big|Z_v = i,  \zsv \right), \PP\left(g_2 \big|Z_v = j,  \zsv \right) \right\} + n^{-\omega(1)}. \label{eq:27}
\end{align}
Lemma~\ref{lemma:min} below bounds the term $\sum_{g_2} \min \left\{ \PP\left(g_2 \big|Z_v = i,  \zsv \right), \PP\left(g_2 \big|Z_v = j,  \zsv \right) \right\}$ for any $i \ne j$ in terms of the GCH-divergence $D_+(i,j)$ between communities $\V_i$ and $\V_j$. 

\begin{lemma}\label{lemma:min}
When the ground-truth community vector $\z \in \mathcal{A}_{Z}$ and the sub-hypergraphs $f_1 \in \G^{\z}_{1}$ and $f_2 \in \G^{\z}_{2}$, we have
\begin{align}
\sum_{g_2} \min \left\{ \PP\left(g_2 \big|Z_v = i,  \zsv \right), \PP\left(g_2 \big|Z_v = j,  \zsv \right) \right\} \le n^{-D_+(i,j) + o(1)},
\end{align}
where 
$D_+(i,j) = \max_{t \in [0,1]} \sum_{\m \in \M} t\mu_{\m \oplus i} + (1-t)\mu_{\m \oplus j} - \mu_{\m \oplus i}^t \mu_{\m \oplus j}^{1-t}$.
\end{lemma}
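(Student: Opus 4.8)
The plan is to identify the left-hand side as the Bayes error of a binary hypothesis test between the two product laws $\PP(\cdot\,|\,Z_v=i,\zsv)$ and $\PP(\cdot\,|\,Z_v=j,\zsv)$ for the hyperedge configuration of $f_2$, and to bound it via the classical Chernoff argument. First I would observe that, conditioned on $\zsv$, every hyperedge of $f_2$ that does \emph{not} contain $v$ carries the same Bernoulli law under $Z_v=i$ and under $Z_v=j$; hence the two joint laws share an identical marginal on those coordinates, which factors out of $\min\{\cdot,\cdot\}$ and integrates to $1$. It therefore suffices to bound the analogous sum over the configurations of the $D_v=\sum_{\m\in\M}D_{v,\m}$ hyperedges incident to $v$, and these are mutually independent Bernoulli variables: for each $\m\in\M$ there are $D_{v,\m}$ of them, each present with probability $q_{\m,i}\triangleq\frac{Q_{\m\oplus i}\log n}{n^{d-1}}$ under $Z_v=i$ and with probability $q_{\m,j}\triangleq\frac{Q_{\m\oplus j}\log n}{n^{d-1}}$ under $Z_v=j$. (The companion hypothesis $f_1\in\G^{\z}_{1}$ in the lemma is not needed for this bound.)

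Next, using $\min\{a,b\}\le a^{t}b^{1-t}$ for each $t\in[0,1]$ together with this product structure, I would write, for any $t\in[0,1]$,
\begin{align}
\sum_{g_2}\min\!\left\{\PP(g_2|Z_v=i,\zsv),\,\PP(g_2|Z_v=j,\zsv)\right\}\ \le\ \prod_{\m\in\M}\Big[\,q_{\m,i}^{\,t}q_{\m,j}^{\,1-t}+(1-q_{\m,i})^{t}(1-q_{\m,j})^{1-t}\,\Big]^{D_{v,\m}}.
\end{align}
Expanding $(1-q)^{t}=1-tq+O(q^{2})$ and applying $1+x\le e^{x}$ bounds each bracket by $\exp\!\big(-[\,tq_{\m,i}+(1-t)q_{\m,j}-q_{\m,i}^{t}q_{\m,j}^{1-t}\,]+O(q_{\m,i}^{2}+q_{\m,j}^{2})\big)$. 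Raising to the power $D_{v,\m}$, invoking $f_2\in\G^{\z}_{2}$ from Definition~\ref{def:subgraph} so that $D_{v,\m}=(1+o(1))R_{\m}$ (here $\gamma_n=o(\log n)$ is used), and then substituting $R_{\m}/n^{d-1}=R'_{\m}$ together with $R'_{\m}Q_{\m\oplus i}=\mu_{\m\oplus i}$ and $R'_{\m}Q_{\m\oplus i}^{t}Q_{\m\oplus j}^{1-t}=\mu_{\m\oplus i}^{t}\mu_{\m\oplus j}^{1-t}$, the leading exponent becomes $-(1+o(1))(\log n)\sum_{\m\in\M}[\,t\mu_{\m\oplus i}+(1-t)\mu_{\m\oplus j}-\mu_{\m\oplus i}^{t}\mu_{\m\oplus j}^{1-t}\,]$, while the collected quadratic remainders contribute only $\sum_{\m}D_{v,\m}\,O(q_{\m,\cdot}^{2})=O((\log n)^{2}/n^{d-1})=o(1)$ since $|\M|$ is a constant. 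Taking $t$ to be the maximizer in the definition~\eqref{eq:ch} of $D_+(i,j)$ gives $\sum_{g_2}\min\{\cdot,\cdot\}\le n^{-(1+o(1))D_+(i,j)}\cdot e^{o(1)}=n^{-D_+(i,j)+o(1)}$.

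Both the Bernoulli Chernoff calculation and the bookkeeping of $o(1)$ terms are routine; the one point that needs care---and the main obstacle---is keeping the two kinds of slack separate so that they genuinely collapse into a single $n^{o(1)}$ factor. The multiplicative $(1\pm o(1))$ on $D_{v,\m}$ scales a $\Theta(\log n)$ quantity and hence contributes an \emph{additive} $o(\log n)$ in the exponent of $e$, which is harmless since it turns into $n^{o(1)}$; by contrast the quadratic remainders from the $(1-q)^{t}$ expansion must be shown to be truly $o(1)$ and not merely $o(\log n)$, which is exactly where the regime $q_{\m,\cdot}=\Theta(\log n/n^{d-1})$ matters, giving $D_{v,\m}q_{\m,\cdot}^{2}=\Theta((\log n)^{2}/n^{d-1})\to0$. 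Finiteness of $\M$ makes every estimate uniform in $\m$ (and hence the hidden constants depend only on the model parameters, not on $v$ or on the particular $f_2\in\G^{\z}_{2}$), so the lemma follows exactly as stated.
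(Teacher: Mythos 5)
Your proposal is correct and follows essentially the same route as the paper's proof: bound $\min\{a,b\}$ by $a^{t}b^{1-t}$, factor over the independent Bernoulli hyperedges incident to $v$, Taylor-expand the $(1-q)^{t}(1-q')^{1-t}$ term, use the $D_{v,\m}=(1+o(1))R_{\m}$ guarantee from $f_2\in\G^{\z}_{2}$, and optimize over $t$. Your explicit factoring-out of the hyperedges not containing $v$ and the observation that $f_1\in\G^{\z}_{1}$ is not needed are minor clarifications the paper leaves implicit, but the argument is the same.
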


\begin{proof}[Proof of Lemma~\ref{lemma:min}]
As $G_2$ is equivalent to the collection of random variables $\big\{\{X^{\m}_{a} \}_{a=1}^{D_{v,\m}} \big\}_{\m \in \M}$, we have
\begin{align}
&\sum_{g_2} \min \left\{ \PP\left(g_2 \big|Z_v = i,  \zsv \right), \PP\left(g_2 \big|Z_v = j,  \zsv \right) \right\}  \notag \\
&=\sum_{\m \in \M} \sum_{ \{x^{\m}_a\}_{a=1}^{D_{v,\m}} } \min \left\{ \prod_{\m \in \M}\prod_{a=1}^{D_{v,\m}} \PP\left(x_a^{\m} \big|Z_v = i,  \zsv \right), \prod_{\m \in \M}\prod_{a=1}^{D_{v,\m}} \PP\left(x_a^{\m} \big|Z_v = j,  \zsv \right) \right\}  \\
&\le \sum_{\m \in \M} \sum_{ \{x^{\m}_a\}_{a=1}^{D_{v,\m}} } \left(\prod_{\m \in \M}\prod_{a=1}^{D_{v,\m}} \PP\left(x_a^{\m} \big|Z_v = i,  \zsv \right)\right)^t  \left(\prod_{\m \in \M}\prod_{a=1}^{D_{v,\m}} \PP\left(x_a^{\m} \big|Z_v = j,  \zsv \right)\right)^{1-t}  \label{eq:32} \\
&= \prod_{\m \in \M}\prod_{a=1}^{D_{v,\m}} \sum_{x_a^{\m} \in \{0,1\}} \PP\left(x_a^{\m} \big|Z_v = i,  \zsv \right)^t \PP\left(x_a^{\m} \big|Z_v = j,  \zsv \right)^{1-t} \\
&= \prod_{\m \in \M} \left[ \left(\frac{Q_{\m \oplus i} \log n}{n^{d-1}} \right)^t \left(\frac{Q_{\m \oplus j} \log n}{n^{d-1}} \right)^{1-t} + \left(1 - \frac{Q_{\m \oplus i} \log n}{n^{d-1}} \right)^t \left(1 - \frac{Q_{\m \oplus j} \log n}{n^{d-1}} \right)^{1-t} \right]^{D_{v,\m}}, \label{eq:34}
\end{align}	
where~\eqref{eq:32}-\eqref{eq:34} hold for any $t \in [0,1]$.
By applying a Taylor series expansion, we have
\begin{align}
&\left(1 - \frac{Q_{\m \oplus i} \log n}{n^{d-1}} \right)^t \left(1 - \frac{Q_{\m \oplus j} \log n}{n^{d-1}} \right)^{1-t} \notag \\
&= \exp\left\{\log \left\{\frac{(n^{d-1}-Q_{\m \oplus i } \log n)^t (n^{d-1}-Q_{\m \oplus j } \log n)^{1-t}}{n^{d-1}} \right\} \right\} \\
&= \exp\left\{t \log\left(n^{d-1}\left(1 - \frac{Q_{\m \oplus i} \log n}{n^{d-1}}\right) \right) + (1-t) \log\left(n^{d-1}\left(1 - \frac{Q_{\m \oplus j} \log n}{n^{d-1}}\right) \right) - \log(n^{d-1}) \right\} \\
&= \exp\left\{-t \frac{Q_{\m \oplus i}\log n}{n^{d-1}} - (1-t) \frac{Q_{\m \oplus j}\log n}{n^{d-1}} - \frac{1}{2}\left(tQ^2_{\m \oplus i} + (1-t)Q^2_{\m \oplus j} \right) \frac{(\log n)^2}{n^{2d-2}} + \mathcal{O}\left(\frac{(\log n)^3}{n^{3d-3}} \right) \right\} \\
&= 1 - t \frac{Q_{\m \oplus i}\log n}{n^{d-1}} - (1-t) \frac{Q_{\m \oplus j}\log n}{n^{d-1}} - \mathcal{O}\left(\frac{(\log n)^2}{n^{2d-2}} \right), \label{eq:neg}
\end{align}
where in~\eqref{eq:neg} one needs to be aware that the lower order term  has a negative coefficient.
Thus, one can rewrite~\eqref{eq:34} as  
\begin{align}
&\exp\left\{\sum_{\m \in \M} D_{v,\m} \cdot \log \left[1 - t \frac{Q_{\m \oplus i}\log n}{n^{d-1}} - (1-t) \frac{Q_{\m \oplus j}\log n}{n^{d-1}} + \left(\frac{Q_{\m \oplus i} \log n}{n^{d-1}} \right)^t \left(\frac{Q_{\m \oplus j} \log n}{n^{d-1}} \right)^{1-t} -  \mathcal{O}\left(\frac{(\log n)^2}{n^{2d-2}}\right) \right]  \right\} \notag \\
&= \exp\left\{-(\log n)\sum_{\m \in \M} \frac{D_{v,\m}}{n^{d-1}} \cdot \left( t Q_{\m \oplus i} + (1-t) Q_{\m \oplus j} - Q_{\m \oplus i}^t Q_{\m \oplus j}^{1-t} +  \mathcal{O}\left(\frac{\log n}{n^{d-1}}\right) \right)  \right\} \\
&\le  n^{-\sum_{\m \in \M} R'_{\m}\left( t Q_{\m \oplus i} + (1-t) Q_{\m \oplus j} - Q_{\m \oplus i}^t Q_{\m \oplus j}^{1-t}\right) + \mathcal{O}(\gamma_n/\log n) }, \label{eq:42}
\end{align}
where~\eqref{eq:42} is due to the fact that $\left(1 - (3\gamma_n/\log n) \right) R_{\m} \le D_{v,\m} \le (1 + n^{-\frac{1}{2}+\frac{\delta}{2}}) R_{\m}$.
Since~\eqref{eq:42} is valid for any $t \in [0,1]$ and recall that $\mu_{\m \oplus i} = R'_{\m} Q_{\m \oplus i}$, we eventually obtain that 
\begin{align}
\sum_{g_2} \min \left\{ \PP\left(g_2 \big|Z_v = i,  \zsv \right), \PP\left(g_2 \big|Z_v = j,  \zsv \right) \right\} &\le n^{-\max_{t \in [0,1]} \sum_{\m \in \M}  t\mu_{\m \oplus i} + (1-t) \mu_{\m \oplus j} - \mu_{\m \oplus i}^t \mu_{\m \oplus j}^{1-t} + o(1)} \\
&= n^{-D_+(i,j) +o(1)}.
\end{align}
This completes the proof of Lemma~\ref{lemma:min}.
\end{proof}

Since $\min_{i,j\in [k]: i \ne j} D_+(i,j) > 1$, there exists an $\varepsilon > 0$  such that $D_+(i,j) > 1 + \varepsilon$ for all $i\ne j$. By also noting that $L_u/L_l = n^{o(1)}$, we have that $\sum_{j \ne i} (p_j L_u/p_iL_l)\sum_{g_2} \min \{ \PP\left(g_2 |Z_v = i,  \zsv \right), \PP\left(g_2 \big|Z_v = j,  \zsv \right) \} \le n^{-(1+\varepsilon)+o(1)}$. Thus, one can bound the error probability for node $v$ from above as  
\begin{align}
\PP\left(\hZv \ne z_v  \right) \le n^{-(1+\varepsilon)+o(1)} + n^{-\omega(1)}.
\end{align}
Note that the above analysis is also valid for nodes that belong to any other communities (not necessarily  community $\V_i$), thus one can take a union bound over all the $n$ nodes to obtain that 
\begin{align}
\PP\left(\exists v \in [n]: \hZv \ne z_v  \right) = \PP\left(\widehat{\Z} \ne \z \right)  \le n^{-\varepsilon/2}
\end{align} 
when  $n$ is sufficiently large. 
This means that all the nodes can be recovered correctly with probability at least $1 - n^{-\varepsilon/2}$. 

\subsection{The Overall Success Probability}
Let $\EE_{\mathrm{suc}}$ be the event that $l(\z,\widehat{\Z}) = 0$. From the analysis of Stage 2, we know that for all $\z \in \mathcal{A}_{Z}, f_1 \in \G^{\z}_{1}, f_2 \in \G^{\z}_{2}$, and $\widehat{\z}^{(0)}$ satisfying $l(\z,\widehat{\z}) \le  \eta_n$, 
\begin{align}
\PP(\EE_{\mathrm{suc}}|\z,f_1,f_2, \widehat{\z}^{(0)}) \ge 1-n^{-\varepsilon/2}, \label{eq:suc}
\end{align}
Therefore, the overall success probability is 
\begin{align}
\PP\left(l(\Z,\widehat{\Z}) = 0 \right) &= \sum_{\z} \PP(\z) \sum_{f_1, f_2} \PP(f_1, f_2) \sum_{\widehat{\z}^{(0)}} \PP(\widehat{\z}^{(0)}|\z,f_1,f_2)  \cdot \PP(\EE_{\mathrm{suc}}|\z,f_1,f_2, \widehat{\z}^{(0)})\\
&\ge \sum_{\z \in \mathcal{A}_{Z}}\PP(\z) \sum_{f_1 \in \G^{\z}_{1}, f_2 \in \G^{\z}_{2}} \PP(f_1, f_2) \sum_{\widehat{\z}^{(0)}: l(\widehat{\z}^{(0)}, \z) \le \eta_n} \PP(\widehat{\z}^{(0)}|\z,f_1,f_2)  \cdot \PP(\EE_{\mathrm{suc}}|\z,f_1,f_2, \widehat{\z}^{(0)}) \\
& \ge (1 - \exp(-\Theta(n^{\delta}))) \cdot (1-2\sqrt{\epsilon_n}) \cdot  (1-\sqrt{\epsilon_n}) \cdot (1 - n^{-\varepsilon/2}) \label{eq:last} \\
&= 1 - o(1),
\end{align}
where inequality~\eqref{eq:last} follows from Lemma~\ref{lemma:size}, Lemma~\ref{lemma:realization}, the definition of good realization $f_1$ in Definition~\ref{def:subgraph}, and Eqn.~\eqref{eq:suc}.  This means that exact recovery is achievable.

\section{Proof of Converse (Theorem~\ref{thm:converse})} \label{sec:converse}
In this section, we show that when the model parameters $\big(\p, \{Q_{\T}\}_{\T \in \TT}\big)$ satisfy $\min_{i,j\in [k]: i \ne j} D_+(i,j) < 1$, exact recovery is impossible. 
This converse proof is inspired by that for the SBM~\cite{abbe2015community}, but is adapted to the $d$-HSBM setting.

First, we recall from Lemma~\ref{lemma:size} that with high probability the number of nodes in each community $\V_j$ is tightly concentrated around the expectation $np_j$ (i.e., the ground-truth community vector $\Z \in \mathcal{A}_Z$). Hence, we consider a fixed $\z \in \mathcal{A}_Z$ from now on. 
Let $S$ be a random set that contains $n/(\log n)^3$ randomly selected nodes from $[n]$. By applying the Chernoff bound, we can show that  with probability $1 - \exp(-\Theta(n^{\delta}))$ over the selection process, the number of nodes in both $\V_j$ and $S$, denoted by $\V_{j}^S$, satisfies 
\begin{align}
(1-n^{-\frac{1}{2}+\frac{\delta}{2}})^2 \frac{np_j}{(\log n)^3} \le |\V_{j}^S| \le (1+n^{-\frac{1}{2}+\frac{\delta}{2}})^2 \frac{np_j}{(\log n)^3}, \quad \forall j \in [k],\label{eq:S}
\end{align}   
and thus the number of nodes in both $\V_j$ and $S^c$, denoted by $\V_{j}^{S^c}$, satisfies
\begin{align}
np_j \left(1 - \frac{1}{(\log n)^3} -2n^{-\frac{1}{2}+\frac{\delta}{2}}\right) \le |\V_{j}^{S^c}| \le np_j \left(1 - \frac{1}{(\log n)^3} +2n^{-\frac{1}{2}+\frac{\delta}{2}}\right) , \quad \forall j \in [k]\label{eq:Sc}
\end{align}
for sufficiently large $n$. We then consider a fixed set $S$ that satisfies~\eqref{eq:S} and~\eqref{eq:Sc}. Let 
\begin{align}
f_{i,j}(t) \triangleq \sum_{\m \in \M} t\mu_{\m \oplus i} + (1-t)\mu_{\m \oplus j} - \mu_{\m \oplus i}^t \mu_{\m \oplus j}^{1-t},
\end{align}
and note that $D_+(i,j) = \max_{t \in [0,1]} f_{i,j}(t)$. To obtain the maximizer of $f_{i,j}(t)$, we set $f'_{i,j}(t) = 0$ and this implies 
\begin{align}
\sum_{\m \in \M} \mu_{\m \oplus i}^t  \mu_{\m \oplus j}^{1-t} \log\left(\frac{\mu_{\m \oplus i}}{\mu_{\m \oplus j}}\right) = \sum_{\m \in \M} \mu_{\m \oplus i} - \mu_{\m \oplus j}. \label{eq:dao}
\end{align} 
Let $\mc{\tau_{\m}^{i,j}} \triangleq \lfloor \mu_{\m \oplus i}^{t^*} \mu_{\m \oplus j}^{1-t^*} \log n \rfloor$ for each $\m \in \M$, where $t^*$ is set to satisfy~\eqref{eq:dao}.

\begin{definition}
For each node $v \in S$, let 
$N_v^{\m}$ denote the number of hyperedges that contains $v$ and other $d-1$ nodes from $[n] \setminus S$ that have community assignment $\m \in \M$.  A node $v \in S$ is said to be \emph{ambiguous} if $N_v^{\m} = \mc{\tau_{\m}^{i,j}}$ for all $\m \in \M$.  
\end{definition}

In the following, we show that if $D_+(i,j) < 1$, there is at least one ambiguous node in $\V_i^S$ and one ambiguous node in $\V_j^S$. For a node $v \in \V_i^S$, $N_v^{\m}$ equals the sum of $\R_{\m}$ i.i.d. Bernoulli random variables $\{B^{\m}_r \}_{r=1}^{\R_{\m}}$ with expectation $Q_{\m \oplus i}\frac{\log n}{n^{d-1}}$, where $\R_{\m} = \prod_{s\in[k]}\binom{|\V_{s}^{S^c}|}{m_s}$. Due to~\eqref{eq:S} and~\eqref{eq:Sc}, one can show that 
\begin{align}
\left(1 - \frac{1}{(\log n)^3} - 2n^{-\frac{1}{2}+\frac{\delta}{2}} \right) \le \frac{\R_{\m}}{R_{\m}} \le \left(1 - \frac{1}{(\log n)^3} + 2n^{-\frac{1}{2}+\frac{\delta}{2}} \right). \label{eq:rm}
\end{align}
Following~\cite[Exercise 2.2]{csiszar2011information}, one can show that the probability of $N_v^{\m} = \tau_{\m}^{i,j}$ is 
\begin{align}
\PP\left(\sum_{r=1}^{\R_{\m}} B^{\m}_r = \tau_{\m}^{i,j} \right) &= \exp\Bigg\{-\R_{\m} \DD\left(\frac{R_{\m}}{\R_{\m}} \frac{Q^{t^*}_{\m\oplus i}Q^{1-t^*}_{\m\oplus j}\log n}{n^{d-1}} \Bigg\Vert \frac{Q_{\m \oplus i} \log n}{n^{d-1}} \right) - \frac{1}{2}\log(2\pi \R_{\m}) \notag  \\
&\qquad\qquad\qquad- \frac{1}{2}\log\left(1 - \frac{R_{\m}}{\R_{\m}} \frac{Q^{t^*}_{\m\oplus i}Q^{1-t^*}_{\m\oplus j}\log n}{n^{d-1}}\right) - \frac{1}{2} \log\left(\frac{R_{\m}}{\R_{\m}} \frac{Q^{t^*}_{\m\oplus i}Q^{1-t^*}_{\m\oplus j}\log n}{n^{d-1}}\right) -  C_2 \Bigg\}, \label{eq:com1}
\end{align}   
where $C_2 > 0$ is a constant.
By using a Taylor series expansion and the fact that $\R_{\m}/R_{\m}$ is bounded (as shown in~\eqref{eq:rm}), we have 
\begin{align}
&\R_{\m} \DD\left(\frac{R_{\m}}{\R_{\m}} \frac{Q^{t^*}_{\m\oplus i}Q^{1-t^*}_{\m\oplus j}\log n}{n^{d-1}} \Bigg\Vert \frac{Q_{\m \oplus i} \log n}{n^{d-1}} \right) \notag \\
&= R_{\m}\frac{Q^{t^*}_{\m\oplus i}Q^{1-t^*}_{\m\oplus j}\log n}{n^{d-1}} \log\left(\frac{R_{\m}}{\R_{\m}} \left(\frac{Q_{\m \oplus j}}{Q_{\m \oplus i}} \right)^{1-t^*} \right) + \R_{\m}\left(1 - \frac{R_{\m}}{\R_{\m}}\frac{Q^{t^*}_{\m\oplus i}Q^{1-t^*}_{\m\oplus j}\log n}{n^{d-1}}  \right) \notag \\
&\qquad\qquad\qquad\qquad\qquad\qquad \qquad\qquad\qquad\qquad\qquad\qquad \times \log\left(1 + \frac{Q_{m \oplus i}\log n - \frac{R_{\m}}{\R_{\m}}Q^{t^*}_{\m \oplus i} Q^{1-t^*}_{\m \oplus j}\log n }{n^{d-1} - Q_{\m \oplus i}\log n} \right) \\
&= R'_{\m}Q^{t^*}_{\m\oplus i}Q^{1-t^*}_{\m\oplus j}(\log n) (1-t^*) \log\left(\frac{Q_{\m \oplus j}}{Q_{\m \oplus i}} \right) + \frac{\R_{\m}}{R_{\m}} \left( R'_{\m} Q_{\m \oplus i} \log n - \frac{R_{\m}}{\R_{\m}}R'_{\m}  Q^{t^*}_{\m\oplus i}Q^{1-t^*}_{\m\oplus j}\log n + \mathcal{O}(1/(\log n)^2) \right) \\
& = (\log n) \left[\mu^{t^*}_{\m\oplus i}\mu^{1-t^*}_{\m\oplus j} (1-t^*) \log\left(\frac{\mu_{\m \oplus j}}{\mu_{\m \oplus i}} \right) + \mu_{\m \oplus i} - \mu^{t^*}_{\m \oplus i} \mu^{1-t^*}_{\m \oplus j} + \mathcal{O}(1/(\log n)^3) \right]. \label{eq:com2}
\end{align}
Since $\R_{\m} = \Theta(n^{d-1})$, we also have 
\begin{align}
&\frac{1}{2}\log(2\pi \R_{\m}) + \frac{1}{2}\log\left(1 - \frac{R_{\m}}{\R_{\m}} \frac{Q^{t^*}_{\m\oplus i}Q^{1-t^*}_{\m\oplus j}\log n}{n^{d-1}}\right) + \frac{1}{2} \log\left(\frac{R_{\m}}{\R_{\m}} \frac{Q^{t^*}_{\m\oplus i}Q^{1-t^*}_{\m\oplus j}\log n}{n^{d-1}}\right) \\
&\qquad\qquad\qquad\qquad= \frac{d-1}{2}\log n -  \frac{d-1}{2}\log n + \mathcal{O}(\log\log n) = \mathcal{O}(\log\log n). \label{eq:com3}
\end{align}
Combining~\eqref{eq:com1},~\eqref{eq:com2}, and~\eqref{eq:com3}, we then have
\begin{align} 
\PP(N_v^{\m} = \tau_{\m}^{i,j}) = n^{-\left[(1-t^*)\mu^{t^*}_{\m \oplus i}\mu^{1-t^*}_{\m \oplus j}\log(\mu_{\m \oplus j}/\mu_{\m \oplus i}) + \mu_{\m \oplus i} - \mu^{t^*}_{\m \oplus i}\mu^{1-t^*}_{\m \oplus j} + o(1) \right]}. 
\end{align}
Taking all $\m \in \M$ into account, we obtain the probability that a node $v \in \V_i^S$ is ambiguous as follows:
\begin{align}
\PP(v\in \V_i^S \ \text{is ambiguous}) &= \PP(\forall \m \in \M: N_v^{\m} = \tau_{\m}^{i,j}) \\
&= n^{-\left[\sum_{\m \in \M} (1-t^*)\mu^{t^*}_{\m \oplus i}\mu^{1-t^*}_{\m \oplus j}\log(\mu_{\m \oplus j}/\mu_{\m \oplus i}) + \mu_{\m \oplus i} - \mu^{t^*}_{\m \oplus i}\mu^{1-t^*}_{\m \oplus j} + o(1) \right]} \label{eq:in1} \\
&= n^{-[D_+(i,j)+o(1)]}, \label{eq:in2}
\end{align}
where~\eqref{eq:in1} is due to the independence of $\{N_v^{\m} \}_{\m \in \M}$, and~\eqref{eq:in2} holds since $t^*$ satisfies~\eqref{eq:dao}.
Similarly, one can also show that for a node $v \in \V_j^S$, 
\begin{align}
\PP(v\in \V_j^S \ \text{is ambiguous}) &= \PP(\forall \m \in \M: N_v^{\m} = \tau_{\m}^{i,j}) \\
&= n^{-\left[\sum_{\m \in \M} t^*\mu^{t^*}_{\m \oplus i}\mu^{1-t^*}_{\m \oplus j}\log(\mu_{\m \oplus i}/\mu_{\m \oplus j}) + \mu_{\m \oplus j} - \mu^{t^*}_{\m \oplus i}\mu^{1-t^*}_{\m \oplus j} + o(1) \right]} \\
&= n^{-[D_+(i,j)+o(1)]}.
\end{align}
By noting that $D_+(i,j) < 1$ and the cardinalities of both $\V_i^S$ and $\V_j^S$ scale as $\Theta(n/(\log n)^3)$, one can show that with probability $1 - o(1)$, there is at least one ambiguous node in $\V_i^S$ (denoted by $v_1$) and also one ambiguous node in $\V_j^S$ (denoted by $v_2$). 

In addition, we prove that with high probability, node $v_1$ (resp. $v_2$) is not connected to any node in $S$. This is because the number of hyperedges that contains $v_1$ (resp. $v_2$) and another node in $S$ is at most $|S|\binom{n}{d-2} \le n^{d-1}/(\log n)^3$, and the probability of each hyperedge is at most $Q_{\max}(\log n)/n^{d-1}$, thus the probability that $v_1$ (resp. $v_2$) does not have any connection with other nodes in $S$ is at least 
\begin{align}
\left(1 - Q_{\max}\frac{\log n}{n^{d-1}} \right)^{\frac{n^{d-1}}{(\log n)^3}} \ge e^{-\frac{2Q_{\max}}{(\log n)^2}} \ge 1 - \frac{4Q_{\max}}{(\log n)^2},
\end{align}   
for sufficiently large $n$.
Finally, note that both $v_1$ and $v_2$ are not connected to any node in $S$, and both of them are ambiguous (i.e., have the same number of hyperedges $\{N_v^{\m} \}_{\m\in\M}$ outside $S$), thus it is impossible to distinguish them and  to achieve exact recovery.

\section{Conclusion, Discussions, and Future Directions} \label{sec:conclusion}

This paper establishes a sharp phase transition for exact recovery in the general $d$-HSBM, apart from a small subset of generative distributions such that there exists two communities with the same second-order degree profiles. We also develop a polynomial-time algorithm (with theoretical guarantees) that achieves the information-theoretic limit, showing that there is no information-computation gap. Our two-stage algorithm is based on hypergraph spectral clustering and local refinement steps.

Next, we discuss  some connections between our results and related works. 
\begin{enumerate}
\item The second-order degree profile condition for our algorithm to succeed is milder than the conditions of several existing hypergraph spectral clustering methods, e.g.,~\cite{ghoshdastidar2017consistency, chien2019minimax}, which typically require the $k$-th largest singular value of the expected hypergraph Laplacian $\E(\mathbf{L})$ to be sufficiently large (referred to as the \emph{singular value condition} below). Thus, our achievability result (Theorem~2) is applicable to a larger set of parameters. To be specific:
\begin{itemize}
\item[--] When the second-order degree profile condition is violated, there must exist two communities having the same second-order degree profile, which implies that the columns corresponding to these two communities in $\E(\mathbf{L})$ are the same. Thus,  the rank of $\E(\mathbf{L})$ is less than $k$ and the $k$-th largest singular value equals zero. This means that the singular value condition is also violated.

\item[--] When the singular value condition is violated, it does not necessarily imply that the second-order degree profile condition is violated. For example, suppose node $u \in \mathcal{V}_1$, node $v \in \mathcal{V}_2$, and their corresponding columns in $\E(\mathbf{L})$ satisfy $\E(\mathbf{L})_u = 2\E(\mathbf{L})_v$, then the rank of $\E(\mathbf{L})$ is less than $k$ and the $k$-th largest singular value is zero (i.e., the singular value condition is violated). However, since the columns corresponding to $\mathcal{V}_1$ and $\mathcal{V}_2$ are different (though they are linearly dependent), the second-order degree profiles of $\mathcal{V}_1$ and $\mathcal{V}_2$ are different, which does not imply that the second-order degree profile condition is violated.
\end{itemize}

\item Another work that is closely related to ours is~\cite{yuan2022testing}, which considered the fundamental question of whether communities exist or not in a hypergraph. They characterized the condition under which a hypergraph generated according to a $d$-HSBM can be successfully distinguished from a hypergraph generated according to an Erd\H{o}s–R\'{e}nyi hypergraph model, where the $d$-HSBM contains $k$ equal-sized communities and the hyperedge probabilities are assumed to be either $a_n$ or $b_n$ (depending on whether all $d$ nodes belong to a same community). Their main messages are that (i) when $a_n, b_n \in o(n^{-d+1})$, these two models are indistinguishable; (ii) when $a_n, b_n \in \omega(n^{-d+1})$, their proposed test ensures the two models to be distinguishable with probability approaching one; (iii)  when $a_n, b_n \in \Theta(n^{-d+1})$, the two models are distinguishable if the so-called SNR is greater than a certain threshold, while indistinguishable if the SNR is below another threshold. Comparing~\cite{yuan2022testing} with our work, it is interesting to note that the phase transition occurs in the \emph{constant average degrees regime} for detecting the existence of communities~\cite{yuan2022testing}, while the phase transition occurs in the \emph{logarithmic average degrees regime} for exactly recovering communities.

\item Community detection in hypergraphs is also related to the planted $k$-SAT problem~\cite{feldman2018complexity}, in which the objective is to identify a planted assignment $\sigma \in \{\pm 1\}^n$ of $n$ Boolean variables $\{x_1,x_2 \ldots, x_n\}$ given a sequence of randomly generated \emph{$k$-clauses}, where each $k$-clause is a collection of $k$ distinct elements chosen from $\{x_1,x_2 \ldots, x_n\}$ and their negations $\{\bar{x}_1,\bar{x}_2 \ldots, \bar{x}_n\}$. Let $\mathcal{X}_k$ be the set of all $k$-clauses (with $|\mathcal{X}_k| = \binom{2n}{k}$), and $Q: \{\pm 1\}^k \to [0,1]$ be a probability distribution\footnote{Probability distributions of special interests are that satisfy $Q(\{-1,\ldots, -1\}) = 0$, which correspond to a common assumption that only satisfied clauses are allowed to appear.} on $\{\pm 1\}^k$ such that $\sum_{\mathbf{y} \in \{\pm 1\}^k}Q(\mathbf{y}) = 1$.   At each time when we generate a $k$-clause, the probability of a $k$-clause $\mathbf{c} = [c_1, \ldots, c_k]$ (where $c_i \in \{x_1,x_2 \ldots, x_n\} \cup \{\bar{x}_1,\bar{x}_2 \ldots, \bar{x}_n\}$) being selected is 
$\PP(\mathbf{c} \mbox{ is selected}) = \frac{Q(\sigma(\mathbf{c}))}{\sum_{\mathbf{c}' \in \{\pm 1\}^k} Q(\sigma(\mathbf{c}')) },$
where $\sigma(\mathbf{c}) \in \{\pm 1\}^k$ is the assignment of the $k$ elements in $\mathbf{c}$ under the assignment $\sigma$. In the planted $k$-SAT problem, we generate $M$ independent $k$-clauses, and the question of interest is to find how many clauses $M$ are required for successful recovery of the assignment $\sigma$ with high probability.
This planted $k$-SAT problem can be viewed as a random hypergraph $(\mathcal{V}, \mathcal{E})$, where the node set $\mathcal{V} = \{x_1,x_2 \ldots, x_n\} \cup \{\bar{x}_1,\bar{x}_2 \ldots, \bar{x}_n\}$ is of size $2n$, and the edge set $\mathcal{E}$ contains $M$ $k$-uniform hyperedges with each one corresponding to a randomly generated $k$-clause. The nodes in $\mathcal{V}$ are partitioned into two communities that  correspond to `$+1$' and `$-1$', where the two communities are of exactly equal sizes by construction. While  the planted $k$-SAT problem can be approximately viewed as the HSBM problem studied in this work, there are also several notable differences. First, the generation process of $k$-clauses is different from the generation process of hyperedges in the HSBM---the former allows each $k$-clause to be selected for multiple times, while the latter only allows each $k$-uniform hyperedge to be selected once. Second, the assignments of nodes in the planted $k$-SAT problem are strongly correlated, e.g., the signs of $x_i$ and $\bar{x}_i$ must be different, while there is no such restriction in the HSBM.

$\quad$ Despite the differences, our algorithm is applicable to the planted $k$-SAT problem. One can first convert the $k$-SAT problem to a hypergraph with $2n$ nodes and $M$ $k$-uniform hyperedges, construct the corresponding trimmed hypergraph Laplacian, and then
 apply our spectral clustering method (lines 2-16 in Algorithm~1) to obtain an initial assignment $\widehat{\sigma}^{(0)}$ of nodes $\{x_1,x_2 \ldots, x_n\} \cup \{\bar{x}_1,\bar{x}_2 \ldots, \bar{x}_n\}$. It is expected that this stage leads to an almost exact recovery of the true assignment $\sigma$, as long as $M \in \omega(n)$ (corresponding to hyperedge probabilities being $\omega(1/n^{k-1})$ in the HSBM). In the second stage, one can use the local MAP estimation for each of the $2n$ nodes (lines 17-20 in Algorithm~1) to refine the assignments. However, since $x_i$ and $\bar{x}_i$ in the planted $k$-SAT problem are of different signs, one can instead choose to \emph{refine each pair $(x_i,\bar{x}_i)$ jointly} via the local MAP estimation, which may lead to a better performance. It is expected that when the probability distribution $Q: \{\pm 1\}^k \to [0,1]$ in the planted $k$-SAT problem is specialized to a simple function whose values depend only on the number of `$+1$' in the input (in which case the distribution $Q$ is equivalent to the hyperedge  probabilities $\{Q_{\mathbf{T}}\}_{\mathbf{T} \in \mathcal{T}}$ in the HSBM), the second stage leads to exact recovery of the true assignment with high probability if $M \in \Theta(n \log n)$ (corresponding to the logarithmic average degrees regime in the HSBM). We also expect that the GCH-divergence plays a role in the minimum pre-constant of $\Theta(n \log n)$; however, this pre-constant may not be obtained as a direct consequence of our result due to the several important differences between the planted $k$-SAT problem and HSBM.
\end{enumerate}

Finally, we put forth two promising directions for future work.

\begin{enumerate}
	\item Our algorithm fails if the parameters belong to $\Xi$ because we apply the hypergraph spectral clustering method to the processed hypergraph Laplacian $\AH$ (rather than the observed adjacency tensor $\A$). This pre-processing step from $\A$ to $\AH$ annihilates some salient information for distinguishing two communities with the same second-order degree profile. Thus, any clustering algorithms that rely merely on $\AH$ must be restricted to this second-order degree profile condition. On the other hand, we conjecture that the second-order degree profile condition is \emph{not necessary}, and this issue may be circumvented if one directly applies clustering algorithms to the adjacency tensor $\A$ (such as the tensor-based method proposed in~\cite{ke2019community}). As shown empirically in~\cite[Section~3.4]{ke2019community} (particularly in Figure~3), their method avoids unwanted information loss caused by projecting hypergraphs to weighted graphs under a variety of parameter settings. \mc{Unfortunately, their concentration tools for random tensors are only applicable when the average degree is $\omega(\log^2(n))$, and thus are not powerful enough for the logarithmic average degrees regime considered in this paper. The analysis in~\cite{ke2019community} of tensor concentration relies on the notion of the \emph{incoherent tensor operator norm}, the properties of the tensor, as well as concentration inequalities such as the Bernstein's inequality and Chernoff bound. In contrast, the concentration of random matrices is relatively well understood, and the analysis in this paper relies mainly on techniques from random matrix theory.
 In future work, it is interesting to investigate whether tensor-based methods can be applied to hypergraphs with logarithmic average degrees, and to validate whether our conjecture that the exact recovery threshold $\min_{i,j\in [k]: i \ne j} D_+(i,j) =1$ holds even without the condition on the second-order degree profile discussed in Section~\ref{sec:distance}. }

\item It would also be interesting to extend our theory to even more general settings and other variants of the HSBM, such as the non-uniform HSBM (as proposed in~\cite{ghoshdastidar2015provable,ghoshdastidar2015spectral,ghoshdastidar2017consistency}), HSBM with overlapping communities, weighted or labelled HSBMs, HSBM with side information, etc. 
\end{enumerate}

\appendices

\section{Proof of Lemma~\ref{lemma:M}} \label{appendix:M}
Without loss of generality, we assume nodes $u$ and $v$ respectively belong to communities $\V_i$ and $\V_j$. Since we require $\big(\p, \{Q_{\T}\}_{\T \in \TT}\big) \notin \Xi$, there must exist a $s \in [k]$ such that
\begin{align}
\sum_{\m \in \M: m_s \ge 1} m_s \mu_{\m \oplus i}  \ne \sum_{\m \in \M: m_s \ge 1} m_s \mu_{\m \oplus j}.
\label{eq:assume}
\end{align}
Let $R_{\m,s} \triangleq \binom{np_s}{m_s-1} \cdot \prod_{a\in[k]\setminus\{s\}}\binom{np_a}{m_a}.$ 
Thus, we have 
\begin{align}
\Vert \MM_u - \MM_v \Vert_2^2  &\ge \sum_{w \in \V_s} (\mathrm{M}_{w,u} - \mathrm{M}_{w,v})^2 \\
&= \sum_{w \in \V_s} \left(\sum_{\m\in\M:m_s \ge 1} R_{\m,s}Q_{\m \oplus i} \frac{\gamma_n}{n^{d-1}} - \sum_{\m\in\M:m_s \ge 1} R_{\m,s}Q_{\m \oplus j} \frac{\gamma_n}{n^{d-1}}\right)^2 \\
&= \sum_{w \in \V_s} \left(\frac{\gamma_n}{np_s} \left[\sum_{\m\in\M:m_s \ge 1} m_s\mu_{\m \oplus i}  - m_s \mu_{\m \oplus j}\right] \right)^2 \label{eq:M1}\\
&= \Omega(\gamma_n^2/n), \label{eq:M2}
\end{align}
where~\eqref{eq:M1} holds since $R_{\m,s} = m_sR_{\m}/(np_s)$ and $R_{\m}Q_{\m \oplus i}/n^{d-1} = \mu_{\m\oplus i}$, and~\eqref{eq:M2} follows from~\eqref{eq:assume} and $|\V_s| \approx np_s$.

\section{Proof of Lemma~\ref{lemma:realization}} \label{appendix:realization}

For any realization $F_1 = f_1$, let $P_{\mathrm{suc}}(f_1,\z)$ be the probability that running a hypergraph spectral clustering method on $G_1$ (which depends on $f_1$) ensures $l(\z, \widehat{\Z}^{(0)}) \le \eta_n$.  From Theorem~\ref{thm:weak}, we have 
\begin{align}
\sum_{f_1} \PP(F_1 = f_1)P_{\mathrm{suc}}(f_1,\z) \ge 1 - \epsilon_n. \label{eq:con}
\end{align}
We now prove Lemma~\ref{lemma:realization} by contradiction. Suppose the probability that $F_1 \in \G^{\z}_{1}$ is less than $1 - \sqrt{\epsilon_n}$, then we have 
\begin{align}
\sum_{f_1} \PP(F_1 = f_1)P_{\mathrm{suc}}(f_1,\z) &< \sum_{f_1 \in \G^{\z}_{1}} \PP(F_1 = f_1) + \sum_{f_1 \notin \G^{\z}_{1}} \PP(F_1 = f_1) (1-\sqrt{\epsilon_n}) \label{eq:def}\\
&=  \sum_{f_1 \in \G^{\z}_{1}} \PP(F_1 = f_1) + (1-\sqrt{\epsilon_n}) \Bigg(1 -  \sum_{f_1 \notin \G^{\z}_{1}} \PP(F_1 = f_1)\Bigg) \\
&< 1 - \sqrt{\epsilon_n} + (1 - \sqrt{\epsilon_n})\cdot \sqrt{\epsilon_n}  \label{eq:ass}\\
&= 1 - \epsilon_n, \label{eq:end}
\end{align}
where~\eqref{eq:def} follows from the fact that $P_{\mathrm{suc}}(f_1,\z) < 1- \sqrt{\epsilon_n}$ for $f_1 \notin \G^{\z}_{1}$ (see Definition~\ref{def:subgraph}), and~\eqref{eq:ass} is due to our assumption. Since Eqns.~\eqref{eq:def}-\eqref{eq:end} contradict with the fact in~\eqref{eq:con}, we obtain that $\PP(F_1 \in \G^{\z}_{1}) \ge 1 - \sqrt{\epsilon_n}$.

Let $N_{\m} \triangleq \prod_{s=1}^k \binom{|\V_s|}{m_s}$ for each $\m \in \M$. For each node $v \in [n]$, the expected number of hyperedges in $F_2$ that contain node $v$ and other $d-1$ nodes with community assignment $\m$ is $\E(D_{v,\m}) = N_{\m}\left(1 - \frac{\gamma_n}{\log n} \right)$. By applying the Chernoff bound, we have 
\begin{align}
\PP\left(D_{v,\m} \le N_{\m}\left(1 - \frac{2\gamma_n}{\log n} \right)  \right) \le \PP \left(D_{v,\m} \le \left(1 - \frac{\gamma_n}{\log n} \right) \E(D_{v,\m}) \right) &\le \exp\left(-\frac{1}{3} \frac{\gamma_n^2}{(\log n)^2}\E(D_{v,\m}) \right) \\
&= \exp\left(-\Theta(n^{d-1} \gamma_n^2/ (\log n)^2 ) \right).
\end{align} 
Taking a union bound over all $\m \in \M$ and all the $n$ nodes, we have that with probability at least $1 - \exp\left(-\Theta(n^{d-1} \gamma_n^2/ (\log n)^2 ) \right)$, every node $v \in [n]$ satisfies 
\begin{align}
\left(1 - (2\gamma_n/\log n) \right) N_{\m} \le D_{v,\m} \le N_{\m}, \quad \forall \m \in \M.
\end{align} 
Combining the fact that $(1 - n^{-\frac{1}{2}+\frac{\delta}{2}}) R_{\m} \le N_{\m} \le (1 + n^{-\frac{1}{2}+\frac{\delta}{2}}) R_{\m}$ (since $\z \in \mathcal{A}_{Z}$), we have 
\begin{align}
\left(1 - (3\gamma_n/\log n) \right) R_{\m} \le D_{v,\m} \le (1 + n^{-\frac{1}{2}+\frac{\delta}{2}}) R_{\m}, \quad \forall \m \in \M.
\end{align}
Thus, $\PP(F_2 \in \G^{\z}_{2}) \ge 1 - \exp\left(-\Theta(n^{d-1} \gamma_n^2/ (\log n)^2 ) \right)$. This completes the proof.

\section{Proof of Lemma~\ref{lemma:approximation}} \label{appendix:approximation}
Note that $\E(\sum_{a=1}^{D'_v} Y_a) \le D'_v \cdot Q_{\max}(\log n)/n^{d-1} = \mathcal{O}(\eta_n \log n)$, since the success probability of each Bernoulli random variable is at most $Q_{\max}(\log n)/n^{d-1}$. In the following, we show that the probability that $\sum_{a=1}^{D'_v} Y_a \ge c_n \log n$ is at most $n^{-\omega(1)}$, where $c_n \log n = \omega(\eta_n \log n)$. Note that 
\begin{align}
\PP\left(\sum_{a=1}^{D'_v} Y_a \ge c_n \log n \right) &= \sum_{\theta=c_n \log n}^{D'_v} \PP\left(\sum_{a=1}^{D'_v} Y_a = \theta \right)\\
&= \sum_{\theta=c_n \log n}^{D'_v}  \binom{D'_v}{\theta} \left(\frac{Q_{\max}\log n}{n^{d-1}} \right)^{\theta} \left(1 - \frac{Q_{\max}\log n}{n^{d-1}} \right)^{D'_v-\theta} \label{eq:28}\\
&\le D'_v \cdot \left(\frac{eD'_v}{c_n \log n} \right)^{c_n \log n} \left(\frac{Q_{\max}\log n}{n^{d-1}} \right)^{c_n \log n} \label{eq:29} \\
&= D'_v \cdot n^{-c_n \log(c_n /\eta_n)}  = n^{-\omega(1)},
\end{align}
where~\eqref{eq:29} follows from the facts that $\binom{n}{k} \le (en/k)^k$ and $\theta = c_n \log n$ maximizes the terms in~\eqref{eq:28}.     

We then prove the second part.
For random variables $\{Y_a\}_{a =1}^{D'_v}$, we have 
\begin{align}
\frac{Q_{\min}}{Q_{\max}} \le &\frac{\PP(Y_a = 1 |Z_v = j, \hzz)}{\PP(Y_a= 1 |Z_v = j, \zsv)} \le \frac{Q_{\max}}{Q_{\min}}, \quad \text{and} \\
1 - \frac{(Q_{\max} - Q_{\min}) \log n}{n^{d-1}} \le &\frac{\PP(Y_a = 0 |Z_v = j, \hzz)}{\PP(Y_a = 0 |Z_v = j, \zsv)} \le 1 + \frac{(Q_{\max} - Q_{\min}) \log n}{n^{d-1}}.
\end{align}
Since $\sum_{a=1}^{D'_v} y_a \le c_n \log n$ for $g_2 \in \G_v$, we have 
\begin{align}
&\frac{\PP(g_2 |Z_v = j,  \hzz)}{\PP(g_2 |Z_v = j,  \zsv)} = \frac{\prod_{a=1}^{D'_v}\PP(y_a |Z_v = j,  \hzz)}{\prod_{a=1}^{D'_v} \PP(y_a|Z_v = j,  \zsv)} \le \left(\frac{Q_{\max}}{Q_{\min}}\right)^{c_n \log n} \left(1 + \frac{(Q_{\max} - Q_{\min}) \log n}{n^{d-1}}\right)^{D'_v - c_n \log n} \triangleq L_h, \\
& \frac{\PP(g_2 |Z_v = j,  \hzz)}{\PP(g_2 |Z_v = j,  \zsv)} = \frac{\prod_{a=1}^{D'_v} \PP(y_a |Z_v = j,  \hzz)}{\prod_{a=1}^{D'_v}\PP(y_a|Z_v = j,  \zsv)} \ge \left(\frac{Q_{\min}}{Q_{\max}}\right)^{c_n \log n} \left(1 - \frac{(Q_{\max} - Q_{\min}) \log n}{n^{d-1}}\right)^{D'_v - c_n \log n} \triangleq L_l,
\end{align}
and note that $L_h = n^{o(1)}$ and $L_l = n^{-o(1)}$.

\ifCLASSOPTIONcaptionsoff
  \newpage
\fi

\bibliographystyle{IEEEtran}
\bibliography{reference}

\begin{thebibliography}{10}
\providecommand{\url}[1]{#1}
\csname url@samestyle\endcsname
\providecommand{\newblock}{\relax}
\providecommand{\bibinfo}[2]{#2}
\providecommand{\BIBentrySTDinterwordspacing}{\spaceskip=0pt\relax}
\providecommand{\BIBentryALTinterwordstretchfactor}{4}
\providecommand{\BIBentryALTinterwordspacing}{\spaceskip=\fontdimen2\font plus
\BIBentryALTinterwordstretchfactor\fontdimen3\font minus
  \fontdimen4\font\relax}
\providecommand{\BIBforeignlanguage}[2]{{%
\expandafter\ifx\csname l@#1\endcsname\relax
\typeout{** WARNING: IEEEtran.bst: No hyphenation pattern has been}%
\typeout{** loaded for the language `#1'. Using the pattern for}%
\typeout{** the default language instead.}%
\else
\language=\csname l@#1\endcsname
\fi
#2}}
\providecommand{\BIBdecl}{\relax}
\BIBdecl

\bibitem{holland1983stochastic}
P.~W. Holland, K.~B. Laskey, and S.~Leinhardt, ``Stochastic blockmodels: First
  steps,'' \emph{Social networks}, vol.~5, no.~2, pp. 109--137, 1983.

\bibitem{abbe2015exact}
E.~Abbe, A.~S. Bandeira, and G.~Hall, ``Exact recovery in the stochastic block
  model,'' \emph{IEEE Transactions on Information Theory}, vol.~62, no.~1, pp.
  471--487, 2015.

\bibitem{mossel2015consistency}
E.~Mossel, J.~Neeman, and A.~Sly, ``Consistency thresholds for the planted
  bisection model,'' in \emph{Proceedings of the forty-seventh annual ACM
  symposium on Theory of computing}, 2015, pp. 69--75.

\bibitem{abbe2015community}
E.~Abbe and C.~Sandon, ``Community detection in general stochastic block
  models: Fundamental limits and efficient algorithms for recovery,'' in
  \emph{IEEE 56th Annual Symposium on Foundations of Computer Science}, 2015,
  pp. 670--688.

\bibitem{abbe2015recovering}
------, ``Recovering communities in the general stochastic block model without
  knowing the parameters,'' \emph{arXiv preprint arXiv:1506.03729}, 2015.

\bibitem{hajek2017information}
B.~Hajek, Y.~Wu, and J.~Xu, ``Information limits for recovering a hidden
  community,'' \emph{IEEE Transactions on Information Theory}, vol.~63, no.~8,
  pp. 4729--4745, 2017.

\bibitem{reeves2019geometry}
G.~Reeves, V.~Mayya, and A.~Volfovsky, ``The geometry of community detection
  via the {MMSE} matrix,'' in \emph{2019 IEEE International Symposium on
  Information Theory (ISIT)}, 2019, pp. 400--404.

\bibitem{jog2015information}
V.~Jog and P.-L. Loh, ``Information-theoretic bounds for exact recovery in
  weighted stochastic block models using the {R}enyi divergence,'' \emph{arXiv
  preprint arXiv:1509.06418}, 2015.

\bibitem{yun2016optimal}
S.-Y. Yun and A.~Proutiere, ``Optimal cluster recovery in the labeled
  stochastic block model,'' \emph{Advances in Neural Information Processing
  Systems}, vol.~29, pp. 965--973, 2016.

\bibitem{yun2014accurate}
------, ``Accurate community detection in the stochastic block model via
  spectral algorithms,'' \emph{arXiv preprint arXiv:1412.7335}, 2014.

\bibitem{chin2015stochastic}
P.~Chin, A.~Rao, and V.~Vu, ``Stochastic block model and community detection in
  sparse graphs: A spectral algorithm with optimal rate of recovery,'' in
  \emph{Conference on Learning Theory}, 2015, pp. 391--423.

\bibitem{hajek2016achieving}
B.~Hajek, Y.~Wu, and J.~Xu, ``Achieving exact cluster recovery threshold via
  semidefinite programming,'' \emph{IEEE Transactions on Information Theory},
  vol.~62, no.~5, pp. 2788--2797, 2016.

\bibitem{hajek2016achieving2}
------, ``Achieving exact cluster recovery threshold via semidefinite
  programming: Extensions,'' \emph{IEEE Transactions on Information Theory},
  vol.~62, no.~10, pp. 5918--5937, 2016.

\bibitem{montanari2016}
A.~Montanari and S.~Sen, ``Semidefinite programs on sparse random graphs and
  their application to community detection,'' in \emph{Proceedings of the
  forty-eighth annual ACM symposium on Theory of Computing}, 2016, pp.
  814--827.

\bibitem{caltagirone2017recovering}
F.~Caltagirone, M.~Lelarge, and L.~Miolane, ``Recovering asymmetric communities
  in the stochastic block model,'' \emph{IEEE Transactions on Network Science
  and Engineering}, vol.~5, no.~3, pp. 237--246, 2017.

\bibitem{agarwal2017multisection}
N.~Agarwal, A.~S. Bandeira, K.~Koiliaris, and A.~Kolla, ``Multisection in the
  stochastic block model using semidefinite programming,'' in \emph{Compressed
  Sensing and its Applications}, 2017, pp. 125--162.

\bibitem{perry2017semidefinite}
A.~Perry and A.~S. Wein, ``A semidefinite program for unbalanced multisection
  in the stochastic block model,'' in \emph{2017 International Conference on
  Sampling Theory and Applications (SampTA)}, 2017, pp. 64--67.

\bibitem{asadi2017compressing}
A.~R. Asadi, E.~Abbe, and S.~Verd{\'u}, ``Compressing data on graphs with
  clusters,'' in \emph{IEEE Int. Symp. Inf. Theory (ISIT)}, 2017, pp.
  1583--1587.

\bibitem{saad2018community}
H.~Saad and A.~Nosratinia, ``Community detection with side information: Exact
  recovery under the stochastic block model,'' \emph{IEEE Journal of Selected
  Topics in Signal Processing}, vol.~12, no.~5, pp. 944--958, 2018.

\bibitem{saad2020recovering}
------, ``Recovering a single community with side information,'' \emph{IEEE
  Transactions on Information Theory}, vol.~66, no.~12, pp. 7939--7966, 2020.

\bibitem{saad2018exact}
------, ``Exact recovery in community detection with continuous-valued side
  information,'' \emph{IEEE Signal Processing Letters}, vol.~26, no.~2, pp.
  332--336, 2018.

\bibitem{mayya2019mutual}
V.~Mayya and G.~Reeves, ``Mutual information in community detection with
  covariate information and correlated networks,'' in \emph{57th Annual
  Allerton Conference on Communication, Control, and Computing (Allerton)},
  2019, pp. 602--607.

\bibitem{abbe2017community}
E.~Abbe, ``Community detection and stochastic block models: recent
  developments,'' \emph{The Journal of Machine Learning Research}, vol.~18,
  no.~1, pp. 6446--6531, 2017.

\bibitem{chertok2010efficient}
M.~Chertok and Y.~Keller, ``Efficient high order matching,'' \emph{IEEE
  Transactions on Pattern Analysis and Machine Intelligence}, vol.~32, no.~12,
  pp. 2205--2215, 2010.

\bibitem{liu2010robust}
H.~Liu, L.~Latecki, and S.~Yan, ``Robust clustering as ensembles of affinity
  relations,'' \emph{Advances in neural information processing systems},
  vol.~23, pp. 1414--1422, 2010.

\bibitem{ghoshdastidar2014consistency}
D.~Ghoshdastidar and A.~Dukkipati, ``Consistency of spectral partitioning of
  uniform hypergraphs under planted partition model,'' \emph{Advances in Neural
  Information Processing Systems}, vol.~27, pp. 397--405, 2014.

\bibitem{ghoshdastidar2015provable}
------, ``A provable generalized tensor spectral method for uniform hypergraph
  partitioning,'' in \emph{International Conference on Machine Learning}, 2015,
  pp. 400--409.

\bibitem{ghoshdastidar2015spectral}
------, ``Spectral clustering using multilinear svd: Analysis, approximations
  and applications,'' in \emph{Proceedings of the AAAI Conference on Artificial
  Intelligence}, vol.~29, no.~1, 2015.

\bibitem{ghoshdastidar2017consistency}
------, ``Consistency of spectral hypergraph partitioning under planted
  partition model,'' \emph{Annals of Statistics}, vol.~45, no.~1, pp. 289--315,
  2017.

\bibitem{pal2019community}
S.~Pal and Y.~Zhu, ``Community detection in the sparse hypergraph stochastic
  block model,'' \emph{Random Structures \& Algorithms}, vol.~59, no.~3, pp.
  407--463, 2021.

\bibitem{cole2020exact}
S.~Cole and Y.~Zhu, ``Exact recovery in the hypergraph stochastic block model:
  A spectral algorithm,'' \emph{Linear Algebra and its Applications}, vol. 593,
  pp. 45--73, 2020.

\bibitem{ahn2017information}
K.~Ahn, K.~Lee, and C.~Suh, ``Information-theoretic limits of subspace
  clustering,'' in \emph{2017 IEEE International Symposium on Information
  Theory (ISIT)}, 2017, pp. 2473--2477.

\bibitem{ahn2018hypergraph}
------, ``Hypergraph spectral clustering in the weighted stochastic block
  model,'' \emph{IEEE Journal of Selected Topics in Signal Processing},
  vol.~12, no.~5, pp. 959--974, 2018.

\bibitem{kim2018stochastic}
C.~Kim, A.~S. Bandeira, and M.~X. Goemans, ``Stochastic block model for
  hypergraphs: Statistical limits and a semidefinite programming approach,''
  \emph{arXiv preprint arXiv:1807.02884}, 2018.

\bibitem{lee2020robust}
J.~Lee, D.~Kim, and H.~W. Chung, ``Robust hypergraph clustering via convex
  relaxation of truncated mle,'' \emph{IEEE Journal on Selected Areas in
  Information Theory}, 2020.

\bibitem{kim2017community}
C.~Kim, A.~S. Bandeira, and M.~X. Goemans, ``Community detection in
  hypergraphs, spiked tensor models, and sum-of-squares,'' in \emph{2017
  International Conference on Sampling Theory and Applications (SampTA)}, 2017,
  pp. 124--128.

\bibitem{ke2019community}
Z.~T. Ke, F.~Shi, and D.~Xia, ``Community detection for hypergraph networks via
  regularized tensor power iteration,'' \emph{arXiv preprint arXiv:1909.06503},
  2019.

\bibitem{angelini2015spectral}
M.~C. Angelini, F.~Caltagirone, F.~Krzakala, and L.~Zdeborov{\'a}, ``Spectral
  detection on sparse hypergraphs,'' in \emph{2015 53rd Annual Allerton
  Conference on Communication, Control, and Computing (Allerton)}, 2015, pp.
  66--73.

\bibitem{lesieur2017statistical}
T.~Lesieur, L.~Miolane, M.~Lelarge, F.~Krzakala, and L.~Zdeborov{\'a},
  ``Statistical and computational phase transitions in spiked tensor
  estimation,'' in \emph{2017 IEEE International Symposium on Information
  Theory (ISIT)}, 2017, pp. 511--515.

\bibitem{chien2019minimax}
I.~E. Chien, C.-Y. Lin, and I.-H. Wang, ``On the minimax misclassification
  ratio of hypergraph community detection,'' \emph{IEEE Transactions on
  Information Theory}, vol.~65, no.~12, pp. 8095--8118, 2019.

\bibitem{lin2017fundamental}
C.-Y. Lin, I.~E. Chien, and I.-H. Wang, ``On the fundamental statistical limit
  of community detection in random hypergraphs,'' in \emph{2017 IEEE
  International Symposium on Information Theory (ISIT)}, 2017, pp. 2178--2182.

\bibitem{chien2018community}
I.~Chien, C.-Y. Lin, and I.-H. Wang, ``Community detection in hypergraphs:
  Optimal statistical limit and efficient algorithms,'' in \emph{International
  Conference on Artificial Intelligence and Statistics}, 2018, pp. 871--879.

\bibitem{ahn2019community}
K.~Ahn, K.~Lee, and C.~Suh, ``Community recovery in hypergraphs,'' \emph{IEEE
  Transactions on Information Theory}, vol.~65, no.~10, pp. 6561--6579, 2019.

\bibitem{liang2021information}
J.~Liang, C.~Ke, and J.~Honorio, ``Information theoretic limits of exact
  recovery in sub-hypergraph models for community detection,'' in \emph{2021
  IEEE International Symposium on Information Theory (ISIT)}, 2021, pp.
  2578--2583.

\bibitem{yuan2021information}
M.~Yuan and Z.~Shang, ``Information limits for detecting a subhypergraph,''
  \emph{Stat}, vol.~10, no.~1, p. e407, 2021.

\bibitem{cover2012elements}
T.~M. Cover and J.~A. Thomas, \emph{Elements of Information Theory}.\hskip 1em
  plus 0.5em minus 0.4em\relax John Wiley \& Sons, 2012.

\bibitem{zhang2016minimax}
A.~Y. Zhang and H.~H. Zhou, ``Minimax rates of community detection in
  stochastic block models,'' \emph{The Annals of Statistics}, vol.~44, no.~5,
  pp. 2252--2280, 2016.

\bibitem{gao2017achieving}
C.~Gao, Z.~Ma, A.~Y. Zhang, and H.~H. Zhou, ``Achieving optimal
  misclassification proportion in stochastic block models,'' \emph{The Journal
  of Machine Learning Research}, vol.~18, no.~1, pp. 1980--2024, 2017.

\bibitem{jain2013low}
P.~Jain, P.~Netrapalli, and S.~Sanghavi, ``Low-rank matrix completion using
  alternating minimization,'' in \emph{STOC}, 2013, pp. 665--674.

\bibitem{keshavan2010matrix}
R.~H. Keshavan, A.~Montanari, and S.~Oh, ``Matrix completion from a few
  entries,'' \emph{IEEE Trans. Inf. Theory}, vol.~56, no.~6, pp. 2980--2998,
  2010.

\bibitem{zhang2020mc2g}
Q.~Zhang, G.~Suh, C.~Suh, and V.~Y.~F. Tan, ``{MC2G}: An efficient algorithm
  for matrix completion with social and item similarity graphs,'' \emph{IEEE
  Transactions on Signal Processing}, vol.~70, pp. 2681--2697, 2022.

\bibitem{vu2014simple}
V.~Vu, ``A simple {SVD} algorithm for finding hidden partitions,'' \emph{arXiv
  preprint arXiv:1404.3918}, 2014.

\bibitem{eckart1936approximation}
C.~Eckart and G.~Young, ``The approximation of one matrix by another of lower
  rank,'' \emph{Psychometrika}, vol.~1, no.~3, pp. 211--218, 1936.

\bibitem{weyl1912asymptotische}
H.~Weyl, ``Das asymptotische verteilungsgesetz der eigenwerte linearer
  partieller differentialgleichungen (mit einer anwendung auf die theorie der
  hohlraumstrahlung),'' \emph{Mathematische Annalen}, vol.~71, no.~4, pp.
  441--479, 1912.

\bibitem{csiszar2011information}
I.~Csisz{\'a}r and J.~K{\"o}rner, \emph{Information Theory: Coding Theorems For
  Discrete Memoryless Systems}.\hskip 1em plus 0.5em minus 0.4em\relax
  Cambridge University Press, 2011.

\bibitem{yuan2022testing}
M.~Yuan, R.~Liu, Y.~Feng, and Z.~Shang, ``Testing community structure for
  hypergraphs,'' \emph{The Annals of Statistics}, vol.~50, no.~1, pp. 147--169,
  2022.

\bibitem{feldman2018complexity}
V.~Feldman, W.~Perkins, and S.~Vempala, ``On the complexity of random
  satisfiability problems with planted solutions,'' \emph{SIAM Journal on
  Computing}, vol.~47, no.~4, pp. 1294--1338, 2018.

\end{thebibliography}

\end{document}